\documentclass[12pt]{article}
\newcommand{\blind}{0}
\usepackage{easyReview}
\usepackage{amsmath,amsthm}
\usepackage{amsfonts}
\usepackage{amssymb}
\usepackage{bbm}
\usepackage{mathtools}
\usepackage{comment}

\usepackage{url}

\usepackage[version=4]{mhchem}
\usepackage{stmaryrd}

\usepackage{subcaption} % For sub-figures\
\usepackage{graphicx}
\usepackage{pgfplots}
\usepackage[all]{nowidow}
\usepackage[utf8]{inputenc}
\usepackage{tikz}
\usepackage{multicol}
\usepackage{algpseudocode,algorithm,algorithmicx}
\usepackage{scalerel}
\usepackage{natbib}
\usepackage[normalem]{ulem}

\usepackage[english]{babel}

\usepackage{mathdots}
\usepackage{yhmath}
\usepackage{cancel}
\usepackage{color}
\usepackage{array}
\usepackage{multirow}
\usepackage{gensymb}
\usepackage{tabularx}
\usepackage{booktabs}
\usetikzlibrary{fadings}
\usetikzlibrary{patterns}
\usetikzlibrary{shadows.blur}
\usepackage{enumerate}
\usepackage[algo2e]{algorithm2e}

\newtheorem{theorem}{Theorem}
\newtheorem{lemma}{Lemma}

\newtheorem{assumption}{Assumption}

\newtheorem{remark}{Remark}
\newtheorem{definition}{Definition}

\newcommand{\Z}{\mathbb{Z}}

\newcommand{\R}{\mathbb{R}}

\newcommand{\X}{\mathcal{X}}

\newcommand{\A}{\mathcal{A}}
\newcommand{\M}{\mathcal{M}}
\newcommand{\N}{\mathcal{N}}

\newcommand{\cP}{\mathcal{P}}

\newcommand{\cH}{\mathcal{H}}

\newcommand{\D}{\mathcal{D}}
\newcommand{\F}{\mathcal{F}}

\newcommand{\li}{\langle}
\newcommand{\ri}{\rangle}
\newcommand{\fr}{{\lfloor nr\rfloor}}

\newcommand{\fb}{{\lfloor nb \rfloor}}

\DeclareMathOperator*{\argmax}{arg\,max}

\DeclareMathOperator*{\var}{Var}
\DeclareMathOperator*{\cov}{Cov}

\DeclareMathOperator*{\tr}{tr}
\DeclareMathOperator{\E}{\mathbb{E}}

\usepackage{bbm}
\newcommand{\id}{\mathbbm{1}}
\usepackage[nodisplayskipstretch]{setspace}
\let\svthefootnote\thefootnote
\newcommand\freefootnote[1]{%
	\let\thefootnote\relax%
	\footnotetext{#1}%
	\let\thefootnote\svthefootnote%
}

\begin{document}

	\def\spacingset#1{\renewcommand{\baselinestretch}%
		{#1}\small\normalsize} \spacingset{1}

	%%%%%%%%%%%%%%%%%%%%%%%%%%%%%%%%%%%%%%%%%%%%%%%%%%%%%%%%%%%%%%%%%%%%%%%%%%%%%%
	
	\if0\blind
	{
	\title{\bf Change-Point Detection for Object-valued Time Series}

	\author{Yi Zhang$^\text{ a}$, \hspace{.2cm}Changbo Zhu$^{\text{ b}}$\thanks{Corresponding author}\hspace{.4cm}and Xiaofeng Shao$^\text{ c}$
\\
	}
\date{%
	$^\text{ a}$\small Department of Statistics, University of Illinois at Urbana-Champaign, 605 E. Springfield Ave. Champaign, IL, USA. Email: {\tt yiz19@illinois.edu} \normalsize\\%
	$^\text{ b}$\small Department of Applied and Computational Mathematics and Statistics, University of Notre Dame, 101H Crowley Hall, Notre Dame, IN, USA. Email: {\tt czhu4@nd.edu} \normalsize\\%
	$^\text{ c}$\small Department of Statistics and Data Science, and Department of Economics, Washington University in St. Louis, One  Brookings Dr, St. Louis, MO, USA. Email: {\tt shaox@wustl.edu} \normalsize\\%\\[2ex]%
	%\today	
}
	\maketitle
} \fi
	
	\if1\blind
	{
    
		\bigskip
		\bigskip
		\bigskip
		\begin{center}
			{\LARGE\bf Change-Point Detection for Object-valued Time Series}
		\end{center}
		\medskip
	} \fi
	
	\bigskip
	\begin{abstract}
 This article is concerned with change point detection for object-valued data that reside
in a metric space, which has attracted some recent interests in statistics and econometrics
literature. The existing methods either focus on independent data or can only detect change
in the Fr\'echet mean or variance. In this paper, we propose a self-normalization (SN, hereafter) based statistic for detecting a shift in the marginal distribution of object-valued time
series. Our test is universally applicable to a wide range of object-valued data, such as distributional and network data, and can accommodate weak serial dependence. In addition the
proposed test statistic is almost tuning parameter free, has pivotal limiting null distribution
and only uses the pairwise distances. When combined with the Wild Binary Segmentation
algorithm (WBS, hereafter), our statistic can be used to estimate the number and locations
of multiple change points. Asymptotic results for our SN based statistic are derived under
both null and local alternatives in the single change point setting. For the first time, the
WBS estimation consistency is shown for a broad class of object-valued time series and in
a nonparametric setting, which requires new non-standard theoretical arguments. Extensive
numerical experiments and real data analysis are conducted to illustrate the effectiveness
and broad applicability of our proposed method.
	\end{abstract}
	
	\noindent%
	{\it Keywords:} Binary Segmentation; Functional data; Non-Euclidean Data;  Sample Splitting; Self-normalization.
	\vfill
			%\freefootnote{We would like to thank the Editor, the Associate Editor, and the anonymous reviewers for their valuable comments and constructive suggestions, which lead to significant improvements in the paper. } 
			%\freefootnote{ Shao's research is partially supported by NSF grants DMS-2526477 and  DMS-2524356; Zhu's research is partially supported by NSF grant DMS-2412832.} 
	\newpage
	
	\spacingset{1.65} % DON'T change the spacing!
	
	\section{Introduction}
	
	Statistical analysis of non-Euclidean data (object-valued data or random objects) that reside in a metric space is gradually emerging as an important  branch of functional data analysis in econometrics and statistics, motivated by increasing encounter of such data in many modern applications. Examples include 
the analysis of sequences of age-at-death distributions over calendar years \citep{mazzuco2015,shang2017},
covariance matrices in the analysis of diffusion tensors in medical imaging \citep{dryden2009non}, compositional brain volumes \citep{https://doi.org/10.1002/hbm.26271, zhou2023network} and  graph Laplacians of networks \citep{ginestet2017,daren2021}.
One of the main challenges in dealing with such data is that the usual vector/Hilbert space operations, such as
projection and inner product may not be well defined and only the distance between two non-Euclidean
data objects is available. Despite the challenge, the list of papers that propose new statistical techniques to analyze 
non-Euclidean data has been growing. See \cite{dubey2019frechet, dubey2020frechet}, \cite{petersen2019}, \cite{tucker2022}, \cite{zhou2022network}, \cite{zhang2022}, \cite{zhu2023geodesic, zhumuller2024JoE}, \cite{Jiang2024Biometrika,jiang2024two}, among others. 

In this paper, we shall focus on offline change-point detection for object-valued time series.
Such temporally dependent objects are prevalent in practice and well-known examples include yearly age-at-death distributions for countries in Europe \citep{dubey2020frechet}, annual observations of compositions of energy sources for power generation in the US \citep{zhumuller2024JoE},  and daily Pearson correlation matrices for five cryptocurrencies \citep{jiang2024two}. Also see Section~\ref{sec5_real} for additional data examples. The literature for offline change-point detection of object-valued  data is quite recent. For example,  \cite{chenzhang2015} and \cite{chuchen2019} developed 
graph-based scan statistics for both the single change-point and the changed interval alternatives to detect the distributional change for independent object data; see \cite{chen2023graph} for a recent review on graph-based change point detection methods. Recently,	\cite{dubey2020frechet} proposed a novel test for a change point in the Fr\'{e}chet mean and variance \citep{frechet1948elements} for a sequence of independent random objects and adopted binary segmentation for change point estimation. \cite{dubey2022depth} proposed the concept of distance profiles, which fully characterize the distribution of random objects under conditions that basically require the metric space to be of strong negative type (see proposition 1 therein). Building on distance profile, \cite{dubey2023change} developed a test which can detect distributional change for serially independent random objects. Note that   theoretical results developed for the above-mentioned methods are all for single change point detection. To accommodate serial  dependence, 
\cite{jiang2024two} developed an self-normalization (SN, hereafter) based test by extending the SN idea \citep{shao2010,shaozhang2010,shao2015} from vector time series to object-valued time series. However, the method in \cite{jiang2024two} is also limited to capture change point in Fr\'{e}chet mean or variance only and their test may not be powerful when the change is in high-order characteristics of distributions; see numerical evidence in Section~\ref{sec4simu}.

In contrast to the limited literature for change-point detection of object-valued data, the same problem for functional time series, which resides in a Hilbert space, has been extensively studied. For example, single and multiple change point detection probem for functional data has been studied in \cite{zhang2011,aston2012detecting,shar2016,aue2018detecting,jiao2023break,chiou2019identifying,rice2022consistency,harris2022scalable}; see Chapter 8 in \cite{horvath2024change} for a comprehensive review of the main developments in this area. It is worth noting that these methods can not be easily extended to the object-valued time series setting, due to the lack of algebraic operations for metric space valued data.

% {\color{blue}
% In contrast to the limited literature for change-point detection of object-valued data, the same problem for functional time series, which resides in a Hilbert space, has been extensively studied. For example, a functional principle component analysis (fPCA) based detection method is proposed in \cite{aston2012detecting} and fully functional approaches are proposed in \cite{shar2016,aue2018detecting}. Recently, \cite{jiao2023break} proposed a CUSUM based procedure to test for a change point in the marginal covariance function. For multiple change point detection, \cite{chiou2019identifying} used a two-stage procedure to estimate multiple changes in the mean functions, and the estimation consistency of the traditional binary segmentation procedure in detecting and localizing change points in the mean functions, based on the norm of a functional analog of a CUSUM process, is proved in \cite{rice2022consistency}. To detect changes in the mean and covariance functions simultaneously, \cite{harris2022scalable} proposed to project the original functional data into two univariate time series and use augmented fused lasso and univariate CUSUM statistics to detect the change points locally; see Chapter 8 in \cite{horvath2024change} for a comprehensive review of the main theoretical developments in this area. }

The main goal of this paper is to fill in the gap and develop new testing and estimation procedures that target the shift in marginal distributions of object-valued time series with serial dependence. On the testing front, we propose a non-parametric method that can detect distributional changes for metric space valued time series based on sample splitting and self-normalization. 
To construct our test statistic, the object-valued data are first mapped into a Hilbert space, and then projected along the direction of change. After that, an SN-based test statistic is constructed using the projected one-dimensional sequence. In the single change point setting, the statistic we developed is nearly tuning parameter free (except for a cut-off parameter which restricts the true change point from being too close to the end-points) and has pivotal limiting null distribution.
On the estimation front,  we utilize the WBS algorithm \citep{wbs} for multiple change point estimation and provide theoretical guarantee in terms of estimation consistency, which is a new contribution to the literature. 

%To our best knowledge,  this is the first time the consistency of WBS is obtained in a nonparametric setting and for a very broad class of time series. 

%as the existing consistency has only been obtained in very simple Gaussian or parametric time series models. 

As one of the state-of-the-art segmentation algorithm with promising practical performance, WBS has been combined with various single change point detection methods for Euclidean or metric space valued time series to enable multiple change point estimation (see e.g.,  \cite{chak2021}, \cite{wzvs2022}, \cite{zhangwangshao2022}, \cite{wangsam2017} and \cite{jiang2024two}) without theoretical justification. The WBS algorithm was originally developed to detect change points in the mean of univariate i.i.d data, and its performance on temporally dependent data could depend on the estimation of the long run variance (LRV), which was known to be a difficult problem due to the difficulty of choosing the bandwidth in the presence of change-points \citep{Vogelsang2007,shaozhang2010,robbins2022}. By using self-normalization, our method avoids direct estimation of LRV. At the same time, the adoption of a self-normalizer and the usage of projection in our test statistic significantly complicate our proof for estimation consistency, which requires substantially different theoretical arguments from those in \cite{wbs}. To our knowledge, except for \cite{korkas2017multiple} where the WBS estimation consistency was proved for a parametric time series model, we are the first to show the WBS estimation consistency for a broad class of time series (which includes Euclidean valued time series as a special case) and in a nonparametric setting. In our theory, the convergence rates for each change point estimator are adaptive to the heterogeneous change magnitudes (see Remark \ref{rmk_sec3} in Section \ref{sec_3}).   A more detailed discussion on WBS and WBS2 \citep{fryzlewicz2020detecting} is provided at the beginning of Section \ref{sec_3}.

The rest of this paper is organized as follows. In Section \ref{sec2single}, we first provide preliminary result on Functional CLT for Hilbert space valued time series in Section \ref{sec2_1}, and introduce our SN based method for single change point testing and estimation in Section \ref{sec2_2}. In Section \ref{sec2_3}, we we provide some (counter) examples of metric spaces of strong negative type, and show that the metric spaces where some frequently encountered object-valued data live in are of strong negative type, which may be of independent interest. The WBS-SN algorithm for multiple change point estimation, as well as the estimation consistency result, are provided in Section \ref{sec_3}.  Section \ref{sec4simu}  demonstrates the promising performance of our proposed testing and estimation methods in finite sample via simulations. In Section \ref{sec5_real}, two real data applications are provided, including stock return data in Section \ref{sec5_1} and TLC Trip Record Data in Section \ref{sec5_2}, respectively. Section \ref{sec6conclu} concludes and all the proofs are gathered in the supplement.

The following notations will be used throughout the paper. 	Let $\|\cdot\|$ be the Euclidean norm on $\R^p$ and $\|\cdot\|_\cH= \sqrt{\li \cdot,\cdot\ri_{\cH}}$ be the norm on the Hilbert space $\cH$ (all the Hilbert spaces considered in this paper are separable (real) Hilbert spaces). For $\R$- or $\cH$-valued random variables $\{X_t\}_{t\in \Z}$, let $\F_a^b$ denote the $\sigma$-field generated by $\{X_a,\dots,X_b\}$ for any $-\infty \leq a\leq b\leq\infty$. We use ``$\rightsquigarrow$" to denote the weak convergence in $D_\cH[0,1]$: the space of $\cH$-valued cadlag functions defined on $[0,1]$ endowed with the Skorokhod metric.

	\section{Single Change Point Estimation}\label{sec2single}
	\subsection{Preliminary Result}\label{sec2_1}

	Given $\R$- or $\cH$-valued random variables $\{X_t\}_{t\in \Z}$ and positive integer $n$, define its $\rho$-mixing coefficient \citep{bradley2005} as $\rho(n)=\sup_{j\in\Z}\rho(\F_{-\infty}^j,\F_{j+n}^\infty)$, where $$\rho(\mathcal{A},\mathcal{B})=\sup \left\{ \frac{|\cov(X,Y)|}{ \sqrt{\var(X) \var(Y)} }: X\in L_2(\mathcal{A}),Y\in L_2(\mathcal{B})  \right\}$$ and $L_2(\mathcal{A})$ is the set of $\R$-valued $\mathcal{A}$-measurable random variables with finite second moment for any $\sigma$-algebra $\mathcal{A}$. $\{X_t\}_{t\in \Z}$ is said to be $\rho$-mixing if $\lim_{n\to\infty}\rho(n)=0$. As in \cite{BUCC2017}, define the coefficient $\alpha_{1,1}(n)=\sup_{j\in\Z,k\geq n}\alpha(\F_j^j,\F_{j+k}^{j+k})$, where $\alpha(\mathcal{A},\mathcal{B})=\sup\big\{ |P(A\cap B)-P(A)P(B) |: A\in \mathcal{A},B\in \mathcal{B}    \big\}$ for any $\sigma$-fields $\mathcal{A}$ and $\mathcal{B}$.
	An operator $Q:\cH\to\cH$ is said to be a positive, self-adjoint, trace class operator if (i) $\li Qh,h\ri_\cH\geq0$ for any $h\in\cH$;
	(ii) $\li Qh,x\ri_\cH=\li h, Qx\ri_\cH$ for any $h,x\in\cH$; (iii) $\sum_{k=1}^{\infty}\li Qe_k,e_k \ri_\cH <\infty$ for any orthonormal base $\{e_k\}$ of $\cH$.
	
	Based on these concepts, we can define $\cH$-valued Gaussian random element as well as $\cH$-valued Brownian motion.
	\begin{definition}[\cite{andresen2013}]
		An $\cH$-valued random element $X$ is Gaussian with covariance operator $Q$ and mean $\mu\in H$~(denoted as $X\sim \N(\mu,Q)$) if for any $h\in \cH$, $\li X,h\ri_\cH \sim \N(\li \mu,h\ri_\cH, \li Qh,h\ri_\cH)$. 
	\end{definition}
	
	\begin{definition}[\cite{andresen2013}] \label{def2}
		Let $Q$ be a positive, self-adjoint, trace class operator on $\cH$. An $\cH$-valued Brownian motion with covariance operator $Q$ is defined as $B_Q(t),t\in[0,1]$ such that (i) $B_Q(0)=0$; (ii)  $B_Q(t)$ has continuous sample path; (iii) $B_Q(t)$ has independent increment; (iv) $B_Q(t)-B_Q(s)\sim \N(0,(t-s)Q)$ for any $0 \leq s<t\leq 1$.
		
	\end{definition}

	The following FCLT for stationary $\cH$-valued random variables is from \cite{BUCC2017}; see \cite{shar2016} for similar results under different mixing conditions.
	
	\begin{theorem}\label{th1}
		Let $\{X_t\}_{t\in \Z}$ be a sequence of  strictly stationary $\cH$-valued random variables with mean $\mu$. Assume $\{X_t\}_{t\in \Z}$ is $\rho$-mixing and the following condition holds for some $\delta>0$: (1) $\E\|X_1\|_\cH^{2+\delta}<\infty$; (2) $\sum_{n=1}^{\infty}[\alpha_{1,1}(n)]^{\delta/(2+\delta)}<\infty$. Then $\frac{1}{\sqrt{n}}\sum_{t=1}^{\fr}\big(X_t-\mu\big)\rightsquigarrow B_Q(t),$
		where $\{B_Q(t)\}_{t\in[0,1]}$ is an $\cH$-valued Brownian motion with 
		\begin{align}\label{eq11}
			\li Qx, y\ri_\cH = \sum_{t\in\Z}\E\big(\li X_0-\mu,x \ri_\cH\li X_t-\mu,y \ri_\cH\big) \mbox{ for any } x,y\in\cH.
		\end{align}
		Furthermore, the series in Equation (\ref{eq11}) converges absolutely.
	\end{theorem}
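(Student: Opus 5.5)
The proof will follow the standard route for Hilbert space FCLTs: establish finite-dimensional convergence of the projected partial-sum process, and add tightness in $D_\cH[0,1]$. Set $Y_t=X_t-\mu$ and $S_n(r)=n^{-1/2}\sum_{t=1}^{\lfloor nr\rfloor}Y_t$. There are three steps: (a) the series in (\ref{eq11}) converges absolutely and defines a positive, self-adjoint, trace class operator $Q$; (b) for each finite collection $h_1,\dots,h_k\in\cH$, the $\R^k$-valued process $(\li S_n(r),h_j\ri_\cH)_{j\le k}$ converges weakly in $D_{\R^k}[0,1]$ to the corresponding projection of $B_Q$; (c) $\{S_n\}$ is asymptotically flat, so that tightness in $D_\cH[0,1]$ follows.

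\textbf{Step (a).} Apply Davydov's covariance inequality to the scalar variables $\li Y_0,x\ri_\cH$ and $\li Y_t,y\ri_\cH$. Each is measurable with respect to a single-time $\sigma$-field, so the coefficient $\alpha_{1,1}(t)$ suffices. This gives
\[
|\E(\li Y_0,x\ri_\cH\li Y_t,y\ri_\cH)|\le C\,\alpha_{1,1}(t)^{\delta/(2+\delta)}\|x\|_\cH\|y\|_\cH\,(\E\|Y_1\|_\cH^{2+\delta})^{2/(2+\delta)}.
\]
Condition (2) then yields absolute convergence, uniformly for $x,y$ in the unit ball. Hence $Q$ is a bounded bilinear form; symmetry follows from stationarity, and positivity from $\li Qh,h\ri_\cH=\lim_n\var(\li S_n(1),h\ri_\cH)\ge0$.

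For the trace class property, choose an orthonormal basis $\{e_k\}$ and apply the same inequality in the Hilbert-valued form $|\E\li Y_0,Y_t\ri_\cH|\le C\alpha_{1,1}(t)^{\delta/(2+\delta)}(\E\|Y_1\|_\cH^{2+\delta})^{2/(2+\delta)}$. This form follows either from the Dehling–Philipp version of Davydov's inequality for Hilbert-valued variables, or from the scalar form applied coordinatewise and then summed. It gives $\sum_k\li Qe_k,e_k\ri_\cH=\sum_t\E\li Y_0,Y_t\ri_\cH<\infty$.

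\textbf{Step (b).} By the Cramér–Wold device it suffices to treat a single $h=\sum_j c_jh_j$. The sequence $\li Y_t,h\ri_\cH$ is a strictly stationary real $\rho$-mixing sequence with finite $(2+\delta)$ moment, and its $\alpha_{1,1}$ coefficients are bounded by those of $X$. The univariate FCLT therefore applies: for example Peligrad's or Shao's invariance principle for $\rho$-mixing sequences, using $\sum\rho(2^k)<\infty$ or the variant relying only on $\rho(n)\to0$ together with the summability of $\alpha_{1,1}$ via the variance growth $\var(\sum_{t\le n}\li Y_t,h\ri_\cH)\sim n\li Qh,h\ri_\cH$ from Step (a). Its limit is $\li Qh,h\ri_\cH^{1/2}W(r)$, which is exactly $\li B_Q(r),h\ri_\cH$. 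Joint convergence across the $h_j$ then follows because Gaussian limits are determined by their covariances.

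\textbf{Step (c): tightness, the main obstacle.} I would use the criterion that finite-dimensional convergence of projections, combined with
\[
\lim_{N\to\infty}\limsup_n P\Big(\sup_{r}\Big\|\sum_{k>N}\li S_n(r),e_k\ri_\cH e_k\Big\|_\cH>\epsilon\Big)=0,
\]
implies weak convergence in $D_\cH[0,1]$. To control the supremum over $r$, I would use a maximal moment inequality for $\rho$-mixing sequences, such as Peligrad's or Shao's bound $\E\max_{m\le n}\|\sum_{t\le m}Z_t\|^2\le Cn\,\E\|Z_1\|^2$ with $C$ depending only on $\rho(\cdot)$, applied to the Hilbert-valued tail $Z_t=P_N^\perp Y_t$ (where $P_N^\perp$ is the projection onto the span of $\{e_k\}_{k>N}$). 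Its extension to Hilbert spaces comes from the $\rho$-mixing covariance bound holding coordinatewise, since $\rho$ controls all $L_2$ functionals. This gives a bound of order $\E\|P_N^\perp Y_1\|_\cH^2$, which tends to $0$ as $N\to\infty$ by dominated convergence, and Chebyshev's inequality then completes the argument.

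The delicate point is ensuring that the maximal inequality really needs only $\rho(n)\to0$ plus the moment condition. If it does not, I would fall back on blocking combined with the $\alpha_{1,1}$ summability.
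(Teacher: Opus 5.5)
The paper does not prove this statement. It quotes it from \cite{BUCC2017}, so your outline must stand on its own. Its architecture is the natural one: covariance summability and trace class, convergence of finite-dimensional projections, and uniform smallness of the tail $P_N^\perp S_n$. But Step (c), which carries the weight, rests on an inequality that is not available under the stated hypotheses.

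The $\rho$ in the statement is the past--future coefficient, and only $\rho(n)\to0$ is assumed. For such sequences there is no bound $\E\max_{m\le n}\|\sum_{t\le m}Z_t\|_\cH^2\le Cn\,\E\|Z_1\|_\cH^2$ with $C$ depending only on $\rho(\cdot)$. Shao's inequality carries the factor $\exp\{K\sum_{i\le\log_2 n}\rho(2^i)\}$. This factor is bounded only if $\sum_i\rho(2^i)<\infty$ and is otherwise merely $n^{o(1)}$. The loss is real: for $\rho$-mixing sequences $\var(S_n)=nh(n)$ with $h$ slowly varying but possibly unbounded, e.g.\ Gaussian sequences whose spectral density has a logarithmic singularity at $0$. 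Constants depending only on the mixing rate exist for the interlaced coefficient $\rho^\ast$ (Bradley, Peligrad, Utev--Peligrad), not for $\rho$.

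So the tail bound must use the $\alpha_{1,1}$-summability, and your ``fallback'' is exactly the missing part. Since $\alpha_{1,1}$ only couples single observations, it cannot decouple blocks. Through the Dehling--Philipp inequality it gives $\E\|\sum_{t\in B}P_N^\perp Y_t\|_\cH^2\le C|B|\eta_N$ uniformly over blocks $B$, with $\eta_N=(\E\|P_N^\perp Y_1\|_\cH^{2+\delta})^{2/(2+\delta)}\to0$. Passing from block-variance bounds to the maximum costs a $(\log n)^2$ factor in general (Menshov--Rademacher, M\'oricz), which is fatal here. You still need a $\rho$-mixing maximal or higher-moment inequality whose scale comes from this linear variance bound, so that it is uniform in $N$ and vanishes with $\eta_N$. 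One option is a Rosenthal-type bound of order $2+\delta'$ for block sums, fed into M\'oricz's inequality. As written, tightness is not established.

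There are also two smaller points.

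In Step (b), convergence in $D[0,1]$ of every scalar process $\li S_n(\cdot),h\ri_\cH$ does not identify the joint limit. The Cram\'er--Wold device for finite-dimensional laws involves $\sum_i\li S_n(r_i),g_i\ri_\cH=n^{-1/2}\sum_t\li Y_t,w_t\ri_\cH$, where the directions $w_t$ change across time blocks; this is not a stationary scalar sum. Saying that Gaussian limits are determined by their covariances presupposes that the joint limit is Gaussian. You need asymptotic independence of increments over disjoint time blocks. Delete gaps of length $\ell_n\to\infty$, $\ell_n=o(n)$, which are negligible in $L^2$, and apply the $\rho$-mixing bound to characteristic functions. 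Also, the invariance principles you cite require $\var(S_n)\to\infty$, so degenerate directions with $\li Qh,h\ri_\cH=0$ need a separate argument.

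In Step (a), the coordinatewise alternative fails. By Minkowski, $\sum_k\|\li Y_1,e_k\ri_\cH\|_{2+\delta}^2\ge(\E\|Y_1\|_\cH^{2+\delta})^{2/(2+\delta)}$, and the left side can be infinite. The Dehling--Philipp route is fine, but apply it to $P_KY_0$ and $P_KY_t$, where $P_K$ projects onto $\mathrm{span}\{e_1,\dots,e_K\}$. The resulting bounds are uniform in $K$, which justifies interchanging $\sum_k$ and $\sum_t$.
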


\subsection{ Single Change Point Testing and Estimation}\label{sec2_2}
	
From now on, we assume the time series $\{X_t\}_{t\in\Z}$ takes value in a separable metric space $(\X,d)$ of strong negative type (see Definition 2.2 in \cite{chak2021} or \cite{lyons2013}). Let $P_i$ be the marginal distribution of $X_i$. We aim to test 
$H_0:P_1=\cdots=P_n$
versus
$H_1:P_1=\cdots=P_{k^\ast}\not=P_{k^\ast+1}=\cdots=P_n$ for some $k^\ast=\lfloor n r_0\rfloor\in (nb,(1-b)n)$ with $b\in (0,1/2)$.
Here $b$ is a cutoff parameter and it is typically set to be $0.1$ or $0.15$; see \cite{andrews1993}. This is a common assumption as the detection of a change point is very challenging when it is close to the boundary; see \cite{horvath2020} for some recent work.  

By Proposition 3.1 in \cite{chak2021}, for a metric space $(\X,d)$ of strong negative type, there exists a Hilbert space $\cH$ and an embedding map $\phi:\X\rightarrow \cH$ such that $d(x,x')=\|\phi(x)-\phi(x')\|_{\cH}^2$. For $t\in \Z$, let $Y_t=\phi(X_t)$. If $\int\|Y_t\|_\cH dP_t<\infty$, define $\mu_t = \int Y_tdP_t$, which is the unique element in $\cH$ corresponding to the bounded linear functional $\cH\to\R: h\to \int\li h,Y_t \ri_\cH dP_t$. Then by Proposition 3.1 in \cite{lyons2013}, the null and alternative  hypothesis are equivalent to $H_0:\mu_1=\cdots=\mu_n$
versus
$H_1:\mu_1=\cdots=\mu_{k^\ast}\not=\mu_{k^\ast+1}=\cdots=\mu_n$.

\begin{remark}
    Some possible embedding maps for specific metric spaces are given in Section 3 of \cite{lyons2013}. For example, when $(\X,d)$ is the one dimensional Euclidean space, we can define $\phi(x) = \id_{[0,\infty)}(y)-\id_{[x,\infty)}(y)\in L^2(\R)$ where $ L^2(\R)$ is the Hilbert space of square integrable functions on $\R$. A similar embedding into $L^2(\R^p)$ can also be defined when $(\X,d)$ is the $p$-dimensional Euclidean space for $p\geq 2$. Additionally, define $k:\R^p\times\R^p\to \R$ as 
	\begin{align}
	k(x,y) = \frac{1}{2}\big[d(x,x_0)+d(y,x_0)-d(x,y)\big]
	\end{align}
	for some fixed $x_0\in\R^p$. Then we can also define the embedding of the $p$-dimensional Euclidean space as $\phi(x) = k(x,\cdot)$ where $k(x,\cdot)$ is an element in the reproducing kernel Hilbert space (RKHS) induced by the kernel function $k(\cdot,\cdot)$; see Section 4 of \cite{sej2013} for more details.
\end{remark}

To detect a change in $\mu_t$, we first split the data into three parts: $\X_1=\{X_1,\dots,X_{\fb}\}$, $\X_2 =\{X_{\fb+1},\dots,X_{n-\fb}\} $ and $\X_3 = \{X_{n-\fb+1},\dots,X_n\}$. Based on $\X_1$ and $\X_3$, we define 
\begin{align}\label{eq_proj}
	\bar{Y}_1-\bar{Y}_3=\frac{1}{\sqrt{n}} \sum_{t=1}^{\fb} Y_t-\frac{1}{\sqrt{n}} \sum_{t=n-\fb+1}^{n} Y_t,
\end{align}
which estimates the mean change $\mu_1{-}\mu_n$ (subject to a rescaling factor) under $H_1$. Then we project the transformed observations in $\X_2$ along the direction $\bar{Y}_1-\bar{Y}_3$, that is, we define univariate  random variables 
$Z_t=\li\bar{Y}_1-\bar{Y}_3,Y_t\ri_\cH$ for $t=\fb{+}1,\cdots,n{-}\fb$. Note that the temporal dependence in the projected sequence $\{Z_t\}$ is very complex due to projection. Nevertheless, we shall apply the SN test statistic in \cite{shaozhang2010}, which is developed for weakly dependent vector time series, to $\{Z_t\}$. To be specific, let $S_{a,b}{=}\sum_{t=a}^{b}Z_t$, our test statistic is defined as
\begin{align}
	G_n = \max_{k=\fb+1,\fb+2,\dots,n-\fb-1}\frac{T_n(k)}{\sqrt{V_n(k)}},\nonumber
\end{align}
where $T_n, V_n$ are defined as
\begin{align}
	T_n(k)=&\frac{1}{\sqrt{n-2\fb}}\sum_{t=\fb+1}^{k}\left(Z_{t}-\frac {S_{\fb+1,n-\fb}}{n-2\fb}\right), \label{sn_num} \\
	V_n(k) =& \frac{1}{(n-2\fb)^{2}}\Bigg\{\sum_{t=\fb+1}^{k}\left\{S_{\fb+1,t}-\frac{t-\fb}{k-\fb}S_{\fb+1,k}\right\}^2  \nonumber\\
	& \qquad\qquad\qquad+\sum_{t=k+1}^{n-\fb}\left\{S_{t,n-\fb}-\frac{n-\fb-t+1}{n-\fb-k}S_{k+1,n-\fb} \right\}^2 \Bigg\}. \label{sn_den}
\end{align}

\begin{remark}
	It is worth noting that the sequence $\{Y_t\}$ depends on the embedding map $\phi(\cdot)$ which may be unknown or difficult to derive. However, due to the use of sample splitting and self-normalization, our statistic $G_n$ does not depend on  $\phi(\cdot)$ in any explicit way and only depends on pairwise distances. To see this, note that 
	\begin{align}
		Z_t=&\frac{1}{\sqrt{n}}\sum_{j=1}^{\fb}\li Y_j,Y_t\ri_\cH-\frac{1}{\sqrt{n}}\sum_{j=n-\fb+1}^{n}\li Y_j,Y_t\ri_\cH\nonumber\\
		=&\frac{1}{2\sqrt{n}}\sum_{j=1}^{\fb}\{\|Y_j\|_\cH^2+\|Y_t\|_\cH^2-d(X_j,X_t)\} \nonumber  -\frac{1}{2\sqrt{n}}\sum_{j=n-\fb+1}^{n}\{\|Y_j\|_\cH^2+\|Y_t\|_\cH^2-d(X_j,X_t)\}\nonumber\\
		=&-\frac{1}{2\sqrt{n}}\sum_{j=1}^{\fb}d(X_j,X_t)+\frac{1}{2\sqrt{n}}\sum_{j=n-\fb+1}^{n}d(X_j,X_t)+R_n,\nonumber
	\end{align}
	where $R_n=\frac{1}{2\sqrt{n}}\sum_{j=1}^{\fb} \|Y_j\|_\cH^2 + \frac{1}{2\sqrt{n}}\sum_{j=n-\fb+1}^{n} \|Y_j\|_\cH^2$ is an unknown quantity that depends on $\phi(\cdot)$. The key is that $R_n$ does not depend on $t$, so it gets canceled out when forming the SN test statistic. To be specific, define $\hat Z_t = \sum_{j=n-\fb+1}^{n}d(X_j,X_t)-\sum_{j=1}^{\fb}d(X_j,X_t)$ and $\hat S_{a,b}=\sum_{t=a}^{b}\hat Z_t$, then the statistic $G_n$ can be calculated by replacing $Z_t,S_{s,b}$ in the definition of $T_n(k)$ and $V_n(k)$ with $\hat Z_t$ and $\hat S_{a,b}$. In other words, our test statistic $G_n$ only depends on the pairwise distances $\{d(X_t,X_{t'})\}_{t,t'=1}^n$. 
\end{remark}

\begin{remark}
	Note that our statistic $G_n$ only requires one trimming parameter $b$, whereas the SN-based statistics proposed in Section 4 of \cite{jiang2024two} require two tuning parameters $\eta_1$ and $\eta_2$, and target the changes in Fr\'{e}chet mean and variance only. In their paper,   $\eta_1$ plays the same role as our trimming parameter $b$ and $\eta_2$ ensures their subsample estimators of Fr\'{e}chet mean and variance are calculated with large enough subsample size. Since the statistics in \cite{jiang2024two} require repeated estimation  of Fr\'{e}chet mean and variance on different subsamples, they are much more expensive than $G_n$ in computation. 
\end{remark}

As shown below, the limiting null distribution of $G_n$ is $G$, which is pivotal and its upper critical values have been tabulated in \cite{gws2023}. Given the level $\gamma\in (0,1)$ (say, 0.05), our corresponding test is $\boldsymbol{1}(G_n>G_\gamma)$, where $G_\gamma$ is the $100(1-\gamma)$th upper percentile of $G$. The following assumptions are needed to derive the asymptotic distributions of $G_n$ under the null and alternative.

\begin{assumption} \label{assump1}
	Let $\{Y_t{-}\mu_t\}_{t\in \Z}$ be a strictly stationary $\cH$-valued time series. Assume $\{Y_t\}_{t\in \Z}$ is $\rho$-mixing and the following condition hold for some $\delta>0$:
	\begin{enumerate}
		\item\label{assump1_p1} $\E d(X_1,x_0)^{1+\delta/2}<\infty$ for some $x_0\in\X$;
		\item \label{assump1_p2} $\sum_{n=1}^{\infty}[\alpha_{1,1}(n)]^{\delta/(2+\delta)}<\infty$.
	\end{enumerate}
\end{assumption}

\begin{remark}
	By the definition of $\phi(\cdot)$, we can see it is a continuous bijection between $(\X,d)$ and $(\phi(\X),\|\cdot\|_\cH)$ with continuous inverse. For any positive integers $a\leq b$, the $\sigma$-algebra generated by $\{Y_a,\dots,Y_b\}$ is the same as the $\sigma$-algebra generated by $\{X_a,\dots,X_b\}$. So part \ref{assump1_p2} in Assumption \ref{assump1} holds for $\{Y_t\}$ if and only if it holds for $\{X_t\}$.
\end{remark}
By replacing $\phi(\cdot)$ with $\phi(\cdot)-\phi(x_0)$, we can assume, without loss of generality, that $\phi(x_0)=0\in\cH$. Then $\E||Y_1||_\cH^{2+\delta}=\E\ d(X_1,x_0)^{1+\delta/2}$. Note that
% \begin{align}
% 	& \E\li Y_t-\mu_t,Y_t-\mu_t\ri_\cH \\
% 	= & \E\|Y_t\|^2_\cH+\|\mu_t\|^2_\cH-2\E\li Y_t,\mu_t \ri_\cH \nonumber \\
% 	= & \E d(X_t,x_0)+\int\int\li \phi(x),\phi(y)\ri_\cH dP_t(x)dP_t(y)-2\E\int \li Y_t,\phi(x)\ri_\cH dP_t(x) \nonumber \\
% 	\leq & 4\E d(X_t,x_0), \nonumber
% \end{align}

\begin{align}
\E\li Y_t-\mu_t,Y_t-\mu_t\ri_\cH^{1+\delta/2}  = 	& \E\li Y_1-\mu_1,Y_1-\mu_1\ri_\cH^{1+\delta/2} \\
	\leq & 2^{1+\delta} \big(\E\|Y_1\|^{2+\delta}_\cH+\|\mu_1\|^{2+\delta}_\cH\big) \nonumber \\
	= & 2^{1+\delta} \E d(X_1,x_0)^{1+\delta/2}+2^{1+\delta}\Big[\int\int\li \phi(x),\phi(y)\ri_\cH dP_1(x)dP_1(y)\Big]^{1+\delta/2} \nonumber \\
	\leq & 2^{2+\delta}\E d(X_1,x_0)^{1+\delta/2}, \nonumber
\end{align}
so part \ref{assump1_p1} in Assumption \ref{assump1} implies that $\E\|Y_t-\mu_t\|^{2+\delta}_\cH<\infty$, which further implies the existence of $\mu_t$. By Theorem \ref{th1} we have $\sum_{t=1}^\fr (Y_t-\mu_t)/ \sqrt{n} \rightsquigarrow B_Q(t)$, where $\{B_Q(t)\}_{t\in[0,1]}$ is an $\cH$-valued Brownian motion with $	\li Qx, y\ri_\cH = \sum_{t\in\Z}\E\big(\li Y_0-\mu_0,x \ri_\cH\li Y_t-\mu_t,y \ri_\cH\big)$ for any $x,y\in\cH$.
The following theorem shows the asymptotic properties of $G_n$, which is proved in the appendix of the supplement.
\begin{theorem}\label{theorem_cp}
	Suppose Assumption \ref{assump1} holds, then: (i) under $H_0$, we have $G_{n} \stackrel{\D}{\to} G$, where
	\begin{equation}
		G\stackrel{d}{=}\sup_{r \in [0,1]}  \frac{B(r)- rB(1) }  {\big\{\int_{0}^{r} [B(s)-\frac{s}{r}B(r)]^2 ds + \int_{r}^{1}[ B(1)-B(s)-\frac{1-s}{1-r}(B(1)-B(r)) ]^2 ds \big\}^{1/2}},\nonumber
	\end{equation}
	$\{B(t)\}_{t\in[0,1]}$ is an $\R$-valued standard Brownian motion;
	(ii) under $H_A$, denote ${\Delta}_n=\mu_n{-}\mu_1$, we have 
	\begin{enumerate}%[label=(\roman*)]
		\item If $\sqrt{n}\|{\Delta}_n\|_\cH\to\infty$, then $G_{n} \stackrel{p}{\to} \infty$.
		\item If $\sqrt{n}{\Delta}_n\to c\in \cH$ and $\|{c}\|_\cH \neq {0}$, then we have 
		$$\bar{Y}_1-\bar{Y}_3 \stackrel{\D}{\to}B_Q(b)-\big[B_Q(1)-B_Q(1-b)\big]-bc = Y^\ast \mbox{, } G_{n} \stackrel{\D}{\to}  G^\ast, $$
		where the conditional distribution of $G^\ast$ given $Y^\ast=y$ is equal to the distribution of
		\begin{equation}
			%G^\ast\big|_{\phi^\ast=\phi} \stackrel{d}{=} 
			\sup_{r \in [0,1]}  \frac{B'(r)- rB'(1) }  {\big\{\int_{0}^{r} [B'(s)-\frac{s}{r}B'(r)]^2 ds + \int_{r}^{1}[ B'(1)-B'(s)-\frac{1-s}{1-r}(B'(1)-B'(r)) ]^2 ds \big\}^{1/2}},\nonumber
		\end{equation}
		$B'(r){=}B(r){+}  H_y((1{-}2b)r{+}b)/ \sqrt{1-2b}$, $H_y=\li y,(r{-}r_0){c}\mathbf{1}_{r\geq r_0}\ri_\cH / \sqrt{\li Q y,y \ri_\cH}$ 
		%and ${H}(r){=}(r{-}r_0){c}\mathbf{1}_{r\geq r_0}$ 
		with $\mathbf{1}_{r\geq r_0}=1$ if $r\geq r_0$ and $0$ otherwise. 
		\item If $\sqrt{n}\|{\Delta}_n\|_\cH\to 0$, then $G_n \stackrel{\D}{\to} G,$ so our test have trivial power asymptotically.
	\end{enumerate}
\end{theorem}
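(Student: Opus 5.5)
The proof rests on one decomposition. The projection direction $\bar Y_1-\bar Y_3$ is built from $\X_1\cup\X_3$, while the SN statistic uses only $\X_2$. Moreover, $G_n$ is invariant to adding a constant to every $Z_t$ and to multiplying all $Z_t$ by a positive constant. So, writing $W:=\bar Y_1-\bar Y_3$, we may replace $Z_t$ by $\li W, Y_t-\mu_t\ri_\cH + \li W,\mu_t-\mu_1\ri_\cH$. The first piece is the noise and the second is a deterministic-given-$W$ step of size $\li W,\Delta_n\ri_\cH\mathbf 1_{t>k^\ast}$.

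\textbf{Step 1 (joint FCLT).} By Theorem \ref{th1} applied to $\{Y_t-\mu_t\}$ (Assumption \ref{assump1} gives the moment and mixing conditions), $\mathbb{B}_n(r):=n^{-1/2}\sum_{t=1}^{\fr}(Y_t-\mu_t)\rightsquigarrow B_Q(r)$ in $D_\cH[0,1]$. Then
\[
W=\mathbb B_n(b)-[\mathbb B_n(1)-\mathbb B_n(1-b)]+n^{-1/2}\sum_{t=1}^{\fb}\mu_t-n^{-1/2}\sum_{t>n-\fb}\mu_t .
\]
The mean term equals $-\fb\,\Delta_n/\sqrt n$, since $k^\ast\in(nb,(1-b)n)$. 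For the normalized partial sums on $\X_2$, I use the continuous functional
\[
r\mapsto \li W,\ \mathbb B_n(b+(1-2b)r)-\mathbb B_n(b)\ri_\cH .
\]
By the continuous mapping theorem, jointly with $W$,
\[
\frac{1}{\sqrt{n-2\fb}}\sum_{t=\fb+1}^{\fb+\lfloor (n-2\fb) r\rfloor}\li W,Y_t-\mu_t\ri_\cH \rightsquigarrow \frac{\li W_\infty,\, B_Q(b+(1-2b)r)-B_Q(b)\ri_\cH}{\sqrt{1-2b}} ,
\]
where $W_\infty$ is the limit of $W$.

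\textbf{Step 2 (conditional scalar Brownian motion).} Let $V(r)=B_Q(b+(1-2b)r)-B_Q(b)$. Independence of increments implies $V$ is independent of $\big(B_Q(b),B_Q(1)-B_Q(1-b)\big)$, hence independent of $W_\infty$. Conditional on $W_\infty=y$, the process $\li y,V(r)\ri_\cH$ is a scalar Brownian motion with variance $(1-2b)\li Qy,y\ri_\cH r$. Since $\li Qy,y\ri_\cH>0$ a.s., the scale-invariance of the SN ratio then yields a standard Brownian motion $B$ whose law does not depend on $y$.

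The numerator $T_n$ and self-normalizer $V_n$ are continuous functionals of the partial-sum process of $Z_t$. This functional is continuous at paths where the denominator is positive, which holds a.s. for Brownian paths. The continuous mapping theorem then gives (i): under $H_0$ we have $\Delta_n=0$, so $G_n\to G$ with the stated $G$. The same argument gives (ii)(c): if $\sqrt n\Delta_n\to0$, then $W\to W_\infty$ unaffected, and the drift term $\sqrt n\li W,\Delta_n\ri_\cH\cdot O(1)\to0$ uniformly.

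\textbf{Step 3 (local alternatives, (ii)(b)).} Now $W\to Y^\ast$ jointly with the partial sums. The drift contribution to the normalized partial sum at $r$ is
\[
\frac{1}{\sqrt{n-2\fb}}\li W,\Delta_n\ri_\cH\big(\lfloor\cdot\rfloor-k^\ast\big)_+ \to \frac{\li y,\, c\,(b+(1-2b)r-r_0)_+\ri_\cH}{\sqrt{1-2b}} .
\]
Dividing by the scale $\sqrt{\li Qy,y\ri_\cH}$ gives precisely $B'(r)=B(r)+H_y((1-2b)r+b)/\sqrt{1-2b}$. The continuous mapping theorem then applies conditionally, or jointly, to give $G^\ast$.

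\textbf{Step 4 (divergence, (ii)(a)), the main obstacle.} Here I normalize by $\|\sqrt n\Delta_n\|_\cH$. Write $u_n=\Delta_n/\|\Delta_n\|_\cH$. Then
\[
\li W,\Delta_n\ri_\cH=\|\Delta_n\|_\cH\Big(O_p(1)-b\sqrt n\|\Delta_n\|_\cH\Big),
\]
so $|\li W,\Delta_n\ri_\cH|\asymp b\sqrt n\|\Delta_n\|_\cH^2$ with probability tending to one. The noise in $Z_t$ has partial sums of order $\|W\|_\cH\sqrt n=O_p(\sqrt n+n\|\Delta_n\|_\cH)$, but only its component along $W$ enters. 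I will decompose $W=-b\sqrt n\Delta_n+O_p(1)$. The noise projected on $\Delta_n$ is then $O_p(\|\Delta_n\|_\cH\sqrt n)\cdot\sqrt n\,\|\Delta_n\|_\cH$ in partial sums, uniformly, and this needs tightness of $\li u_n,\mathbb B_n(\cdot)\ri_\cH$. That tightness holds because $\sup_{\|u\|=1}$ is controlled by $\sup_r\|\mathbb B_n(r)\|_\cH=O_p(1)$.

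So, after dividing by $\sqrt n\,\|\Delta_n\|_\cH^2$ (or by $\|W\|_\cH$ in the case where $\sqrt n\|\Delta_n\|_\cH$ is bounded along a subsequence), the numerator at $k=k^\ast$ is of order $\sqrt n\|\Delta_n\|_\cH\to\infty$. The self-normalizer $V_n(k^\ast)$ is computed on two change-free segments, so the step cancels and it stays $O_p(1)$ after the same scaling. Its positivity in the limit holds a.s., giving $G_n\ge T_n(k^\ast)/\sqrt{V_n(k^\ast)}\to\infty$ in probability.

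The delicate points are the sign of $T_n(k^\ast)$ and the random direction $W$. The sign is handled by $\li W,\Delta_n\ri_\cH<0$ w.p.$\to1$: this makes $T_n(k^\ast)$ positive, since the mean of $Z_t$ drops after $k^\ast$ so the centered partial sums up to $k^\ast$ are positive. The random direction is handled by the uniform bound $\sup_r\|\mathbb B_n(r)\|_\cH=O_p(1)$, which removes the dependence of the noise on $W$.
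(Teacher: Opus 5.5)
Your proposal is correct and follows essentially the same route as the paper. The ingredients match: a Hilbert-space FCLT plus the continuous mapping theorem, then conditioning on the limit of $\bar Y_1-\bar Y_3$ and using independence of the middle-block increments to get the pivotal (or drifted) scalar limit in (i), (ii)(b) and (ii)(c). For (ii)(a) you show $T_n(k^\ast)\asymp n\|\Delta_n\|_\cH^2$ while $V_n(k^\ast)=O_p(n\|\Delta_n\|_\cH^2)$, via Cauchy--Schwarz and $\sup_r\|\mathbb{B}_n(r)\|_\cH=O_p(1)$, exactly as the paper does. Your shift/scale-invariance reduction is a tidier version of the paper's explicit cancellation of the $\mu_1$ terms.

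There is one bookkeeping slip in Step 4. The normalization that makes the numerator of order $\sqrt n\|\Delta_n\|_\cH$ and $V_n(k^\ast)=O_p(1)$ is division by $\|W\|_\cH\asymp\sqrt n\|\Delta_n\|_\cH$, not by $\sqrt n\|\Delta_n\|_\cH^2$. Also, the bounded-subsequence case cannot arise when $\sqrt n\|\Delta_n\|_\cH\to\infty$. Since $G_n$ is scale invariant, neither point affects the conclusion.
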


When a single change point is detected, it is of interest to estimate its location. Building upon the test statistic $G_n$, it is natural to estimate  $k^\ast$ by  
$$\hat k =\argmax_{k=\fb+1,\fb+2,\dots,n-\fb-1}\frac{T_n(k)}{\sqrt{V_n(k)}}.$$
The following theorem shows the consistency of $\hat k$, which is proved in the appendix of the supplement. 

\begin{theorem}\label{cp_est}
	Suppose Assumption \ref{assump1} holds and $\sqrt{n}\|{\Delta}_n\|_\cH \to \infty$. For a given integer sequence $\epsilon_n$ which satisfies that $\frac{\epsilon_n}{n}\to 0$ and $n(\epsilon_n\|{\Delta}_n\|_\cH)^{-2}\to 0$, we have 
	\begin{equation}
		P(|\hat k -k^\ast|<\epsilon_n)\to 1\label{eqth1}. 
	\end{equation}
\end{theorem}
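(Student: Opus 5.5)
We first reduce to a deterministic signal plus noise and then compare the statistic at $k^\ast$ with its value at any $k$ far from $k^\ast$. Write $Y_t=\mu_t+\varepsilon_t$ with $\varepsilon_t=Y_t-\mu_t$ stationary. Let $D_n=\bar Y_1-\bar Y_3$. Because $k^\ast\in(nb,(1-b)n)$, we have $D_n=-(\fb/\sqrt n)\Delta_n+\sqrt n\,O_p(1)\cdot n^{-1}\cdot$(noise). More precisely, Theorem \ref{th1} gives
\[
D_n=-\tfrac{\fb}{\sqrt n}\Delta_n+\big\{B_Q(b)-[B_Q(1)-B_Q(1-b)]\big\}+o_p(1).
\]
Since $\sqrt n\|\Delta_n\|_\cH\to\infty$, this yields $\li D_n,\Delta_n\ri_\cH=-(\fb/\sqrt n)\|\Delta_n\|_\cH^2(1+o_p(1))$ and $\|D_n\|_\cH\asymp_p \sqrt n\|\Delta_n\|_\cH$. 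The projected sequence is then
\[
Z_t=\li D_n,\mu_1\ri_\cH+\li D_n,\Delta_n\ri_\cH\mathbf 1_{t>k^\ast}+\li D_n,\varepsilon_t\ri_\cH .
\]
Conditioning is awkward because $D_n$ depends on $\X_1,\X_3$. Instead I would use the FCLT for $\varepsilon_t$ on the middle block together with $\|D_n\|_\cH=O_p(\sqrt n\|\Delta_n\|_\cH)$. This gives the uniform bound
\[
\max_{a\le b'}\Big|\sum_{t=a}^{b'}\li D_n,\varepsilon_t\ri_\cH\Big|\le \|D_n\|_\cH\max\Big\|\sum\varepsilon_t\Big\|_\cH=O_p\big(n\|\Delta_n\|_\cH\big).
\]
Write $\theta_n=|\li D_n,\Delta_n\ri_\cH|\asymp_p\sqrt n\|\Delta_n\|^2_\cH$. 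The signal part of partial sums is then of order $n\theta_n\asymp n^{3/2}\|\Delta_n\|^2$, while the noise is $O_p(n\|\Delta_n\|)$. The noise-to-signal ratio is $(\sqrt n\|\Delta_n\|)^{-1}\to0$.

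Next, I would analyze $T_n(k)$ and $V_n(k)$ in terms of the piecewise-constant mean $\theta_n\mathbf 1_{t>k^\ast}$, which is the standard SN CUSUM structure of \cite{shaozhang2010}. With $m=n-2\fb$ and $k$ indexing the middle block, the deterministic part of $T_n(k)$ is $\theta_n m^{-1/2}\,\min(k,k^\ast)\,(m-\max(k,k^\ast))/m$ up to sign. This is maximized at $k=k^\ast$ and decreases linearly in $|k-k^\ast|$ at rate $\asymp\theta_n/\sqrt n$. The deterministic part of $V_n(k)$ vanishes at $k=k^\ast$: within each segment the mean is constant, so the bridge-type sums remove it. For $|k-k^\ast|=j$, the contaminated segment contributes a term of order $\theta_n^2 j^2\cdot(\text{length})/n^2$. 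Thus
\[
V_n(k^\ast)=O_p\big(\|D_n\|^2\big)=O_p\big(n\|\Delta_n\|^2\big),
\]
while $G_n\ge T_n(k^\ast)/\sqrt{V_n(k^\ast)}\gtrsim \theta_n\sqrt n/(\sqrt n\|\Delta_n\|)\asymp\sqrt n\|\Delta_n\|$.

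The main step, and the main obstacle, is to show that
\[
\max_{|k-k^\ast|\ge\epsilon_n}\frac{T_n(k)}{\sqrt{V_n(k)}}<\frac{T_n(k^\ast)}{\sqrt{V_n(k^\ast)}}
\]
with probability tending to one. The difficulty is that both numerator and denominator move with $k$. I would split into two regimes, $\epsilon_n\le|k-k^\ast|\le\eta n$ and $|k-k^\ast|>\eta n$.

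In the far regime, $V_n(k)\gtrsim\theta_n^2 n$ dominates the noise, so the ratio is $O_p(1)$. This is far below the order $\sqrt n\|\Delta_n\|\to\infty$ at $k^\ast$.

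In the near regime, I would lower bound $V_n(k)$ by its deterministic part minus cross terms. The deterministic part is $\gtrsim\theta_n^2 j^2/n$ for $j=|k-k^\ast|$. The cross terms are bounded by Cauchy–Schwarz by $O_p(\theta_n j\cdot n\|\Delta_n\|/n)$, which is negligible whenever $j\theta_n\gg n\|\Delta_n\|$. That condition is equivalent to $j\|\Delta_n\|\gg\sqrt n$, i.e.\ $n(j\|\Delta_n\|)^{-2}\to0$, exactly the assumed rate for $j\ge\epsilon_n$. Hence $\sqrt{V_n(k)}\gtrsim\theta_n j/\sqrt n$. Since $T_n(k)\le T_n(k^\ast)+O_p(n\|\Delta_n\|/\sqrt n)$, this gives
\[
\frac{T_n(k)}{\sqrt{V_n(k)}}\lesssim \frac{n}{j}=o(\sqrt n\|\Delta_n\|)\cdot\frac{\sqrt n}{j\|\Delta_n\|}\cdot\ldots
\]
More cleanly, $T_n(k)/\sqrt{V_n(k)}=O_p(n/j)$, to be compared with $\sqrt n\|\Delta_n\|$ at $k^\ast$. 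The condition $n(\epsilon_n\|\Delta_n\|)^{-2}\to0$ gives $n/\epsilon_n\ll\sqrt n\,\|\Delta_n\|\cdot(\sqrt n/\epsilon_n)$. Because this is not immediately $o(\sqrt n\|\Delta_n\|)$, I expect to need the sharper comparison $T_n(k)^2V_n(k^\ast)$ versus $T_n(k^\ast)^2V_n(k)$, using that $V_n(k^\ast)$ contains only noise of order $\|D_n\|^2\lesssim n\|\Delta_n\|^2$. The inequality then reduces to $\theta_n^2 j^2/n\gg n\|\Delta_n\|^2$, i.e.\ $j^2\|\Delta_n\|^2\gg n$, which again matches the hypothesis. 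Carrying out this bookkeeping uniformly in $k$, with the random direction $D_n$ handled via the bounds above, is the delicate part.
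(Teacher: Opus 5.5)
Your proposal is correct and follows essentially the paper's own argument. It rests on three pieces: a uniform bound $T_n(k)=O_p(n\|\Delta_n\|_\cH^2)$; a lower bound on $V_n(k)$ for $|k-k^\ast|\ge\epsilon_n$ coming from the deterministic bridge term on the segment containing $k^\ast$, of order $\epsilon_n^2\|\Delta_n\|_\cH^4\gg n\|\Delta_n\|_\cH^2$, which swamps the $O_p(n\|\Delta_n\|_\cH^2)$ noise; and the comparison with $T_n(k^\ast)/\sqrt{V_n(k^\ast)}\gtrsim_p\sqrt n\|\Delta_n\|_\cH$ from part (ii).1 of Theorem~\ref{theorem_cp}.

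Your worry at the end is unfounded. By hypothesis, $n/\epsilon_n=\sqrt n\|\Delta_n\|_\cH\cdot\sqrt n/(\epsilon_n\|\Delta_n\|_\cH)=o(\sqrt n\|\Delta_n\|_\cH)$ directly. So your bound $O_p(n/j)$ already finishes the proof, and the ``sharper comparison'' is unnecessary (it reduces to the same condition).
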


The proof of Theorem \ref{cp_est} basically follows the same idea as in the proof of Theorem 1 in \cite{zjs2022}. Note that the convergence rate $\epsilon_n$ of the change point estimator $\hat k$ is the same as the rate derived in Theorem 1 of \cite{zjs2022}, which first worked out the consistency and the rate for the SN-based change point location estimator for Euclidean time series. The latter paper  also stated that under the fixed alternative setting (i.e., $\|{\Delta}_n\|_\cH=C>0$), the convergence rate $\epsilon_n/n$ is at best $1/\sqrt{n}$. In comparison, as shown in \cite{bai1994least}, the optimal convergence rate for detecting change point in the mean of a multivariate time series is $1/n$. It seems difficult to improve convergence rate $\epsilon_n$ due to the complex form of $V_n(k)$ in the definition of $G_n$. We refer to the discussion following Theorem \ref{cp_est_wbs} in \cite{zjs2022} for more details of a local refinement procedure  to achieve the optimal rate.

		\subsection{Strong negativity of some metric spaces}\label{sec2_3}
	
 According to Theorem 3.16 in \cite{lyons2013}, every separable Hilbert space, such as the Euclidean space or the space of square integrable functions, is a metric space of strong negative type. Besides that, it is in general not easy to verify that a metric space is of strong negative type. According to part (iv) in \cite{lyonserrata2}, if $(\X,d)$ is a separable semimetric space of negative type, then $(\X,d^r)$ is a metric space of strong negative type for any $r\in(0,1)$. So in order to show $(\X,d)$ is a metric space of strong negative type, it suffices to show, by verifying the definition, $(\X,d^r)$ is of negative type for some $r>1$. For example, by Theorem 3.6 in \cite{meckes2013positive}, it is immediately clear that the space of measurable functions supported on $[0,1]$ with metric $d(f,g) = \{\int_{[0,1]} |f(x)-g(x)|^p dx\}^{1/p}$ is a metric space of strong negative type for any $p\in(1,2)$. 

    As for metric spaces that are not of strong negative type, it is proved in \cite{dor1976potentials} that $\R^q$ with the $\ell^p$ metric is not a metric space of negative type for any $3\leq  q\leq \infty$ and $2<p\leq \infty$. In addition, as stated in \cite{lyons2013}, $\R^2$ with the $\ell^1$ metric is a metric space of negative type, but not of strong negative type.
    
    We now show the following metric spaces where object-valued data resides in are of strong negative type. 
	\begin{enumerate}
		\item The space $\Omega_1$ of probability measures on $\R$, equipped with the 2-Wasserstein metric $d_W(x,y)= \{\int_0^1 (F^{-1}_x(u)-F^{-1}_y(u))^2du\}^{1/2}$, where $F^{-1}_x,F^{-1}_y$ are the inverse cdf functions of $x$ and $y$.
		
		\item The space $\Omega_2$ of Graph Laplacians of weighted graphs (see Section 3 in \cite{ginestet2017}) equipped with Frobenius metric $d_{F}(A,B) =\{ \tr [(A-B)^\top(A-B)]\}^{1/2}$ for $d\times d$ symmetric matrices $A$ and $B$.
		
		\item The space $\Omega_3$ of positive definite covariance matrices, equipped with the log-Euclidean metric $d_E(A,B) = d_{F}(\log_m(A),\log_m(B))$ for $d\times d$ symmetric matrices $A$ and $B$, where $\log_m$ is the matrix-log function.
	\end{enumerate}
	
	According to part (iv) in \cite{lyonserrata2}, it suffice to show $(\Omega_1,d_W^2)$, $(\Omega_2,d_F^2)$ and $(\Omega_3,d_E^2)$ are separable semimetric spaces of negative type. The separability of $(\Omega_1,d_W^2)$ is implied by the separability of $(\Omega_1,d_W)$, which is shown in \cite{bolley2008}. For any $n\geq 2$, $x_1,x_2,\dots,x_n\in\Omega_1$, $a_1,a_2,\dots,a_n\in\R$ with $\sum_{i=1}^{n}a_i=0$, 
	\begin{align}
		\sum_{i=1}^{n}\sum_{j=1}^{n}a_ia_jd^2_W(x_i,x_j) = &\int_0^1 	\sum_{i=1}^{n}\sum_{j=1}^{n}a_ia_j(F^{-1}_{x_i}(u)-F^{-1}_{x_j}(u))^2du \nonumber \\
		=& -2\int_0^1 	\sum_{i=1}^{n}\sum_{j=1}^{n}a_ia_jF^{-1}_{x_i}(u)F^{-1}_{x_j}(u)du \label{eq_snt} \\
		=& -2\int_0^1 \big( \sum_{i=1}^{n}  a_iF^{-1}_{x_i}(u)    \big)^2du \leq 0 \nonumber,
	\end{align}
	so $(\Omega_1,d_W^2)$ is of negative type and the strong negativity of $(\Omega_1,d_W)$ is shown. For $(\Omega_2,d_F^2)$, the separability follows from the fact that $\R^{d(d+1)/2}$ with the Euclidean metric is separable and every subspace of a separable metric space is separable (see Proposition 26 in Section 9.6 of \cite{royden2010real}). Denote $A_{rs}$ as the $(r,s)$ element of a matrix $A\in\Omega_2$, for any $n\geq 2$, $A^1,A^2,\dots,A^n\in\Omega_2$, $a_1,a_2,\dots,a_n\in\R$ with $\sum_{i=1}^{n}a_i=0$, we have 
	\begin{align}
		\sum_{i=1}^{n}\sum_{j=1}^{n}a_ia_jd^2_F(A^i,A^j) = \sum_{r=1}^{n}\sum_{s=1}^{n}	\sum_{i=1}^{n}\sum_{j=1}^{n}a_ia_j(A^i_{rs}-A^j_{rs})^2 = -2\sum_{r=1}^{n}\sum_{s=1}^{n} \big( \sum_{i=1}^{n}  a_iA^i_{rs} \big)^2\leq 0 \nonumber,
	\end{align}
	so $(\Omega_2,d_F^2)$ is of negative type, which implies $(\Omega_2,d_F)$ is of strong negative type. The strong negativity of $(\Omega_3,d_E)$ can be shown in a similar way as $(\Omega_2,d_F)$ and we omit the details.

\section{Multiple Change Point Estimation: the WBS-SN Algorithm}\label{sec_3}

The original WBS algorithm was proposed in \cite{wbs} to detect changes in the mean $f_t$ for $\R$-valued random variables $X_t=f_t+\epsilon_t$ with noise $\epsilon_t\stackrel{i.i.d}{\sim}\N(0,\sigma^2)$. It operates in an iterative manner as follows: At the beginning, $M$ independent random intervals $\{(s_m,e_m)\}_{m=1}^M$ are drawn from $\{1,2,\dots,n\}$ and $C_m = \max_{k\in\{{s_m},\dots,{e_m-1}\}}|\tilde T_m(k)|$ are calculated where the CUSUM statistic $\tilde T_m(k)$ (calculated on $\{X_{s_m},\dots,X_{e_m}\}$) is a rescaled version of $T_n(k)$ defined in Equation (\ref{sn_num}). If $\max\{C_m:m=1,\dots,M\}>K_n$ for some threshold $K_n$, then the first estimated change point is chosen as $k_1$ where $(m_1,k_1) = \argmax_{m\in\{1,2,\dots,M\},k\in\{{s_m},\dots,{e_m-1}\}}|\tilde T_m(k)|$. The same procedure is then repeated for $C_m$ calculated on the intervals inside subsamples $\{X_1,\dots,X_{k_1}\}$ and $\{X_{k_1+1},\dots,X_{n}\}$, which gives the second and third change point estimator. The splitting and testing continue until no change points are found in each subsample.  
%on the four subsamples produced by the three change point estimators $k_1,k_2$ and $k_3$. 
That is, the algorithm stops if, after several iterations, the maximum of $C_m$ calculated in all resulting subsamples are less than $K_n$.

As stated in Section 3.4 of \cite{wbs}, the performance of WBS is sensitive to the choice of the threshold $K_n$, which  depends on the estimation of the noise level $\sigma^2$. Alternatively, \cite{wbs} proposed another implementation of WBS by setting $K_n=0$ in the above algorithm and select the change point estimators from the output of WBS using ``strengthened Schwarz Information Criterion'' (sSIC), which also depends on an estimator of $\sigma^2$. The extension of WBS to mean change point estimation in time series data inevitably involves the LRV estimation, which is a difficult task due to the bandwidth selection and the presence of change points. The popularly adopted kernel estimator of LRV tends to incur downward bias \citep{chan2017high} and requires careful choice of bandwidth parameters. So far there seems very little theoretical work on the consistency of WBS in the time series setting. 
%The difficulty in quantifying noise level, as well as the temporal dependence in time series data leads to more sophisticated statistic than $\tilde T_m(k)$ to be combined with WBS to achieve good performance, which makes the theoretical justification of estimation consistency more challenging. 
To the best of our knowledge, the only consistency result for the use of WBS was presented in \cite{korkas2017multiple}, which is developed for detecting variance change in a sequence of random scaled $\chi^2_1$-distributed time series data with piecewise constant variance, where the CUSUM statistic $\tilde T_m(k)$ is scaled by $q_m = (e_m-s_m+1)^{-1}\sum_{t=s_m}^{e_m}X_t^2$. 

Recently, \cite{fryzlewicz2020detecting} proposed the Wild Binary Segmentation 2 (WBS2) algorithm for i.i.d data, where instead of generating $M$ random intervals at the beginning, a smaller number of random intervals are drawn in each iteration and the algorithm stops after $(n{-}1)$ change point candidates are selected (see Theorem 3.1 in \cite{fryzlewicz2020detecting}). Based on WBS2, \cite{cho2023multiple} proposed the WCM.gSa algorithm for detecting multiple change points in the mean of time series data, where a new application of the Schwarz criterion \citep{schwarz1978estimating} is used to select the change point estimators from selected sub-collections in the output of WBS2. Although WBS2 does not impose any model assumption on the data, the estimation consistency result of WCM.gSa requires the demeaned time series coming from an AR(p) model.

In this section, we present an algorithm (WBS-SN) that estimates the locations of unknown but fixed number of change points in $\{\mu_{t}\}$ based on WBS. For positive integer $m_0$ and $r^\ast_0=0<r^\ast_1<r^\ast_2<\cdots<r^\ast_{m_0}<1=r^\ast_{m_0+1}$, we assume the change points locate at $k^\ast_i=\lfloor n r^\ast_i \rfloor$ and have magnitude $\Delta^\ast_{i} = \mu_{k^\ast_i+1}- \mu_{k^\ast_i}$ for $i=1,2,\dots,m_0$. 
%Note that we do not need the assumption that $\min_{1\leq i\leq m_0{+}1} (r^\ast_{i}{-}r^\ast_{i-1} )>c_0$ for some pre-specified window length parameter $c_0$, which is a common assumption in the literature of change point testing and estimation for time series data. 
Then the data generating process under the multiple change alternative corresponds to 
$$H_1:P_1=\cdots=P_{k_1^\ast}\not=P_{k_1^\ast+1}=\cdots=P_{k_2^\ast}\not =P_{k_2^\ast+1}\cdots P_{k_{m_0}^\ast}\not= P_{k_{m_0}^\ast+1}=\cdots=P_n.$$

We generate $M$ small intervals such that the length of each interval is proportional to $n$ in the following way. Let $\{U_{1m}\}_{m=1}^M, \{U_{2m}\}_{m=1}^M$ be two independent sequences of i.i.d $Uniform[0,1]$ random variables, $\underline U_m = \min\{U_{1m},U_{2m}\}$ and $ \bar U_m = \max\{U_{1m},U_{2m}\}$, then each interval is represented as $(\underline U_m,\bar U_m)$.
%, which is uniformly distributed on $\{(x,y)\in\R^2:0<x<y<1\}$. 
Consider the event that each change point is contained in the middle of at least one interval, which does not contain any other change points. To be specific, define
\begin{align}
	\M =& \Big\{\text{For each }i=1,2,\dots,m_0, \text{ there exist } m\in\{1,2,\dots,M\}, \text{ such that }\nonumber \\
	&\qquad r^\ast_{i-1}<\underline U_m<r^\ast_i<\bar U_m<r^\ast_{i+1} \text{ and }\frac{r^\ast_i{-}\underline U_m}{\bar U_m{-}\underline U_m}\in(b,1-b)\Big\}.
\end{align}
Denote $\underline r = \min_{1\leq i\leq m_0{+}1} (r^\ast_{i}{-}r^\ast_{i-1} )\leq \frac{1}{m_0+1}$, then we have 
\begin{align}
	1-P(\M)\leq& \sum_{i=1}^{m_0}\big\{1{-}P\big(\{  r^\ast_{i-1}<\underline U_1<r^\ast_i<\bar U_1<r^\ast_{i+1} \text{ and }\frac{r^\ast_i{-}\underline U_1}{\bar U_1{-}\underline U_1}\in(b,1-b)  \}\big)    \big\}^M \nonumber \\
	\leq& m_0\big\{1{-}\frac{2\underline r (1{-}2b)}{1-b}   \big\}^M, \nonumber
\end{align}
which implies $P(\M)\to 1$ as $M\to \infty$. Denote $b_m = \lfloor (\lfloor n\bar U_m\rfloor-\lfloor n\underline U_m\rfloor{+}1)b  \rfloor$, and let $T_{nm}(k)$ and $V_{nm}(k)$ be the numerator and squared denominator of the SS-SN statistic, as defined in Equations (\ref{sn_num}) and (\ref{sn_den}), which are calculated using data $\{X_{\lfloor n\underline U_m\rfloor}, X_{\lfloor n\underline U_m\rfloor+1},\dots,X_{\lfloor n\bar U_m\rfloor}\}$ for some $k\in\{\lfloor n\underline U_m\rfloor{+}b_m,\dots,\lfloor n\bar U_m\rfloor{-}b_m{-}1\}$. Using pseudocode, the WBS-SN algorithm is defined in Algorithm \ref{algo1}.

\SetKwProg{Fn}{Function}{:}{}
\SetKwFunction{wbssn}{WBS-SN}
\begin{algorithm}%[H]	
	\Fn{\wbssn{$s,e,K_n, L_0, M,b$}}{
		
		\eIf{$e-s<L_0$}{
			STOP
		}{
			$\mathcal{M}_{s:e} :=$ set of those indices $m\in\{1,2,\dots,M\}$ for which $[\lfloor n\underline U_m\rfloor,\lfloor n\bar U_m\rfloor]  \subset [s,e]$ and $\lfloor n\bar U_m\rfloor-\lfloor n\underline U_m\rfloor\geq L_0$. \\
			$(m',k'):= \argmax\limits_{m\in \mathcal{M}_{s:e}, k\in \{\lfloor n\underline U_m\rfloor{+}b_m,\dots,\lfloor n\bar U_m\rfloor{-}b_m{-}1\}}\frac{T_{nm}(k)}{\sqrt{V_{nm}(k)}}$
			
			\eIf{$\frac{T_{nm'}(k')}{\sqrt{V_{nm'}(k')}} >K_n$}
			{add $k'$ to the set of estimated change points \\
				WBS-SN($s,k',K_n, L_0, M,b$) \\
				WBS-SN($k'+1,e,K_n, L_0, M,b$) }{STOP}

		}
		
	}
	\caption{WBS-SN for object-valued times series change point estimation}
	\label{algo1}
\end{algorithm}		

%To initialize the estimation procedure, 
To start, we apply Algorithm \ref{algo1} with WBS-SN($1,n,K_n, L_0, M,b$). Here $L_0$ denotes the minimal interval length, $b$ is the proportion of samples used to calculate the direction of projection $\bar Y_1{-}\bar Y_3$ on each random small interval and $K_n$ is the threshold. In practice, following the WBS procedure in \cite{wzvs2022}, $K_n$ is determined by simulations as follows: we generate $R$ samples of i.i.d. standard normal random variables with the same sample length $n$ as the original object-valued data. For the $r$th sample, we calculate
\begin{equation}
	\hat K_n^r =  \max\limits_{m\in \mathcal{M}_{1:n}, k\in \{\lfloor n\underline U_m\rfloor{+}b_m,\dots,\lfloor n\bar U_m\rfloor{-}b_m{-}1\}}\frac{T_{nm}^{(r)}(k)}{\sqrt{V_{nm}^{(r)}(k)}}\mbox{, }r=1,2,\dots,R,
\end{equation}
where $T_{nm}^{(r)}(k)$ and $V_{nm}^{(r)}(k)$ are calculated based on the $r$th sample. Finally, we choose $K_n$ as the $100\gamma$\% sample quantile of $\{\hat K_n^1,\hat K_n^2,\dots,\hat K_n^R\}$ for some $\gamma\in [0.8,0.95]$.

\begin{assumption}\label{assump_wbs}
	For $i=1,2,\dots,m_0$, we assume that 
	\begin{enumerate}%[(a)]
		\item  $\Delta^\ast_{i}  = n^{-\delta_i}f_i$ for some $\delta_i\in [0,\frac{1}{2})$ and non-zero element $f_i\in \cH$.
		\item $K_n\to \infty$ and $K_n = o(n^{1/2-\bar  \delta })$ where $\bar \delta = \max\{\delta_1,\delta_2,\dots,\delta_{m_0}\}$. 
	\end{enumerate}
\end{assumption}

The following theorem shows the estimation consistency for WBS-SN, which is proved in the appendix of the supplement.

\begin{theorem}\label{cp_est_wbs}
	Suppose Assumptions \ref{assump1} and \ref{assump_wbs} holds. Let $\hat m$ denote the number, and $\hat k_1\leq \hat k_2\leq\cdots\leq \hat k_{\hat m}$ the locations of the change point estimator obtained by the WBS-SN algorithm. Then we have
	\begin{multline}\label{eq_wbs1}
		\lim_{n\to \infty}P\big(\hat m =m_0,\, |\hat k_i -k^\ast_i|<\epsilon_{in} \text{ for any }i=1,2,\dots,m_0\big) \geq 1{- } m_0 \left\{1{-}\frac{2\underline r (1{-}2b)}{1-b}   \right\}^M,
	\end{multline}
	where $\{\epsilon_{in}\}$ is an integer sequence which satisfies that $\frac{\epsilon_{in}}{n}\to 0$ and $n(\epsilon_{in}\|{\Delta}^\ast_i\|_\cH)^{-2}\to 0$ for $i=1,2,\dots,m_0$.
	
\end{theorem}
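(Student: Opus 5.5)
We would work on the event $\M$ and show that, on $\M$, the WBS-SN output is correct with probability tending to one. Since $P(\M)\ge 1-m_0\{1-2\underline r(1-2b)/(1-b)\}^M$ and the random intervals are independent of the data, this gives \eqref{eq_wbs1}.

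\textbf{Step 1: uniform invariance principle.} By Theorem \ref{th1} applied to $Y_t-\mu_t$, the partial sum process $W_n(r)=n^{-1/2}\sum_{t=1}^{\fr}(Y_t-\mu_t)$ converges weakly, so $\sup_{r}\|W_n(r)\|_\cH=O_p(1)$. Every quantity computed on a random interval $[s_m,e_m]=[\lfloor n\underline U_m\rfloor,\lfloor n\bar U_m\rfloor]$ is a functional of $W_n$ together with the deterministic piecewise constant mean. This holds for the direction $\bar Y_1-\bar Y_3$, for the projected partial sums $S_{a,b}$, for $T_{nm}(k)$ and for $V_{nm}(k)$. We therefore decompose each $Z_t$ on interval $m$ into signal and noise,
\[
Z_t=\li D_m,\mu_t\ri_\cH+\li D_m,Y_t-\mu_t\ri_\cH,\qquad D_m=\mu\text{-part of }\bar Y_1-\bar Y_3+O_p(1),
\]
so that the noise contributions are $O_p(1)$ uniformly over all $m\le M$ and all $k$. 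This uniformity holds because $M$ is fixed and $W_n$ is tight.

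\textbf{Step 2: no false detection on change-free segments.} Suppose $[s,e]$ contains no change point at distance $\gg\epsilon_{in}$ from its interior, apart from boundary effects of size $\epsilon_{in}$. Then the signal part of $T_{nm}/\sqrt{V_{nm}}$ is negligible, and the statistic is $O_p(1)$ uniformly in $m$. We need a lower bound on $V_{nm}(k)$ of order $\|D_m\|^2$ times a nondegenerate functional. This follows from the continuous mapping theorem: the limiting denominator is a.s. positive, and the statistic is scale invariant in $D_m$. Because $K_n\to\infty$, no detection occurs, and the recursion stops once all true changes are found. The boundary case, where the segment endpoints are within $\epsilon_{in}$ of a true change, is handled by noting that the residual signal is $O(\epsilon_{in}\|\Delta_i^\ast\|/\sqrt n)$. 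Since only $\epsilon_{in}/n\to0$ is required, we choose $\epsilon_{in}$ minimal, and this residual only enters multiplicatively with a shrinking factor. This point may need care.

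\textbf{Step 3: detection and localization.} On $\M$, each remaining change $k_i^\ast$ lies in some interval $m$ containing only that change, positioned in its middle portion. By the argument of Theorem \ref{theorem_cp}(ii) and Assumption \ref{assump_wbs}, on that interval $\|D_m\|\asymp\sqrt n\|\Delta_i^\ast\|$. Hence the statistic at $k=k_i^\ast$ is of order $\sqrt n\|\Delta_i^\ast\|\gtrsim n^{1/2-\bar\delta}\gg K_n$, so a detection happens.

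The main obstacle is showing that the selected maximizer $k'$, taken over all admissible intervals including those containing several changes, lies within $\epsilon_{in}$ of some true change point. We would adapt the argument of Theorem \ref{cp_est} (following \cite{zjs2022}) uniformly over intervals. For an interval containing changes, the signal of $T_{nm}(k)$ is a piecewise linear "tent" whose kinks occur only at true change points. The signal part of $V_{nm}(k)$ vanishes when $k$ is a change point and the interval contains at most one change on each side. Otherwise $V_{nm}$ grows, which penalizes non-change locations.

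We would argue by contradiction. Suppose $|k'-k_j^\ast|\ge\epsilon_{jn}$ for all $j$. Moving $k'$ to the nearest change point then increases the ratio by at least a term of order $\epsilon_{jn}\|\Delta^\ast_j\|\cdot\|D_m\|/n$, relative to a noise of order $\|D_m\|/\sqrt n$. This gives a contradiction since $n(\epsilon_{jn}\|\Delta_j^\ast\|)^{-2}\to0$. The heterogeneous rates require comparing against the change dominating the chosen interval. We expect this comparison, together with a lower bound on $V_{nm}$ when the interval contains multiple changes, to be the most delicate step.

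Once localization holds, induction over the recursion tree gives $\hat m=m_0$ and the stated rates. The induction uses the fact that each sub-segment retains the intervals from $\M$ for its remaining changes, because $\epsilon_{in}=o(n)$.
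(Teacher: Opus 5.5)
Your skeleton matches the paper's. You work on $\M$, use Theorem \ref{theorem_cp}(ii) to get a statistic of order $\sqrt n\|\Delta^\ast_i\|\gg K_n$ on the good interval of each remaining change, stop via $O_p(1)$ bounds on change-free intervals, and keep the good intervals admissible in sub-segments because $\epsilon_{in}=o(n)$. The genuine gap is the step you call the main obstacle. You need the global maximiser $(m',k')$, taken over all admissible intervals including those with several middle changes, to lie within $\epsilon_{in}$ of some $k^\ast_i$. That step is essentially the whole of the paper's proof (Lemmas \ref{lemma_0cp}--\ref{lemma_3cp}), and the mechanism you propose does not work.

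\begin{itemize}
\item \textbf{Moving $k'$ to the nearest change need not raise the ratio.} At $k=k^\ast_j$ the self-normaliser still carries the signal of the other middle changes in that interval. Take an interval with two middle changes and $\langle\cP_n,\Delta^\ast_{p+1}\rangle$, $\langle\cP_n,\Delta^\ast_{p+2}\rangle$ of full order. Then $T_{nm}(k)$ and $\sqrt{V_{nm}(k)}$ are of the same order at every $k$, so the ratio is $O_p(1)$ on the whole interval. Its within-interval maximiser can sit far from both changes, so a within-interval comparison yields no contradiction.
\item \textbf{Your gain $\epsilon_{jn}\|\Delta^\ast_j\|\|D_m\|/n$ assumes alignment.} It presupposes $|\langle D_m,\Delta^\ast_j\rangle|\asymp\|D_m\|\|\Delta^\ast_j\|$. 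On a multi-change interval the signal part of the direction is $-b\sqrt n\cP_n$, a weighted sum of all changes in the interval, including those in the trimmed ends. It can vanish or be (nearly) orthogonal to a given $\Delta^\ast_j$, which changes the orders of both $T$ and $V$. This is why the paper splits into the $\theta,\kappa,\zeta$ cases.
\item \textbf{The SN localisation mechanism is denominator inflation, not numerator gain.} This is part (b) of the proof of Theorem \ref{cp_est}. What is needed are lower bounds on $\min_k V_{nm}(k)$ away from the changes. On multi-change intervals these require controlling cancellation between contributions of different changes, as in the sign analysis of $\tilde a_{p+1}\tilde a_{p+2}$ behind (\ref{eq_2cp4}).
\end{itemize}

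The paper's route compares across intervals, not within one. For every configuration (zero, one, two, or at least three middle changes) and every order case of $\cP_n$ and $\langle\cP_n,\Delta^\ast_j\rangle$, it shows the ratio at all $k$ at distance at least $\epsilon_{in}$ from the middle changes is $o_p(n^{1/2-\underline{\delta}_s})$. Here $\underline{\delta}_s$ is the smallest $\delta_i$ among that interval's middle changes. On $\M$, the good interval for the change attaining $\underline{\delta}_s$ is admissible in the current segment and gives order $n^{1/2-\underline{\delta}_s}$. Without these uniform bounds your contradiction argument has nothing to contradict.

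A smaller point concerns Step 2. You call a residual signal of size $\epsilon_{in}\|\Delta^\ast_i\|/\sqrt n$ negligible, but it diverges under the assumption $n(\epsilon_{in}\|\Delta^\ast_i\|)^{-2}\to0$. The issue is moot anyway. The statistic is only computed on the random intervals, whose endpoints sit at fixed fractions almost surely distinct from the $r^\ast_i$. For large $n$ none lies between $\hat k_j$ and $k^\ast_j$, so every interval admissible in a final segment contains no change. With $M$ fixed, the continuous mapping theorem then gives the $O_p(1)$ bound below $K_n\to\infty$.
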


\begin{remark}\label{rmk_sec3}
	As shown in Equation (\ref{eq_wbs1}), the asymptotic probability of correctly estimating the number and locations of change points is lower bounded by the same value as for $P(\M)$, which converges to one as $M\to \infty$. Also, the convergence rate $\epsilon_{in}$ for the true change point $k^\ast_i$ depends on the change magnitude $\|\Delta^\ast_{i}\|_\cH$ at $k^\ast_i$, so change points of larger magnitude correspond to faster convergence rate. By contrast,  a single rate is derived for $\max_{1\leq i\leq m_0}|\hat k_i -k^\ast_i|$ in Theorem 3 of \cite{zjs2022}, which develops a SN-based segmentation algorithm for Euclidean valued time series.  
\end{remark}

\section{Simulation Results} \label{sec4simu}
	
	In this section, we examine the size and power properties of our proposed test statistic, as well as the estimation accuracy of WBS-SN algorithm, in finite sample. Specifically in Section \ref{sec_simu1}, we consider the single change point setting for bivariate distributional data. In Section \ref{sec_simu2}, we examine empirical size and power
	of our test statistic for three data generation processes (DGPs, hereafter) studied in \cite{jiang2024two}, as well as another DGP where the Fr\'{e}chet mean and variance remain the same under the alternative. The favorable estimation performance of WBS-SN is shown in Section \ref{sec_simu3}. We set $b=0.15$ for all the simulation results and let $M=50$, $L_0=20$ for the WBS-SN algorithm used in Section \ref{sec_simu3}.
    
    % In Appendix \ref{sec2_3}, we show that the metric spaces where some frequently encountered object-valued data live in are of strong negative type, which may be of independent interest.
	
	\subsection{Bivariate Distributional Data}\label{sec_simu1}
	
	Consider the same DGP as in Section 3.2 of \cite{dubey2023change} and let $\{X_t\}$ take values in the space of bivariate distributions with metric 
    $$
    d(x,y) {=} \{\int_\R\int_\R |F_x(u,v){-}F_y(u,v)|^2dudv\}^{1/2},
    $$ 
    where $F_x$ and $F_y$ are the cdf functions of $x$ and $y$. The strong negativity of this metric space can be shown in a similar way as $(\Omega_1,d_W)$ in Section \ref{sec2_3}. We test for a change point in the marginal distribution of $X_t$. As comparison, we consider the statistics $SN_1$ and $SN_2$ defined in \cite{jiang2024two}, which have pivotal limiting distributions under the null, and the statistics defined in \cite{dubey2023change} (DZ) and \cite{dubey2020frechet} (DM) with rejection criteria calculated using 200 bootstrap/permutation replicates. Under the alternative, the size adjusted power for DM and DZ are calculated according to \cite{dominguez2000size}. The experiments are repeated 1000 times with nominal level at 5\%. 
	
	For the empirical size, we let $X_t=\N (Z_t,0.25I_2)$, where $I_2$ is the two dimensional identity matrix and $Z_t = \rho I_2 Z_{t-1}+0.5\sqrt{1-\rho^2}\epsilon_{t}$ with $\epsilon_{t}\sim^{i.i.d}\N(0,I_2)$. We set $n\in\{200,400\}$ and $ \rho\in\{-0.4,0.4,0,0.7\}$. As shown in Table \ref{tab_cp_bd}, our proposed statistic SS-SN has accurate size for $\rho\in\{-0.4,0,0.4\}$ and is over-sized for $\rho=0.7$ when $n=200$. The empirical sizes are close to the nominal level for all values of $\rho$ when $n=400$. For DM and DZ, the size distortion is large when $\rho\in\{0.4,0.7\}$ and the size accuracy does not improve much as $n$ increases. The large size distortion is expected since both DM and DZ are developed for independent object data. $SN_2$ is under-sized for all values of $\rho$ and the size distortion becomes larger for $\rho\in\{0.4,0.7\}$ as $n$ increases. For $SN_1$, the size accuracy is comparable to SS-SN. However, as will be shown later, $SN_1$ can only detect change in Fr\'{e}chet variance.
	\begin{table}
		%	\vspace{0.5em}
		\centering
        \setlength{\tabcolsep}{13pt}
		%\scalebox{0.95}{
			\begin{tabular}{cc|ccccc}
				\toprule
				\midrule
				n&$\rho$	 &  SS-SN &   $SN_1$ &   $SN_2$&DM &DZ     \\\hline
				\multirow{4}{*}{200}&-0.4 &    4.1 &5.2   &2.9  &6.7  &6.0  \\ \cline{3-7}
				&0    &    4.2 &5.1   &0.9  &3.2  &6.6  \\ \cline{3-7}
				&0.4  &    6.2 &6.7   &1.1  &25.0 &43.4    \\ \cline{3-7}
				&0.7  &    9.2 &5.2   &4.1  &85.1 &97.1   \\\cline{2-7}
				\multirow{4}{*}{400}&-0.4 &    5.1 &4.9   &3.9  &10.7  &4.3  \\ \cline{3-7}
				&0    &    4.2 &5.5   &2.3  &4.3   &5.5  \\ \cline{3-7}
				&0.4  &    4.3 &4.1   &0.5  &21.2  &42.5    \\ \cline{3-7}
				&0.7  &    5.3 &5.2   &1.4  &83.3  &95.5   \\\bottomrule
		\end{tabular}
  %}
		\caption{Empirical rejection rate (in percentage) under the null when testing for change point in marginal distribution of bivariate distributional data.}\label{tab_cp_bd}
	\end{table}
	
	For the size adjusted power we set $n{=}200,\rho\in\{-0.4,0,0.4\}$ and assume $X_t=\N (Z_t,0.25I_2)$ with $Z_t$ coming from the following two DGPs.
	$$\text{DGP1}: Z_t-\mu_t = \rho I_2(Z_{t-1}{-}u_{t-1})+0.5\sqrt{1-\rho^2}\epsilon_t\text{ with }
	\begin{cases}
		u_t = (0,0)^\top,  & 1\leq t\leq \lfloor \frac{n}{2}\rfloor \\
		u_t = (\delta,0)^\top, &\lfloor \frac{n}{2}\rfloor< t\leq n
	\end{cases}$$
	$$\text{DGP2}: Z_tM_t=\rho I_2Z_{t-1}M_{t-1}+0.4\sqrt{1-\rho^2}\epsilon_t\text{ with }
	\begin{cases}
		M_t = I_2,  & 1\leq t\leq \lfloor \frac{n}{2}\rfloor \\
		M_t = diag(\frac{0.4}{0.4+\delta},1), &\lfloor \frac{n}{2}\rfloor< t\leq n
	\end{cases}$$
	\normalsize
	Note that for DGP1 only the Fr\'{e}chet mean of $\{X_t\}$ changes. The size adjusted power against $\delta\in[0,0.8]$ are plotted in Figure \ref{fig_cp_bd} (DM and DZ are not included for $\rho{=}0.4$ because their size adjusted rejection criteria are both zero, due to the large size distortion under the null). For DGP1, SS-SN achieves the best power as compared to other methods when $\rho=-0.4$ and it has similar performance with the best method in comparison when $\rho\in\{0,0.4\}$. Except for $SN_1$, which has trivial power under DGP1, $SN_2$ has the largest power loss when $\rho\in\{-0.4,0\}$. For DGP2, SS-SN has mild power loss when $\rho=-0.4$ and the power loss becomes larger as $\rho$ increases, while it slightly outperforms $SN_2$ when $\rho=0.4$.
	
\begin{figure}
\begin{tikzpicture} 
\matrix (m) [row sep = 0em, column sep = 1.5em]{ 
    \node (p11) {\includegraphics[scale=0.42]{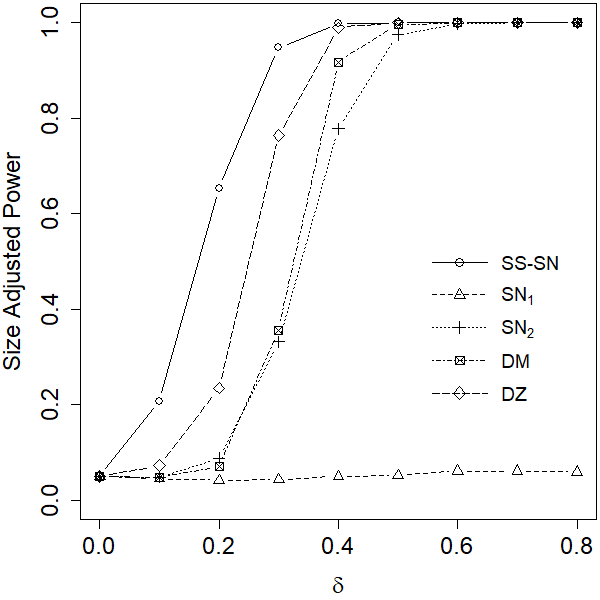}};
    %\node[above = 0cm of p11] (t11) {Manhattan};
    &
    \node (p12) {\includegraphics[scale=0.42]{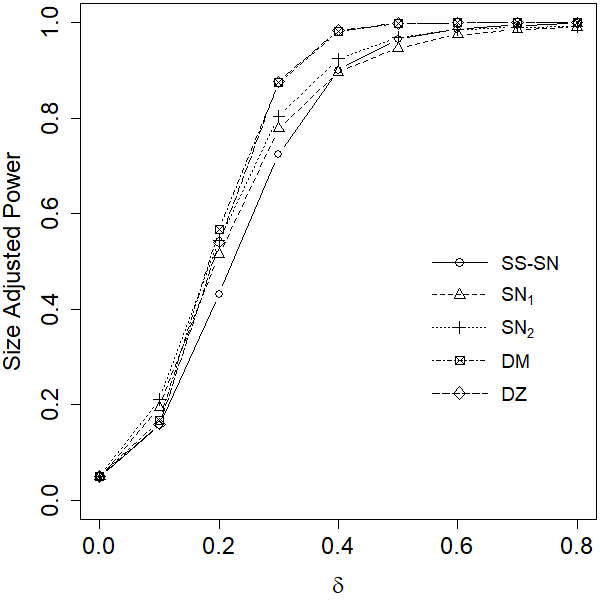}}; 
    %\node[above = 0cm of p12] (12) {Queens};
    \\
     \node (p21) {\includegraphics[scale=0.42]{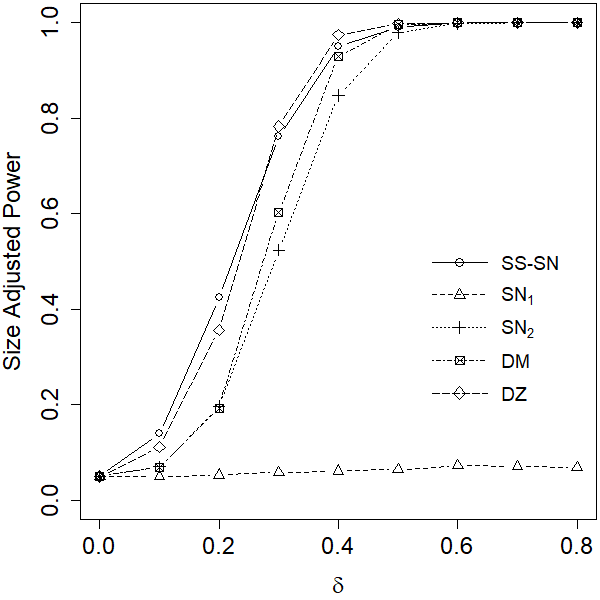}}; 
     %\node[above = 0cm of p21] (t21) {EWR};
     &
     \node (p22) {\includegraphics[scale=0.42]{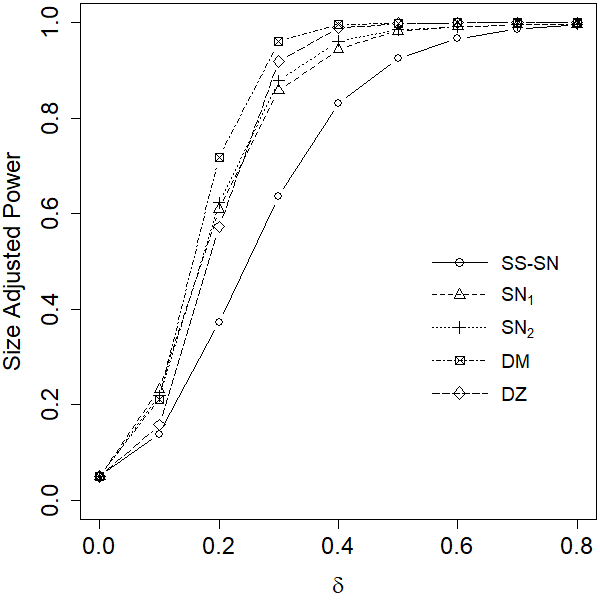}}; 
     %\node[above = 0cm of p22] (t22) {Brooklyn};
     \\
     \node (p31) {\includegraphics[scale=0.42]{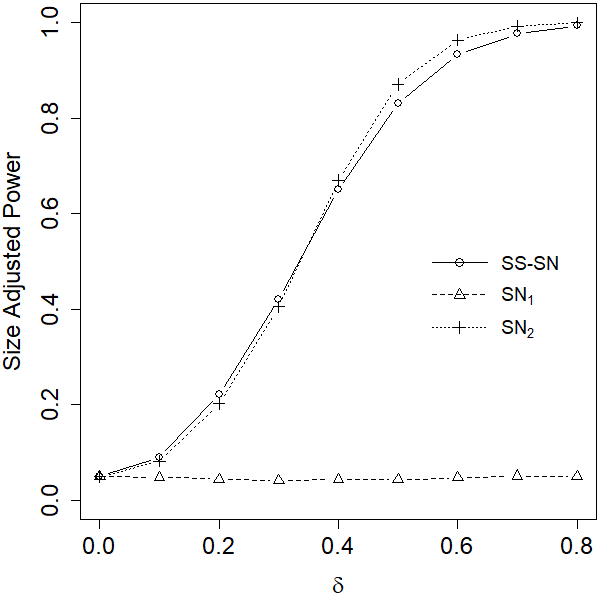}}; 
     %\node[above = 0cm of p31] (t31) {Bronx};
     &
     \node (p32) {\includegraphics[scale=0.42]{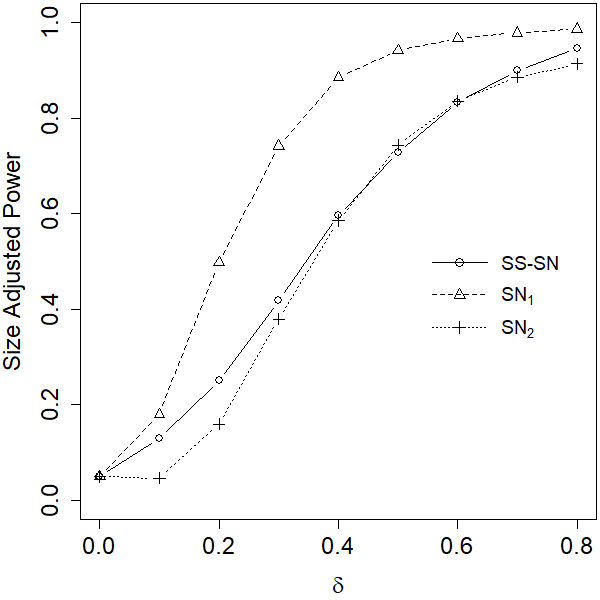}}; 
     %\node[above = 0cm of p32] (t32) {Staten Island};
     \\
};
\end{tikzpicture}
\centering
    \caption{Size adjusted power when testing for change point in marginal distribution of bivariate distributional data from DGP1 (left) and DGP2 (right) with $\rho=-0.4$ (first row), $\rho=0$ (second row) and $\rho=0.4$ (last row).}
    \label{fig_cp_bd}
\end{figure}

	\subsection{Univariate Distribution, Graph Laplacian and Covariance Matrix} \label{sec_simu2}
	
	In this section, we consider three DGPs of object valued time series studied in \cite{jiang2024two}, as well as another DGP where the Fr\'{e}chet mean and variance remain the same under the alternative. 
	The experiments are repeated 1000 times with nominal level at 5\%. For DGP3 and DGP4, $\{X_t\}$ takes value in the space of probability measures on $\R$ equipped with the 2-Wasserstein metric $d_W$. For DGP5, $\{X_t\}$ resides in the space of Graph Laplacian of weighted graphs equipped with Frobenius metric $d_{F}$ and for DGP6, $\{X_t\}$ takes value in the space of positive definite covariance matrices with the log-Euclidean metric $d_E$.
	
	Let $U_t = (U_{t,1},U_{t,2})$ be generated from the VAR(1) process $U_t = \rho I_2 U_{t-1}+\epsilon_t$, where $\{\epsilon_t\}$ are i.i.d standard bivariate normal random vectors, and define $\beta_t \sim^{i.i.d}Bernoulli(\delta_1)$, $e_t \sim^{i.i.d}EXP(1)$ such that $\{\epsilon_t\}$, $\{\beta_t\}$ and $\{e_t\}$ are independent. Assume $X_t = X_t^{(1)}$ for $1\leq t\leq \lfloor n/2 \rfloor$ and $X_t = X_t^{(2)}$ for $1{+} \lfloor n/2 \rfloor\leq t\leq n$, we consider the following four DGPs. DGP3: $X_{t}^{(1)}=\N \left(U_{t, 1}{+}1,1\right)$ and $X_{t}^{(2)}=\N \left((1{-}\beta_t)(U_{t, 1}{+}1){+}\beta_te_t,1\right)$. DGP4: $X_{t}^{(1)}=\N (\arctan (U_{t, 1}),[\arctan (U_{t, 1}^{2}){+}1]^{2})$ and $X_{t}^{(2)}=\N (\arctan (U_{t, 1}){+}\delta_{1}, \delta_{2}^{2}[\arctan (U_{t, 1}^{2}){+}1]^{2})$. DGP5: Graph Laplacians where each graph has five nodes that are categorized into two communities with two and three nodes respectively, and the edge weights for the first community, the second community and between communities are set as $0.4+\arctan (U_{t, 1}^{2}), 0.2+\arctan (U_{t, 2}^{2}), 0.1$ for $X_{t}^{(1)}$ and $\delta_{2}[0.4+\arctan (U_{t, 1}^{2})], \delta_{2}[0.2+\arctan (U_{t, 2}^{2})], 0.1+\delta_{1}$ for $X_{t}^{(2)}$. DGP6: Covariance matrix $X_{t}^{(i)}=\left(2I_{3}+{Z}_{t, i}\right)\left(2I_{3}+{Z}_{t, i}\right)^{\top}$ for $i=1,2$, where all the entries of ${Z}_{t, 1}$ (resp. $Z_{t,2}$) are independent copies of $\arctan \left(U_{t, 1}\right)$ (resp. $\delta_{1}+\delta_{2} \arctan \left(U_{t, 1}\right)$). Since the applicability of both DM and DZ is limited to independent object data, we shall not include them for comparison as they are expected to exhibit large size distortion when temporal dependence is present as in Section~\ref{sec_simu1}; also see \cite{jiang2024two}. 
	
	For the empirical size, we set $\delta_1{=}0$, $\delta_2{=}1$, $n\in\{200,400,800\}$ and $\rho\in\{-0.4,0.4,0,0.7\}$. As shown in Table \ref{tab_cpnon}, SS-SN is mildly over-sized when $\rho{=}0.7$ for DGP3 and DGP6, as well as when $n{=}200$, $\rho{=}0.7$ for DGP4 and DGP5. The size accuracy for $SN_1$ and SS-SN is comparable for $\rho\in\{-0.4,0,0.4\}$, while $SN_1$ has more accurate size when $\rho=0.7$. Similar to the empirical size result in Section \ref{sec_simu1}, $SN_2$ is under-sized for most combinations of $(n,\rho)$, except when $n{=}200$ and $\rho{=}0.7$ for DGP3 and DGP4, in which case $SN_2$ is over-sized.

		For the size adjusted power of DGP4, DGP5 and DGP6, we fix $n{=}400$, $\rho{=-}0.4$ and consider two alternatives. ALT1: Fix $\delta_2=1$ and plot the size adjusted power against $\delta_1\in[0,0.3]$. ALT2: Fix $\delta_1=0$ and plot the size adjusted power against $\delta_2\in[0.7,1]$. For DGP4 and DGP5 (see Section 5 \cite{jiang2024two}), the Fr\'{e}chet mean changes and the Fr\'{e}chet variance remains the same under ALT1, while ALT2 has a break in  Fr\'{e}chet variance with Fr\'{e}chet mean being held fixed. For DGP6, both Fr\'{e}chet mean and variance changes under ALT1 or ALT2. As shown in Figure \ref{fig_non}, SS-SN outperforms $SN_1$ and $SN_2$ in all settings except DGP2 under ALT2. Compared with SS-SN, the power loss for $SN_1$ and $SN_2$ are very large under ALT1 (as in Section \ref{sec_simu1}, $SN_2$ has trivial power under ALT1 for DGP4 and DGP5) and for DGP4 under ALT2, while all three methods have similar performance under ALT2 for DGP6.   
	
	For the size adjusted power of DGP3, we fix $n=800$, $\rho=-0.4$ and the size adjusted power against $\delta_1\in[0,1]$ is plotted in Figure \ref{fig_cp_dgp0}. Note that the Fr\'{e}chet mean and variance are fixed at $\N(1,1)$ and $1$ before and after the change. For this DGP, $SN_1$ and $SN_2$ both have trivial power while SS-SN successfully detected the change in marginal distribution.
\begin{table}
    %	\vspace{1em}
    \centering
    \setlength{\tabcolsep}{3pt}
    %\scalebox{0.95}{
        \begin{tabular}{cc|ccc|ccc|ccc|ccc}
            \toprule
            \midrule
            \multirow{2}{*}{$n$} &	\multirow{2}{*}{$\rho$} & 	\multicolumn{3}{c|}{DGP3} & 	\multicolumn{3}{c|}{DGP4}  & \multicolumn{3}{c|}{DGP5}& \multicolumn{3}{c}{DGP6} \\
            &        & SS-SN&$SN_1$ &$SN_2$   &    SS-SN&$SN_1$ &$SN_2$      &    SS-SN&$SN_1$ &$SN_2$  &    SS-SN&$SN_1$ &$SN_2$      \\\hline
            \multirow{4}{*}{200}&-0.4 &    3.9 &4.1   &3.4  &4.0 &4.3 &3.6 &4.7 &5.3 &   2.1    &3.7    &    5.4    &3.6 \\ \cline{3-14}
            &0    &    4.1 &5.0   &1.5  &3.5 &6.1 &3.2 &6.0 &4.5 &   2.9    &4.6    &    6.0    &2.8          \\ \cline{3-14} 
            &0.4  &    5.1 &6.3   &3.7  &6.3 &5.0 &4.1 &6.5 &4.6 &   3.5    &6.5    &    6.8    &1.8      \\ \cline{3-14} 
            &0.7  &    7.3 &7.6   &10.6 &8.5 &4.2 &10.5&7.4 &4.5 &   5.3    &9.7    &    6.9    &0.2      \\ \cline{2-14} 
            \multirow{4}{*}{400}    &-0.4 &    5.0 &6.2   &5.3  &4.8 &5.1 &3.8 &5.0 &5.4 &   2.2    &4.5       &    6.3 &6.0   \\ \cline{3-14}
            &0    &    4.1 &6.2   &3.3  &4.6 &6.0 &2.5 &5.5 &5.6 &   2.4    &7.0       &    5.7 &4.8      \\ \cline{3-14} 
            &0.4  &    5.4 &6.5   &2.6  &4.7 &6.4 &2.6 &4.6 &5.4 &   2.1    &7.1       &    6.4 &2.0    \\ \cline{3-14} 
            &0.7  &    5.7 &6.5   &6.3  &4.7 &5.2 &7.6 &5.6 &4.7 &   3.0    &8.0       &    7.1 &0.6     \\ \cline{2-14} 
            \multirow{4}{*}{800}&-0.4 &7.1     &5.6  &4.9  &4.6  &5.5 &4.7 &4.1 &5.0   &2.5     &6.0         &5.4     &5.2  \\ \cline{3-14}
            &0    &5.7     &5.3  &3.2  &5.1  &5.0 &3.2 &4.2 &5.2   &2.6     &5.3         &5.1     &4.4       \\ \cline{3-14} 
            &0.4  &6.3     &4.5  &2.2  &5.5  &4.2 &1.6 &5.1 &4.7   &1.7     &5.6         &5.8     &3.4   \\ \cline{3-14} 
            &0.7  &7.2     &4.8  &3.4  &5.4  &5.6 &4.0 &4.8 &5.2   &2.0     &6.1         &5.5     &1.2     \\ \bottomrule		
    \end{tabular}
    %}
    \caption{Empirical rejection rate (in percentage) under the null for DGP3-6.}
    \label{tab_cpnon}
\end{table}
	
 \begin{figure}
\begin{tikzpicture} 
\matrix (m) [row sep = 0em, column sep = 1.5em]{ 
    \node (p11) {\includegraphics[scale=0.42]{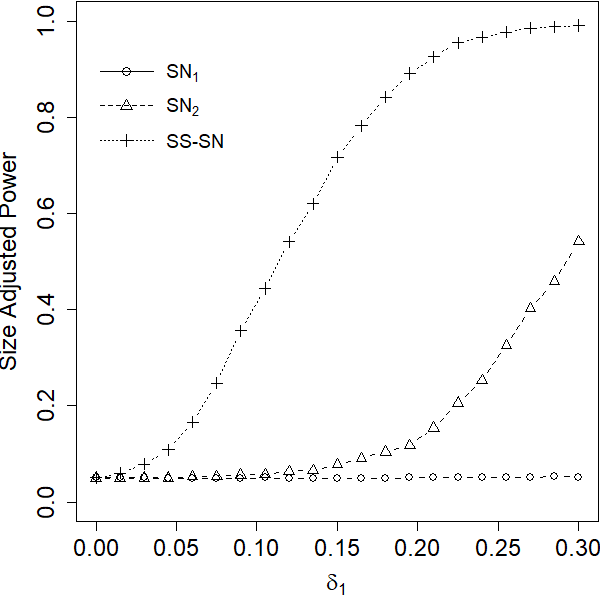}};
    %\node[above = 0cm of p11] (t11) {Manhattan};
    &
    \node (p12) {\includegraphics[scale=0.42]{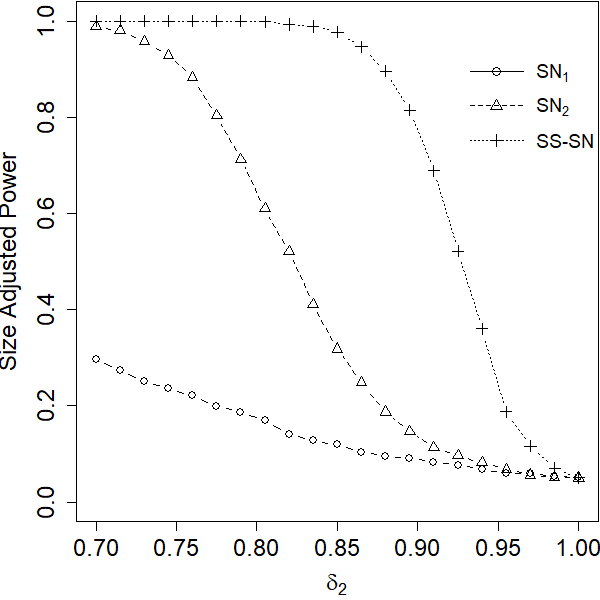}}; 
    %\node[above = 0cm of p12] (12) {Queens};
    \\
     \node (p21) {\includegraphics[scale=0.42]{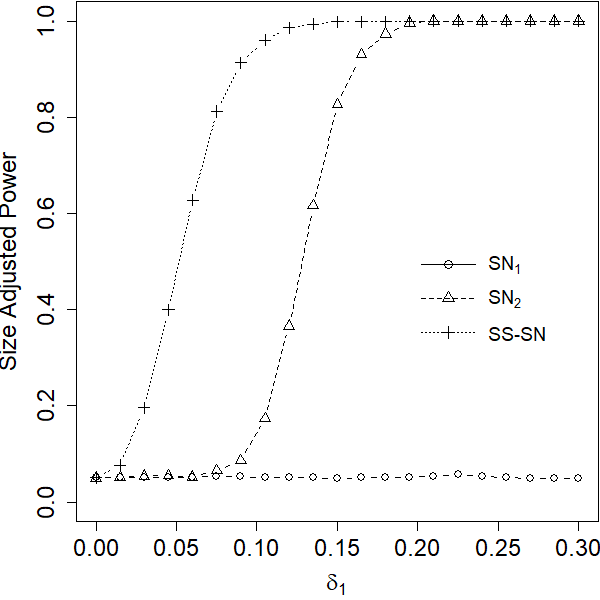}}; 
     %\node[above = 0cm of p21] (t21) {EWR};
     &
     \node (p22) {\includegraphics[scale=0.42]{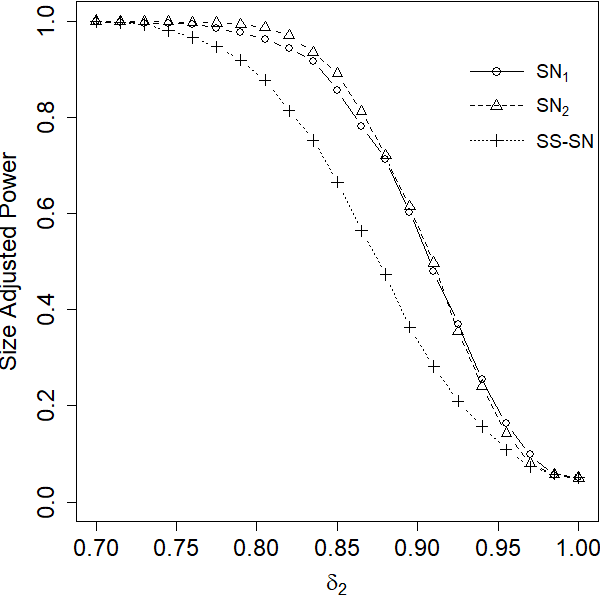}}; 
     %\node[above = 0cm of p22] (t22) {Brooklyn};
     \\
     \node (p31) {\includegraphics[scale=0.42]{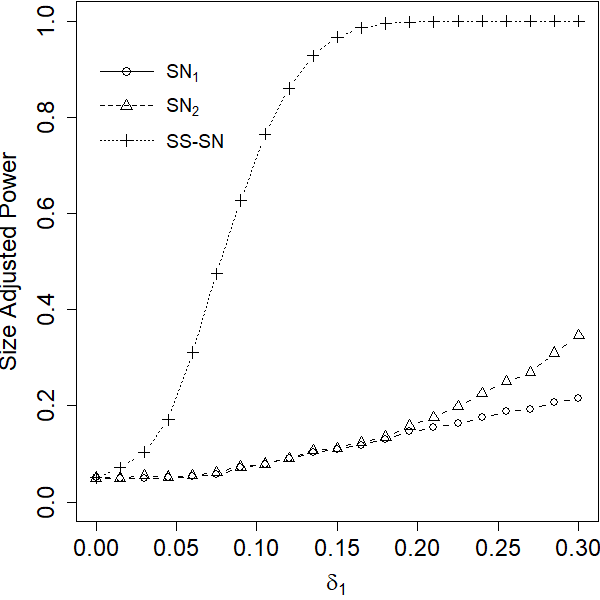}}; 
     %\node[above = 0cm of p31] (t31) {Bronx};
     &
     \node (p32) {\includegraphics[scale=0.42]{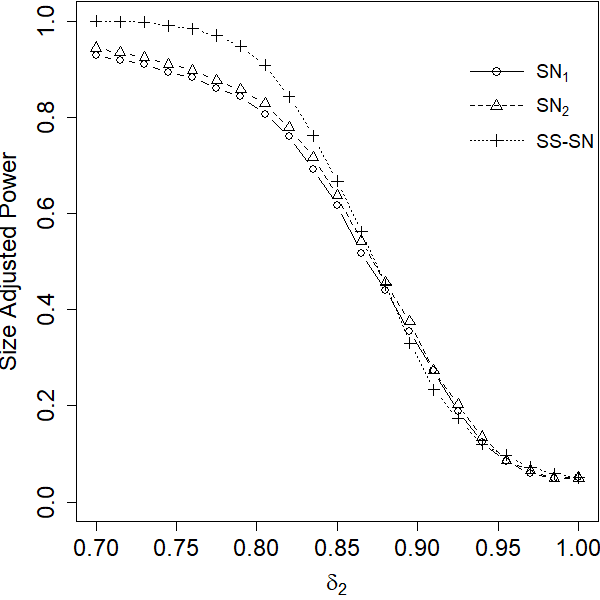}}; 
     %\node[above = 0cm of p32] (t32) {Staten Island};
     \\
};
\end{tikzpicture}
\centering
\caption{Size adjusted power for testing hypothesis on the existence of a change point for ALT1 (first column) and ALT2 (second column) and for DGP4 (first row), DGP5 (second row) and DGP6 (last row).}
\label{fig_non}
\end{figure}

\begin{figure}
    \centering
    \includegraphics[width=.6\textwidth]{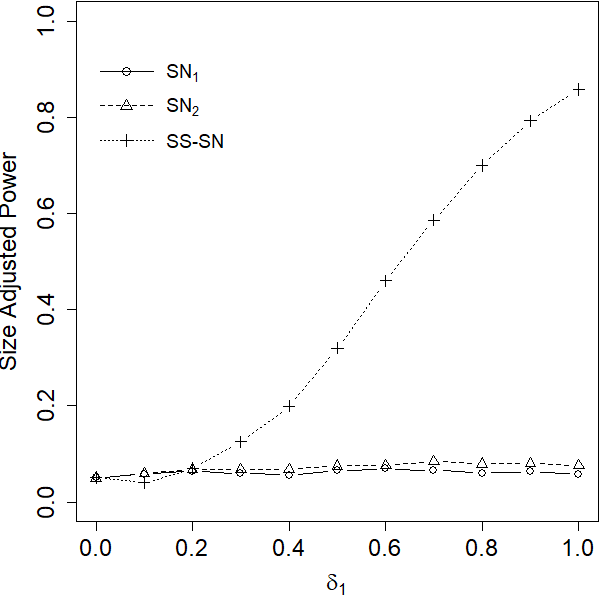}
    \caption{Size adjusted power for testing hypothesis on the existence of a change point for DGP3.}
    \label{fig_cp_dgp0}
\end{figure}

	\subsection{Multiple Change Point Estimation}\label{sec_simu3}
	
	For the setting with multiple change points, we compare our method with \cite{jiang2024two}, which combines WBS and SN$_2$ test statistic proposed therein for the object-valued time series. The sample size is set to be $n=200$ to be comparable with the sample size in applications. Similar to the simulations conducted in \cite{jiang2024two}, we employ Wasserstein distance and consider the following data generating process: for $t = 1,2, \dots, 200,$
	\begin{align*}
		X_t=\mathcal{N}(\arctan (U_{t})+\delta_{t,1},  \delta_{t, 2}^2[\arctan(U_{t}^2)+1]^2), 
	\end{align*}
	where $\{U_{t}\}$ is a AR(1) process, i.e.,  $U_t=\rho U_{t-1}{+}\epsilon_t$, $\epsilon_t\overset{i.i.d.}{\sim}\mathcal{N}(0,1)$; $\{ \delta_{t,1} \}$ and $\{\delta_{t,2}\}$ reflecting change points locations are defined as
	\begin{align*}
		& \delta_{t,1} = a_1 \mathbf{1}_{\{t \leq 65 \} }  + a_2 \mathbf{1}_{\{ 65 < t \leq 135 \} } + a_3 \mathbf{1}_{\{ 135 < t \leq 200 \} }, \\
		& \delta_{t,2} = b_1 \mathbf{1}_{\{ t \leq 65 \}}  + b_2 \mathbf{1}_{\{ 65 < t \leq 135 \}} + b_3 \mathbf{1}_{ \{ 135 < t \leq 200 \} }.
	\end{align*}
	Constants $a_i$ and $b_i$ determine the signal strength and we consider the following two cases: (i) $(a_1, a_2, a_3)=(0, 0.7, 0.1)$ and  $(b_1, b_2, b_3) = (1.2, 1.5, 1.8)$; (ii) $(a_1, a_2, a_3)=(-0.1, 0, 0.1)$ and  $(b_1, b_2, b_3) = (1, 1.6, 0.9)$.  The parameter $\rho$ determines the level of dependency between data points and its effects on the change point detection  are investigated in this simulation. Table \ref{tab:mult1} reports the proportions for detected number of change points and adjusted Rand index (ARI) based on 200 Monte Carlo repetitions. Note that ARI $\in [0, 1]$ measures the accuracy of
	change point estimation and larger ARI corresponds to more accurate estimation. 	We can observe that our method is uniformly better than \cite{jiang2024two} as reflected by ARI. As the performance declines for both methods with increasing values of $\rho$, the method proposed by \cite{jiang2024two} tends to more severely overestimate the number of change points.
	
\begin{table}
		\centering 
		\scalebox{0.89}{
			\begin{tabular}{@{\extracolsep{4pt}} cccccccc} 
				\\[-1.8ex]\hline 
				\hline \\[-1.8ex] 
				\multirow{2}{*}{$\rho$} & \multirow{2}{*}{Case} & \multirow{2}{*}{Method} & \multicolumn{4}{c}{Proportions (\%)} & \multirow{2}{*}{ARI} \\ \cline{4-7}
				& & & 0 & 1 & 2 & $\geq 3$ & \\
				\hline \\[-1.8ex] 
				\multirow{4}{*}{-$0.5$} &\multirow{2}{*}{(i)} & Ours & $0$ & $13.78$ & $84.69$ & $1.53$ & $0.889$ \\ 
				&  & Jiang et al. & $0.51$ & $45.92$ & $51.02$ & $2.55$ & $0.734$ \\ \cline{3-8}
				& \multirow{2}{*}{(ii)} & Ours & $0$ & $2.59$ & $95.85$ & $1.55$ & $0.956$ \\ 
				&  & Jiang et al. & $0$ & $5.70$ & $90.16$ & $4.15$ & $0.915$ \\ \cline{2-8}
				\multirow{4}{*}{$0$} & \multirow{2}{*}{(i)} & Ours & $0$ & $25.64$ & $74.36$ & $0$ & $0.835$ \\ 
				&  & Jiang et al. & $0.51$ & $37.43$ & $61.54$ & $0.51$ & $0.772$ \\  \cline{3-8}
				& \multirow{2}{*}{(ii)} & Ours & $0$ & $0$ & $99.5$ & $0.5$ & $0.971$ \\ 
				&  & Jiang et al. & $0$ & $4.5$ & $91.5$ & $4$ & $0.936$ \\ \cline{2-8} 
				\multirow{4}{*}{$0.5$} & \multirow{2}{*}{(i)} & Ours & $5$ & $48.5$ & $43.5$ & $3$ & $0.639$ \\ 
				&  & Jiang et al. & $9$ & $46.5$ & $36$ & $8.5$ & $0.589$ \\ \cline{3-8}
				& \multirow{2}{*}{(ii)} & Ours & $0$ & $8.67$ & $81.12$ & $10.20$ & $0.907$ \\ 
				&  & Jiang et al. & $2.55$ & $16.33$ & $62.76$ & $18.37$ & $0.796$ \\ \cline{2-8}
				\multirow{4}{*}{$0.7$} & \multirow{2}{*}{(i)} & Ours & $9$ & $43$ & $34$ & $14$ & $0.549$ \\ 
				&  & Jiang et al. & $10.5$ & $36$ & $32$ & $21.5$ & $0.525$ \\ \cline{3-8} 
				& \multirow{2}{*}{(ii)} & Ours & $1.51$ & $15.08$ & $57.29$ & $26.13$ & $0.800$ \\ 
				&  & Jiang et al. & $3.02$ & $23.62$ & $43.22$ & $30.15$ & $0.663$ \\ 
				\hline \\[-1.8ex] 
		\end{tabular} 
            }
		\caption{Simulation results for multiple change point detection problems based on 200 Monte Carlo repetitions. Proportions for detected number of change points and adjust Rand index (ARI) are reported.} 
		\label{tab:mult1} 
	\end{table}

	\section{Applications}\label{sec5_real}
	In this section, we apply our WBS-SN algorithm to datasets arising from the stock market and the taxi traffic in New York city. We set $b=0.15$, $M=50$ and $L_0=20$ for the WBS-SN algorithm.
	
	\subsection{Stock Return Data}\label{sec5_1}
	Density function for log-returns of S\&P 500 companies can reflect the overall stock market situation. In this study, we are interested in monitoring abrupt changes in marginal distributions of the time series of log-return densities. The daily log-returns of a stock can be downloaded using R package ``BatchGetSymbols" \citep{BatchGetSymbols}. For each month from year 2015 to year 2022, a density function is constructed from the daily log-returns of companies listed in S\&P 500, resulting in a total of 96 density functions. These densities are constructed using a kernel density estimator with the bandwidth set to be one-tenth of range between 1st and 99th percentiles of all log-returns. The heat map is shown in Figure \ref{fig:stock} with each column representing a density function evaluated on 51 grid points. Our test indicates significant changes in the marginal distributions of the monthly density time series before and after Jan. 2020, aligning with the onset of 2020 stock market downturn and increasing volatility due to the COVID-19 pandemic. 
	%For more details about 2020 stock market crash, we refer to the Wikipedia record at \url{https://en.wikipedia.org/wiki/2020_stock_market_crash}.
	
\begin{figure}
    \centering
    \begin{tikzpicture}
        \node (p1)  {\includegraphics[width=0.8\textwidth]{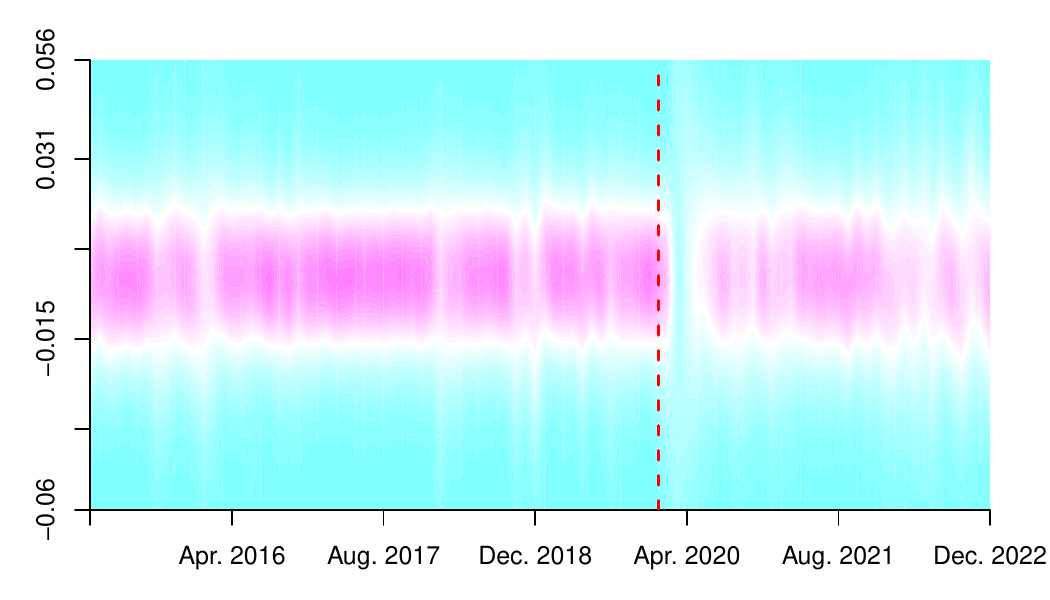}};
        \node[rotate=0,  right=-0.7cm of p1] (legend) {\includegraphics[scale=0.2]{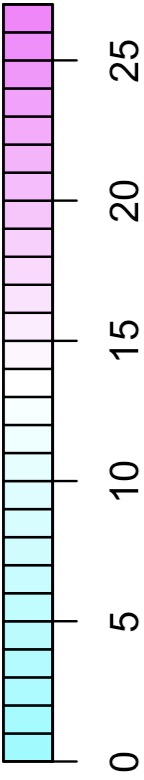}} ;
    \end{tikzpicture}
    \caption{Heat plot of 96 density functions. Each column represents a density function. The vertical dashed red line represents the detected change point location by our method.}
    \label{fig:stock}
\end{figure}

\subsection{TLC Trip Record Data}\label{sec5_2}
In this section, we apply our WBS-SN algorithm to datasets arising from the taxi traffic in New York city. To be specific, we aim to analyze the taxi traffic patterns in New York city through a change point perspective. The taxi traffic situations can be summarized using daily networks. Utilizing the TCL trip record data (\url{https://www.nyc.gov/site/tlc/about/tlc-trip-record-data.page}), for each day between January 2023 and June 2023, we construct a weighted undirected network comprising seven nodes. Six of these nodes represent Newark Airport (EWR), Queens, Bronx, Manhattan, Staten Island, Brooklyn in New York, while the seventh node represents areas outside these regions. For any two nodes $i,j \in \{ \text{EWR}, \text{Queens}, \text{Bronx}, \text{Manhattan}, \text{Staten Island}, \text{Brooklyn}, \\\text{Unknown} \}$, let $m_{ij, t}$, represents the number of taxi trips originating in one of the two nodes and ending in the other on day $t$. The weight $w_{ij,t}$ on the edge of the network on day $t$ between node $i$ and node $j$  is set to be $w_{ij,t} = m_{ij, t}/\max_t \{ m_{ij,t} \}$ if $\max_t \{ m_{ij,t} \}>0$. Self-edges, i.e., trips that start and end in the same region, are excluded from our analysis. This yields a network time series spanning 181 days. A selection of these daily networks are shown in Figure \ref{fig:taxi} with the edge width corresponding to the magnitude of $w_{ij,t}$. Two change points, located on January 31 and April 30, are identified by our method, indicating significant differences in the marginal distributions of the network time series before and after these dates. To better interpret the result, we plot the edge weights $\{w_{ij, t} : j \neq i \}$ as multivariate time series for each region in Figure \ref{fig:taxi2}. For Manhattan, a shift in mean of the edge weights are clearly observed before and after each of the change point locations.

\begin{figure}
    \centering
    \begin{tikzpicture}
        \matrix (m) [row sep = 0em, column sep = - 1.2em]{    
            \node (p11) {\includegraphics[scale=0.4]{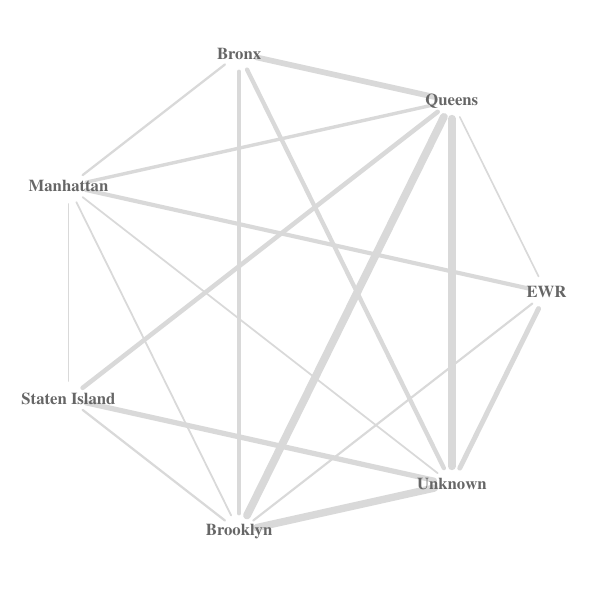}}; 
            \node[above left = -4.4cm and -2.8cm of p11] (t12) {Jan 1}; &
            \node (p12) {\includegraphics[scale=0.4]{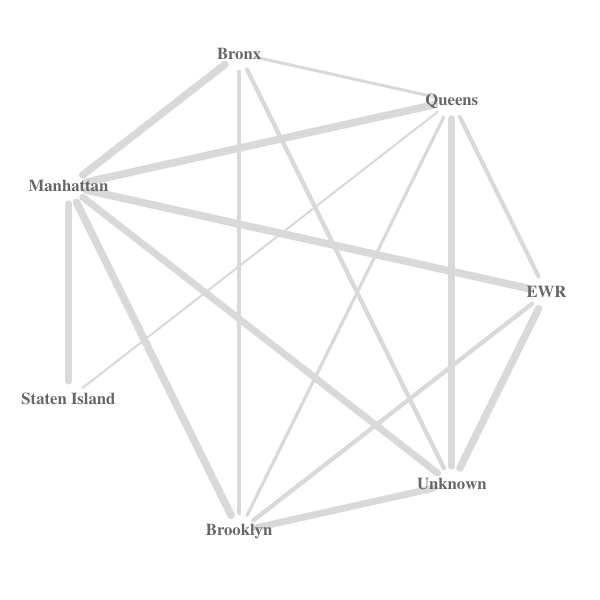}};
            \node[above left = -4.4cm and -2.8cm of p12] (t21) {Jan 13}; &
            \node (p13) {\includegraphics[scale=0.4]{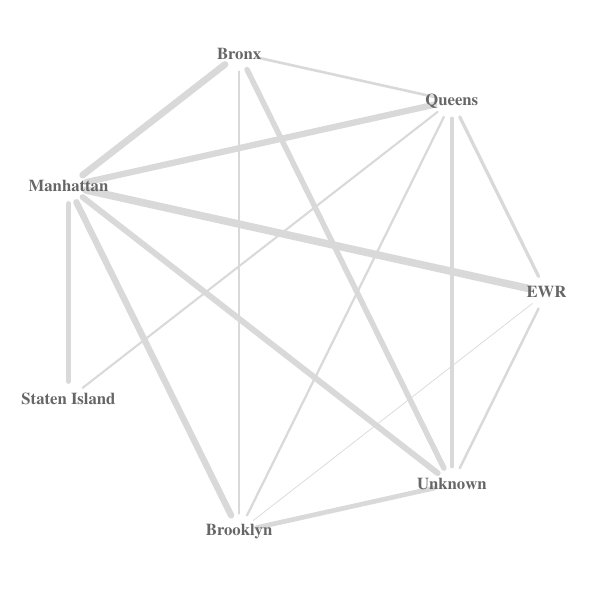}}; 
            \node[above left = -4.4cm and -2.8cm of p13] (t22) {Jan 25};
            &
            \node (p14) {\includegraphics[scale=0.4]{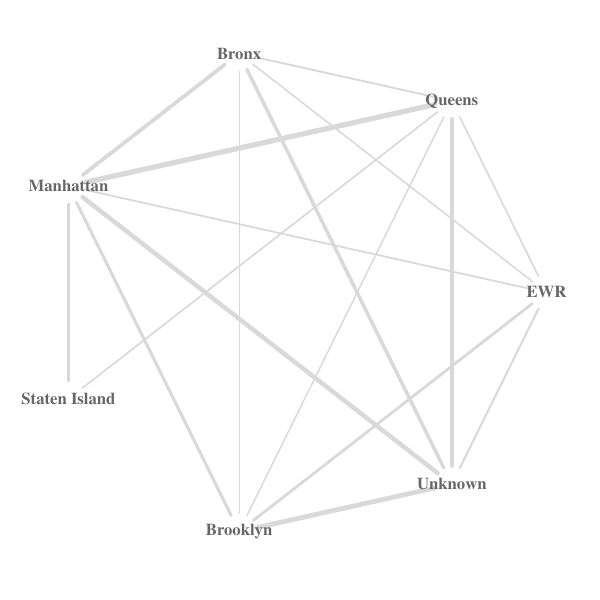}}; 
            \node[above left = -4.4cm and -2.8cm of p14] (t22) {Feb 6};
            \\ 
            \node (p21) {\includegraphics[scale=0.4]{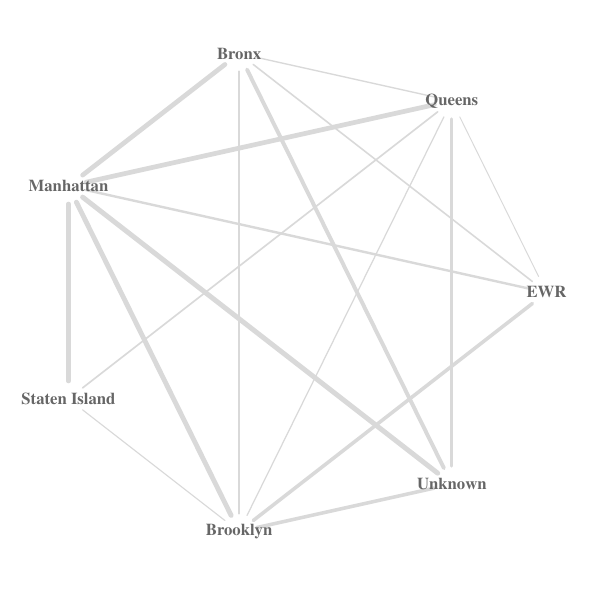}};
            \node[above left = -4.4cm and -2.8cm of p21] (t31) {Feb 18}; &
            \node (p22) {\includegraphics[scale=0.4]{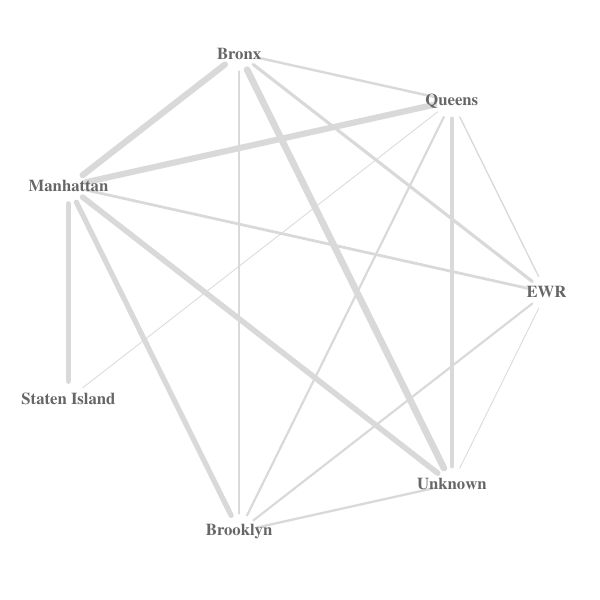}}; 
            \node[above left = -4.4cm and -2.8cm of p22] (t32) {Mar 2}; & 
            \node (p23) {\includegraphics[scale=0.4]{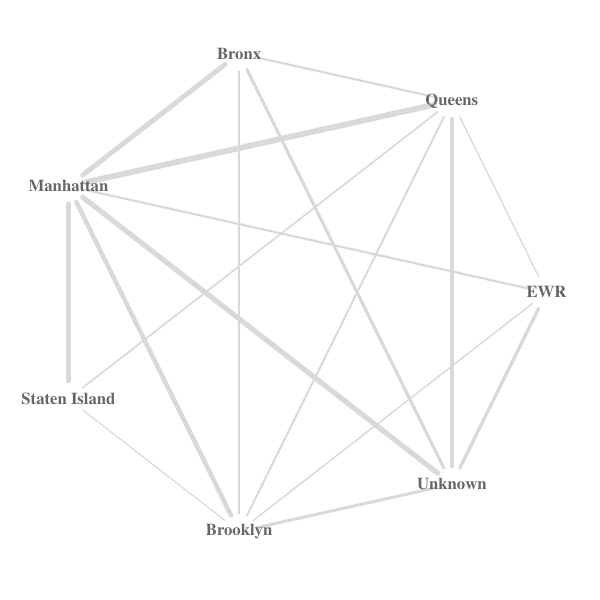}};
            \node[above left = -4.4cm and -2.8cm of p23] (t11) {Mar 14}; 
            & 
            \node (p24) {\includegraphics[scale=0.4]{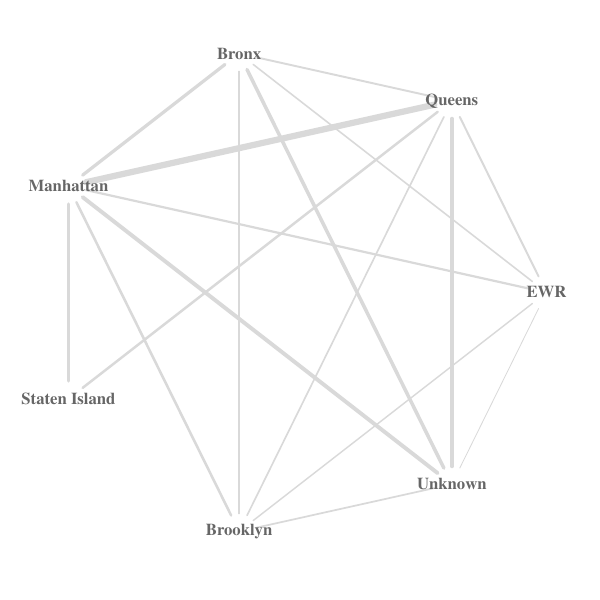}};
            \node[above left = -4.4cm and -2.8cm of p24] (t11) {Mar 26}; 
            \\
            \node (p31) {\includegraphics[scale=0.4]{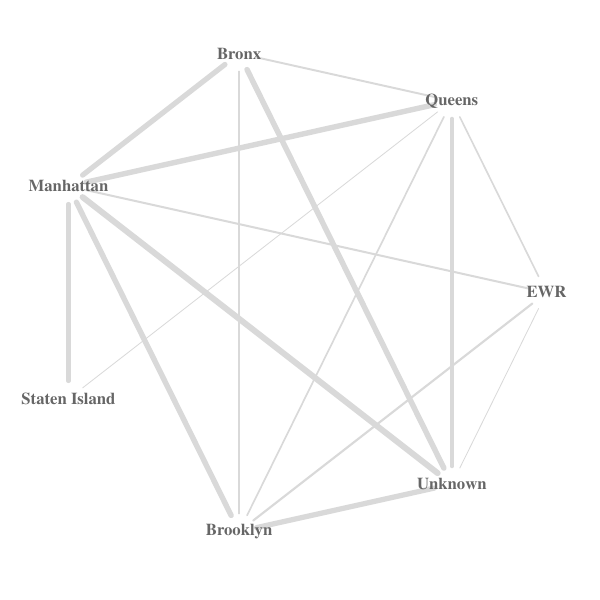}};
            \node[above left = -4.4cm and -2.8cm of p22] (t22) {Apr 7};
            & \node (p32) {\includegraphics[scale=0.4]{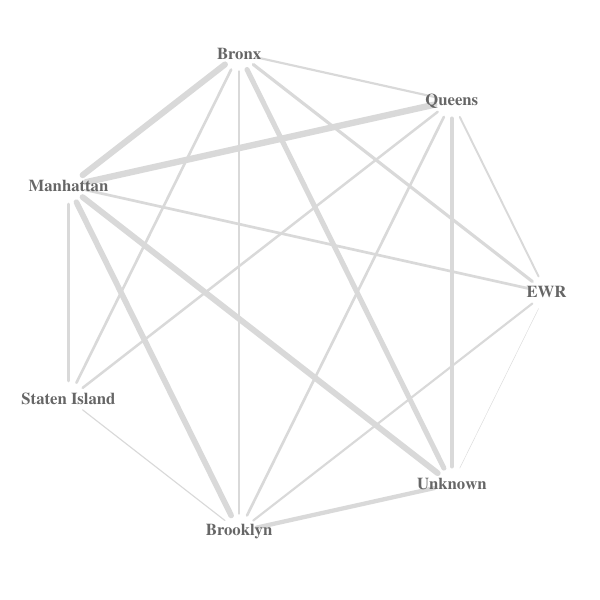}};
            \node[above left = -4.4cm and -2.8cm of p32] (t31) {Apr 19};
            & \node (p33) {\includegraphics[scale=0.4]{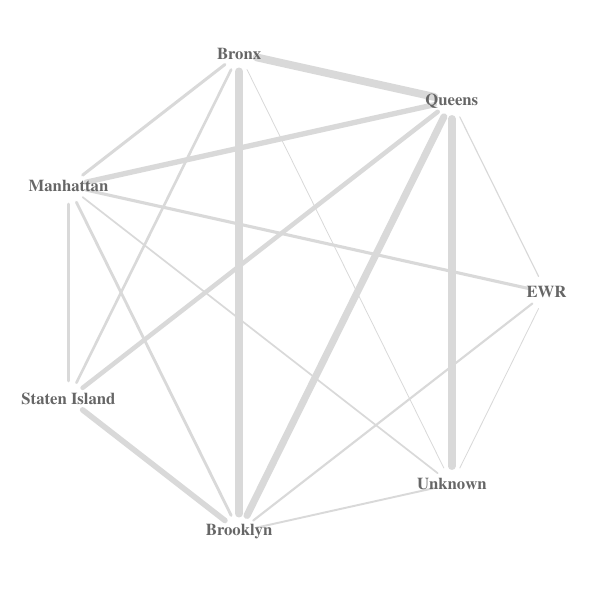}}; 
            \node[above left = -4.4cm and -2.8cm of p33] (t32) {May 1};
            & \node (p34) {\includegraphics[scale=0.4]{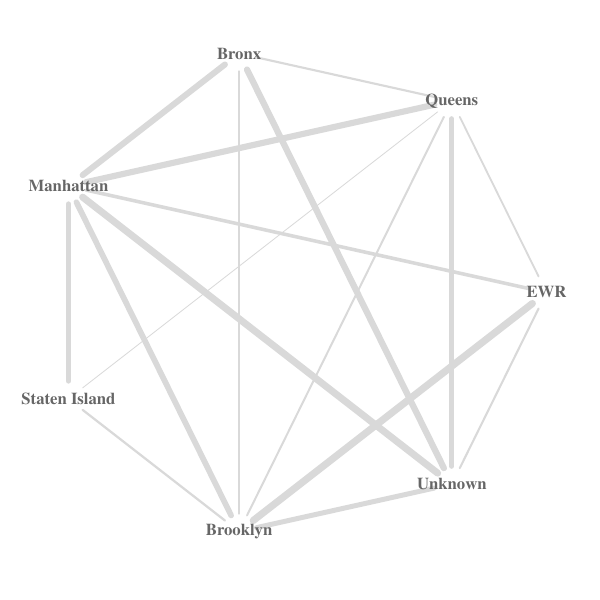}}; 
            \node[above left = -4.4cm and -2.8cm of p34] (t32) {May 13};
            \\
            \node (p41) {\includegraphics[scale=0.4]{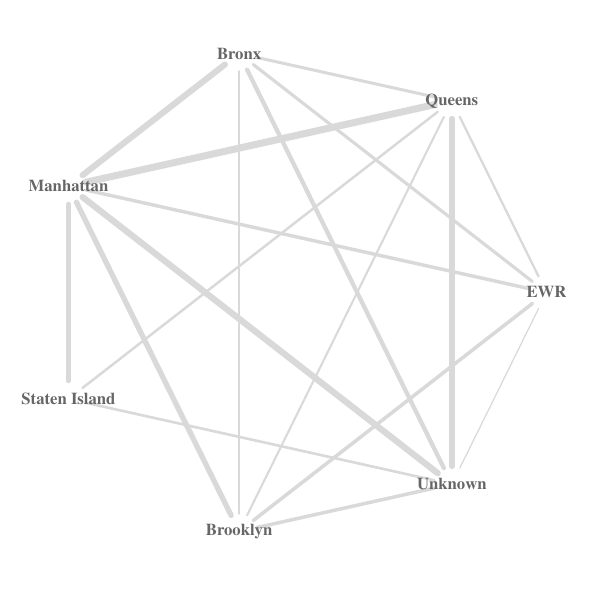}};
            \node[above left = -4.4cm and -2.8cm of p22] (t22) {May 25};
            & \node (p42) {\includegraphics[scale=0.4]{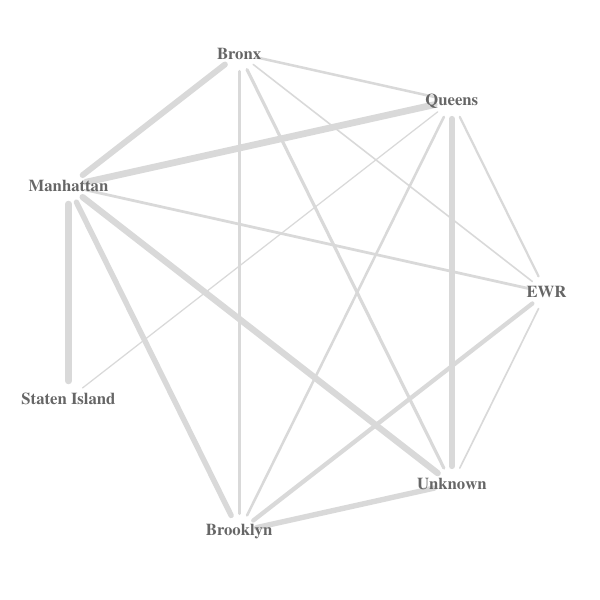}};
            \node[above left = -4.4cm and -2.8cm of p32] (t31) {Jun 6};
            & \node (p43) {\includegraphics[scale=0.4]{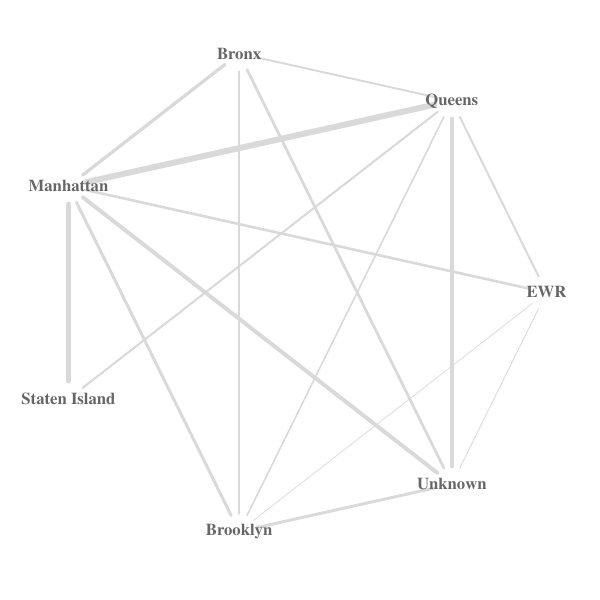}}; 
            \node[above left = -4.4cm and -2.8cm of p33] (t32) {Jun 18};
            & \node (p44) {\includegraphics[scale=0.4]{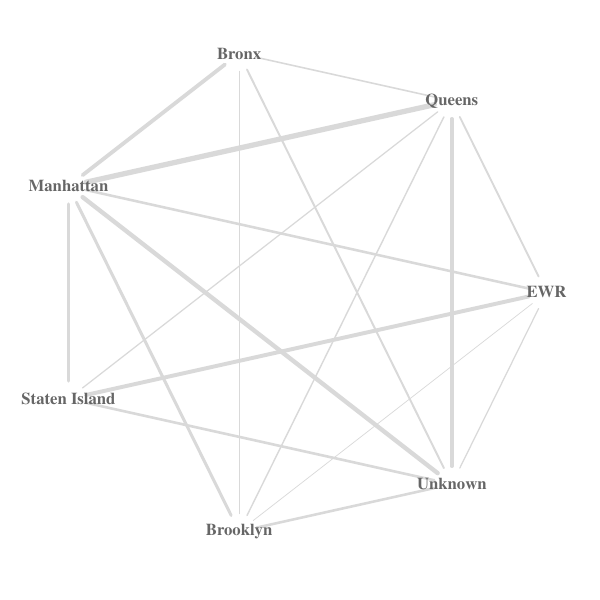}}; 
            \node[above left = -4.4cm and -2.8cm of p34] (t32) {Jun 30};
            \\
        };
        %\node[rotate=270,  below right=  -0.6cm and 0cm of p32] (legend) {\includegraphics[scale=0.25]{laxlegend.jpeg}} ;
    %	\node[above= -0.0cm and 0cm  of m] (title) {Time Series of Taxi Trip Networks} ;
    \end{tikzpicture}
    \centering
    \caption{Visualization of weighted undirected taxi-trip networks indexed by date.}
    \label{fig:taxi}
\end{figure}

\begin{figure}
\begin{tikzpicture} 
\matrix (m) [row sep = 1em, column sep = -0.0em]{ 
    \node (p11) {\includegraphics[scale=0.44] {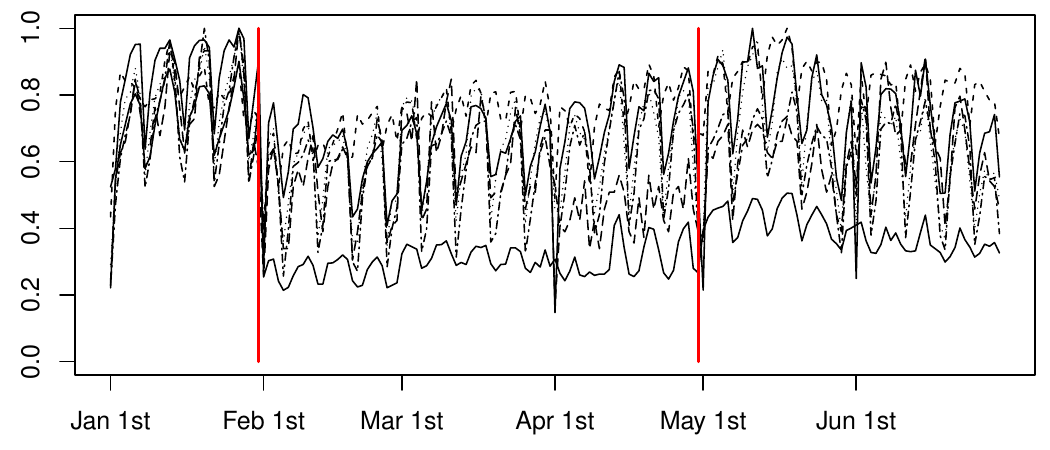}};
    \node[above = 0cm of p11] (t11) {Manhattan};
    &
    \node (p12) {\includegraphics[scale=0.44]{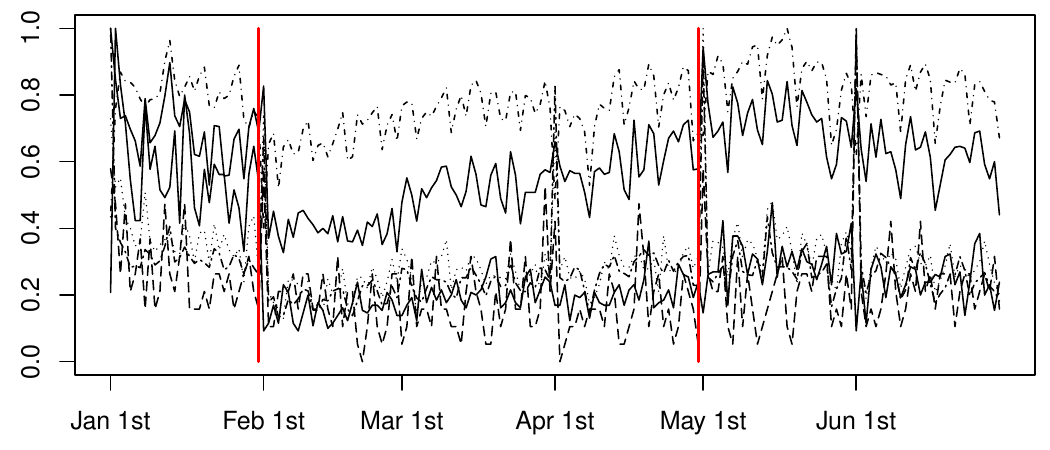}}; 
    \node[above = 0cm of p12] (12) {Queens};
    \\
     \node (p21) {\includegraphics[scale=0.44]{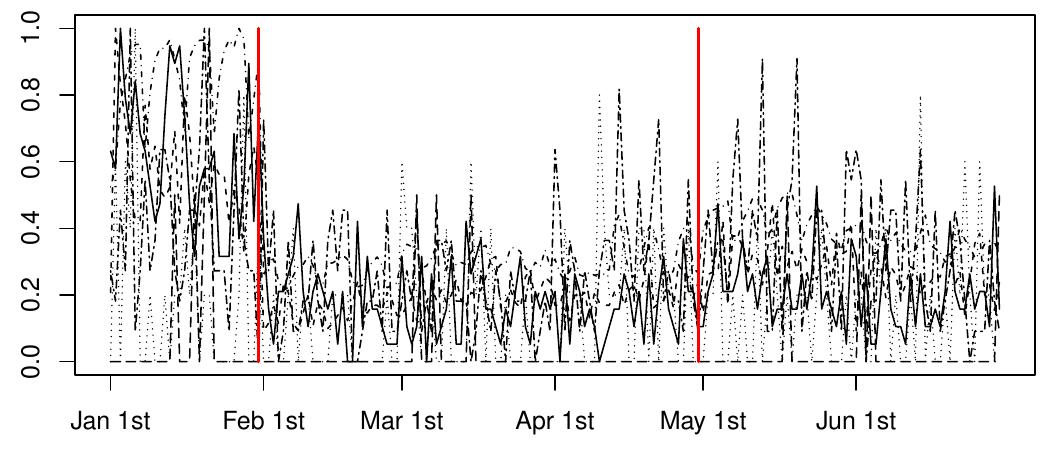}}; 
     \node[above = 0cm of p21] (t21) {EWR};
     &
     \node (p22) {\includegraphics[scale=0.44]{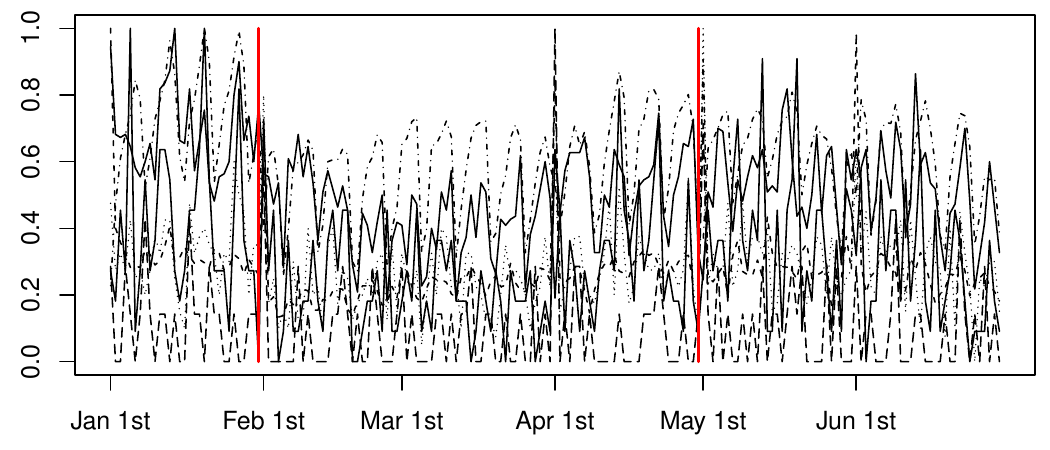}}; 
     \node[above = 0cm of p22] (t22) {Brooklyn};
     \\
     \node (p31) {\includegraphics[scale=0.44]{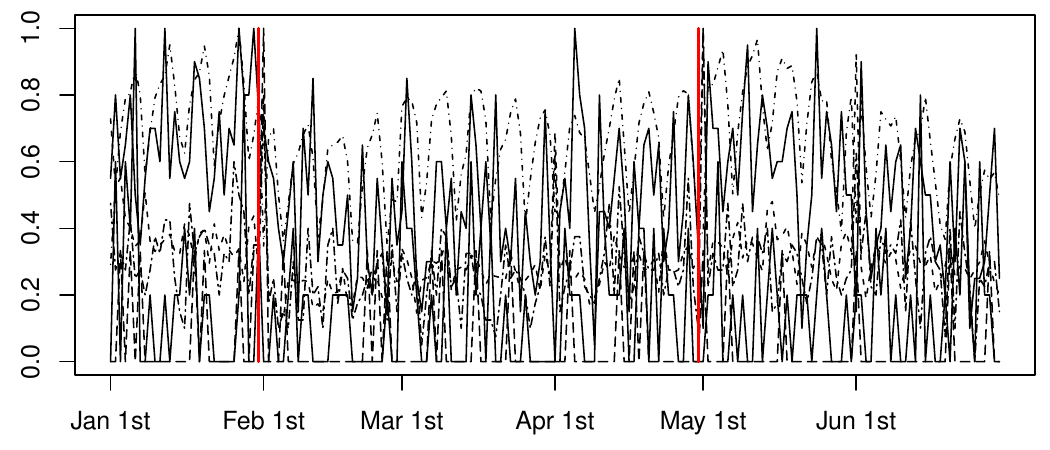}}; 
     \node[above = 0cm of p31] (t31) {Bronx};
     &
     \node (p32) {\includegraphics[scale=0.44]{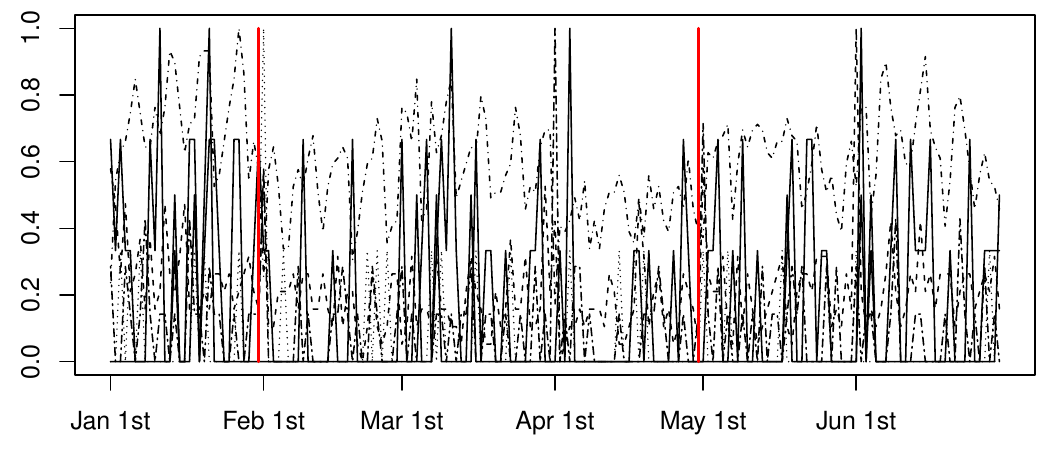}}; 
     \node[above = 0cm of p32] (t32) {Staten Island};
     \\
};
\end{tikzpicture}
\centering
\caption{For each node $i \in \{\text{Manhattan}, \text{Queens}, \text{EWR}, \text{Brooklyn},  \text{Bronx},  \text{Staten Island} \}$, the edge weights $\{w_{ij, t} : j \neq i \}$ are plotted as multivariate time series.}
\label{fig:taxi2}
\end{figure}

\section{Conclusion}\label{sec6conclu}
	
	In this paper, we propose a novel change point detection method for a broad class of object-valued time series that can detect marginal distributional changes. By using Hilbert space embedding, sample splitting and self-normalization, the statistic we developed is almost tuning parameter free, has pivotal limiting null distribution and only uses the distance between two non-Euclidean data objects. For multiple change point estimation, the estimation consistency and convergence rate of change point estimators are established for the proposed WBS-SN algorithm,  and the convergence rates for each change point estimator are adaptive to the heterogeneous change magnitudes. To the best of our knowledge, we are the first to show the WBS estimation consistency for a broad class of object-valued time series and in a nonparametric setting. Extensive simulations are conducted to show the accurate size and satisfactory power performance of the proposed method in comparison with existing ones. The data examples further illustrate the versatility of our proposed method in real-world applications. 
	
To conclude, we mention several possible extensions. For multiple change point estimation, this paper only considers unknown but fixed number of change points. It would be helpful to consider the setting where the number of change points can  grow with sample size. Also, it may be desirable to relax the abrupt change assumption and develop a test that captures gradual change. 
	
	%allow the parameter to vary smoothly within each segment.
	
\section*{Data Availability Statement}
All data used in this research are publicly available. The stock return data can be freely downloaded from Yahoo Finance (\url{https://finance.yahoo.com/}) and the TCL trip record data is available at \url{https://www.nyc.gov/site/tlc/about/tlc-trip-record-data.page}.
	
%\section*{Acknowledgement}
%Zhu's research is partially supported by NSF grant DMS-2412832.
%Shao's research is partially supported by NSF grants DMS-2210002 and  DMS-2412833 and research board grants at University of Illinois.

	%\newpage
		{\fontsize{12.95}{12.95}\selectfont % \footnotesize \scriptsize
	  \setlength{\parskip}{0pt}
	\setlength{\bibsep}{0pt plus 0.24ex}
	
\bibliographystyle{apalike}
	
\bibliography{Bib_SS_infty}
	}
	%\printbibliography

%\begin{comment}
	\newpage
	\bigskip

	\begin{center}
		{\large\bf SUPPLEMENT TO ``Change-Point Detection for Object-valued Time Series''}
	\end{center}
	
	\begin{center}
		%{BY Yi Zhang, Changbo Zhu AND Xiaofeng Shao}
	\end{center}

The supplement is organized as follows. Appendix \ref{app_th_cp} and \ref{appen_proof_est} contain the proofs for Theorem \ref{theorem_cp} and \ref{cp_est} separately. The estimation consistency of WBS-SN algorithm (Theorem \ref{cp_est_wbs}) is shown in Appendix \ref{app_th_wbs}.

	\appendix
	
	\renewcommand{\theequation}{A-\arabic{equation}}
	% redefine the command that creates the equation no.
	\setcounter{equation}{0}  % reset counte

	\section{Proof of Theorem \ref{theorem_cp}}\label{app_th_cp}
	
	\begin{proof}
		
		First we prove part (i) of Theorem \ref{theorem_cp}. By simple calculation we can see that
		\begin{align}
			\bar Y_1-\bar Y_3 \stackrel{D}{\to} B_Q(b)-\big[B_Q(1)-B_Q(1-b)\big]{=}\tilde Y.\nonumber
		\end{align}
		Also, let $k= \lfloor \tau n \rfloor $ for some $\tau\in(b,1-b)$, by the continuous mapping theorem, we can derive
		\begin{align}
			T_n( k)& \stackrel{D}{\to}\frac{1}{\sqrt{1{-}2b}} \Big\{\li \tilde Y, B_Q(\tau)-B_Q(b)\ri_\cH-\frac{\tau-b}{1-2b}\li\tilde Y,B_Q(1-b)-B_Q(b)\ri_\cH \Big\} \nonumber\\ 
			& {=}T(\tilde Y,\tau) ~\mbox{and}\nonumber 
		\end{align}
		\begin{align}
			V_n( k) \stackrel{D}{\to}& \frac{1}{(1-2b)^2} \Big\{\int_b^{\tau} \big\{\li\tilde Y,B_Q(s)-B_Q(b)\ri_\cH - \frac{(s-b)}{\tau-b}\li\tilde Y,B_Q(\tau)-B_Q(b)\ri_\cH\big\}^2 ds  \nonumber\\
			&\qquad\qquad + \int_{\tau}^{1-b} \big\{\li\tilde Y,B_Q(1-b)-B_Q(s)\ri_\cH - \frac{1-b-s}{1-b-\tau} \li\tilde Y,B_Q(1-b)-B_Q(\tau)\ri_\cH \big\}^2 ds\Big\} \nonumber\\
			{=}&V(\tilde Y,\tau),\nonumber
		\end{align}
		so the limiting distribution of $G_n$ is $\sup_{\tau \in [b,1-b]}\frac{T(\tilde Y,\tau)}{\sqrt{V(\tilde Y,\tau)}}$. By the orthogonal increment property of Brownian motion, we have $\{B_Q(r)\}_{r\in[0,b]}$, $\{B_Q(s)-B_Q(b)\}_{s\in[b,\tau]}$, $\{B_Q(1-b)-B_Q(t)\}_{t\in[\tau,1-b]}$ and $\{B_Q(x)-B_Q(1-b)\}_{x\in[1-b,1]}$ are independent, which implies that the conditional distribution of $\sup_{\tau \in [0,1]}\frac{T(\tilde Y,\tau)}{\sqrt{V(\tilde Y,\tau)}}$ given $\tilde Y$ is 
		
		\footnotesize 
		\begin{align}
			&\sup_{\tau \in [b,1-b]}\frac{T(\tilde Y,\tau)}{\sqrt{V(\tilde Y,\tau)}} \Big| \tilde Y \stackrel{d}{=} \sup_{\tau \in [b,1{-}b]}\frac{\frac{T(\tilde Y,\tau)}{{ \sqrt{\li Q\tilde Y,\tilde Y\ri_\cH}}}}{\sqrt{\frac{V(\tilde Y,\tau)}{{ \li Q\tilde Y,\tilde Y\ri_\cH}}}} \Big| \tilde Y \nonumber\\
			\stackrel{d}{=} &\sup_{\tau \in [b,1{-}b]}  \frac{\sqrt{1{-}2b}\Big[B(\tau){-}B(b){-} \frac{\tau{-}b}{1{-}2b}(B(1{-}b){-}B(b))  \Big]}  {\Big\{\int_{b}^{\tau} \big[B(s){-}B(b){-}\frac{s{-}b}{\tau{-}b}(B(\tau){-}B(b))\big]^2 ds + \int_{\tau}^{1{-}b}\big[ B(1{-}b){-}B(s){-}\frac{1{-}b{-}s}{1{-}b{-}\tau}(B(1{-}b){-}B(\tau))    \big]^2 ds    \Big\}^{1/2}   }  \label{cpeq11}\\
			\stackrel{d}{=}& \sup_{\tau \in [b,1{-}b]}  \frac{\sqrt{1{-}2b}\Big[B(\tau){-}B(b){-} \frac{\tau{-}b}{1{-}2b}(B(1{-}b){-}B(b))  \Big]}  {\Big\{\int_{0}^{\tau{-}b} \big[B(s+b){-}B(b){-}\frac{s}{\tau{-}b}(B(\tau){-}B(b))\big]^2 ds + \int_{\tau{-}b}^{1{-}2b}\big[B(1{-}b){-}B(s+b){-}\frac{1{-}2b{-}s}{1{-}b{-}\tau}(B(1{-}b){-}B(\tau))    \big]^2 ds   \Big\}{1/2}    }  \label{cpeq21}\\ 
			\stackrel{d}{=}& \sup_{\tau \in [b,1{-}b]}  \frac{\sqrt{1{-}2b}\Big[B(\tau{-}b){-} \frac{\tau{-}b}{1{-}2b}(B(1{-}2b)) \Big]}  {\Big\{\int_{0}^{\tau{-}b} \big[B(s){-}\frac{s}{\tau{-}b}B(\tau{-}b)\big]^2 ds + \int_{\tau{-}b}^{1{-}2b}\big[B(1{-}2b){-}B(s){-}\frac{1{-}2b{-}s}{1{-}b{-}\tau}(B(1{-}2b){-}B(\tau{-}b)) \big]^2 ds\Big\}^{1/2} } \label{cpeq31}\\
			\stackrel{d}{=}& \sup_{r \in [0,1]}  \frac{\sqrt{1{-}2b}\Big[B((1{-}2b)r){-} rB(1{-}2b) \Big]}  {\Big\{\int_{0}^{(1{-}2b)r} \big[B(s){-}\frac{s}{(1{-}2b)r}B((1{-}2b)r)\big]^2 ds + \int_{(1{-}2b)r}^{1{-}2b}\big[B(1{-}2b){-}B(s){-}\frac{1{-}2b{-}s}{(1{-}2b)(1{-}r)}(B(1{-}2b){-}B((1{-}2b)r)) \big]^2 ds\Big\}^{1/2} } \label{cpeq41} \\
			\stackrel{d}{=}& \sup_{r \in [0,1]}  \frac{B(r){-} rB(1) }  {\Big\{\int_{0}^{r} \big[B(s){-}\frac{s}{r}B(r)\big]^2 ds + \int_{r}^{1}\big[B(1){-}B(s){-}\frac{1{-}s}{1{-}r}(B(1){-}B(r)) \big]^2 ds \Big\}^{1/2}}. \label{cpeq51}
		\end{align}
		\normalsize
		For Equation (\ref{cpeq11}), we used the fact that for any fixed $y\in\cH$, $\frac{T(y,\tau)}{\sqrt{V( y,\tau)}}$ is independent of $ \tilde Y$. For Equation (\ref{cpeq21}) we used the change of variable property for integration. For Equation (\ref{cpeq31}) we used the property $\{B(r+b)-B(b):r\in[0,1-2b]\}\stackrel{d}{=}\{B(r):r\in[0,1-2b]\}$. For Equation (\ref{cpeq41}) we substitute $\tau-b$ with $(1-2b)r$ and for Equation (\ref{cpeq51}) we used the change of variable property for integration and the fact $\{B((1-2b)r):r\in[0,1]\} \stackrel{d}{=}\{\sqrt{1-2b}B(r):r\in[0,1]\}$.
		
		To prove part (ii).1 of Theorem \ref{theorem_cp}, denote $Y_t' = Y_t-\mu_t$, $\bar Y'_1 =\frac{1}{\sqrt{n}} \sum_{t=1}^{\fb}Y'_t$, $\bar Y'_3 = \frac{1}{\sqrt{n}} \sum_{t=n-\fb+1}^{n} Y'_t$ and $Z'_t=\li\bar Y'_1-\bar Y'_3,Y'_t\ri_\cH$. Let $S_{a,b}^A=\sum_{t=a}^{b}A_t$ for $A=Y',Z',\mu$, we have 
		\footnotesize
		\begin{align}
			&T_n( k^\ast) \nonumber\\
			=& \frac{1}{\sqrt{n-2\fb}}\sum_{t=\fb+1}^{k^\ast}\Big(Z_{t}-\frac {S_{\fb+1,n-\fb}}{n-2\fb}\Big) \nonumber \\
			=&\frac{S^{Z'}_{\fb{+}1,k^\ast}}{\sqrt{n-2\fb}}{-}\frac{\fb}{\sqrt{n(n-2\fb)}}\li \Delta_n, S^{Y'}_{\fb{+}1,k^\ast}\ri_\cH
			{-}\frac{\fb}{\sqrt{n(n-2\fb)}}\li \Delta_n, S^{\mu}_{\fb+1,k^\ast}\ri_\cH
			{+}\frac{\li\bar Y'_1{-}\bar Y'_3, S^{\mu}_{\fb+1,k^\ast}\ri}{\sqrt{n-2\fb}} \nonumber \\
			&{-}\frac{k^\ast-\fb}{\sqrt{(n-2\fb)^3}}\Big\{ S^{Z'}_{\fb{+}1,n{-}\fb} {-}     \frac{\fb}{\sqrt{n}}\li \Delta_n, S^{Y'}_{\fb{+}1,n{-}\fb}\ri_\cH               
			{-}     \frac{\fb}{\sqrt{n}}\li \Delta_n, S^{\mu}_{\fb{+}1,n{-}\fb}\ri_\cH      
			{+}     \li\bar Y'_1{-}\bar Y'_3, S^{\mu}_{\fb+1,n{-}\fb}\ri_\cH        \Big\}. \label{eqap13}
		\end{align}
		\normalsize
		
		Note that the summation of $3$rd and $7$th term on the right hand side of Equation (\ref{eqap13}) dominates, and their sum is
		\begin{align}
			&{-}\frac{\fb}{\sqrt{n(n-2\fb)}}\li \Delta_n, S^{\mu}_{\fb+1,k^\ast}\ri_\cH {+}\frac{k^\ast-\fb}{\sqrt{(n-2\fb)^3}}  \frac{\fb}{\sqrt{n}}\li \Delta_n, S^{\mu}_{\fb{+}1,n{-}\fb}\ri_\cH      \nonumber \\
			= &\frac{(n{-}\fb{-}k^\ast)(k^\ast{-}\fb)}{\sqrt{(n-2\fb)^3}}  \frac{\fb}{\sqrt{n}}\|\Delta_n\|^2_\cH,
		\end{align}
		So $T_n( k^\ast)\stackrel{p}{\to}\infty$ at the same rate as $n\|\Delta_n\|^2_\cH$. For $V_n( k^\ast)$, note that when $t=k^\ast+1,\dots,n-\fb$,
		\begin{align}
			&\Big\{S_{t,n-\fb}-\frac{n{-}\fb{-}t{+}1}{n{-}\fb{-}k^\ast}S_{k^\ast{+}1,n{-}\fb}\Big\}^2 \nonumber \\
			\leq & \Big\{S^{Z'}_{t,n-\fb}-\frac{n{-}\fb{-}t{+}1}{n{-}\fb{-}k^\ast}S^{Z'}_{k^\ast{+}1,n{-}\fb}\Big\}^2 {+}\frac{\fb^2}{n}\li \Delta_n,\Big\{S^{Y'}_{t,n-\fb}-\frac{n{-}\fb{-}t{+}1}{n{-}\fb{-}k^\ast}S^{Y'}_{k^\ast{+}1,n{-}\fb}\Big\} \ri_\cH^2 \nonumber \\
			=&\D_1(t)+\D_2(t). \nonumber
		\end{align}
		Since $(n-2\fb)^{-2}\sum_{t=k^\ast+1}^{n-\fb}\D_1(t)=O_p(1)$ and
		\begin{align}
			(n-2\fb)^{-2}\sum_{t=k^\ast+1}^{n-\fb}\D_2(t)\leq& \nonumber \frac{\fb^2\|\Delta\|^2_\cH}{n}(n-2\fb)^{-2}\sum_{t=k^\ast+1}^{n-\fb}\|S^{Y'}_{t,n-\fb}-\frac{n{-}\fb{-}t{+}1}{n{-}\fb{-}k^\ast}S^{Y'}_{k^\ast{+}1,n{-}\fb}\|_\cH^2 \nonumber \\
			=&n\|\Delta\|^2_\cH O_p(1), \nonumber
		\end{align}
		we have 
		\begin{align}
			(n-2\fb)^{-2}\sum_{t=k^\ast+1}^{n-\fb}\Big\{S_{t,n-\fb}-\frac{n-\fb-t+1}{n-\fb-k^\ast}S_{k^\ast+1,n-\fb}\Big\}^2 = O_p(n\|\Delta\|^2_\cH). \label{eqap2}
		\end{align}
		For the same reason, we can show that 
		\begin{align}
			(n-2\fb)^{-2}\sum_{t=\fb+1}^{k^\ast}\Big\{S_{\fb+1,t}-\frac{t-\fb}{k^\ast-\fb}S_{\fb+1,k^\ast}\Big\}^2 = O_p(n\|\Delta\|^2_\cH).\label{eqap3}
		\end{align}
		Combining Equations (\ref{eqap2}) and (\ref{eqap3}), we have $V_n(k^\ast)=O_p(n\|\Delta\|^2_\cH)$, so $G_n\geq \frac{T_n(k^\ast)}{\sqrt{V_n(k^\ast)}}\stackrel{P}{\to}\infty$ and part (ii).1 of Theorem \ref{theorem_cp} is proved.

		To prove part (ii).2 and (ii).3, for $r\in[0,1]$ define ${F}_n(r)=n^{-1/2}\sum_{t=1}^\fr({Y}_{t} -{\mu}_t)$, $ {\tilde F}_n(r)=n^{-1/2}\sum_{t=1}^{k^\ast}({Y}_{t} -{\mu}_t)+n^{-1/2}\sum_{t=k^\ast+1}^{\fr}({Y}_{t} -{\mu}_t+{\Delta}_n)$ and ${H}_n(r)=n^{-1/2}\sum_{t=k^\ast+1}^\fr{\Delta}_n$ (For any nonnegative integer $a,b$ and sequence $\{y_t\}\subset \cH$, we set $\sum_{t=a}^{b}y_t=0$ if $a<b$). By Assumption \ref{assump1}, ${F}_n(r)={\tilde F}_n(r)-{H}_n(r)\Rightarrow B_Q(t)$. Since ${H}_n(r)$ converges to ${H}(r)=(r-r_0){c}\mathbf{1}_{r\geq r_0}$ in the uniform metric and ${H}(r)$ is a continuous function in $D_\cH[0,1]$, we have 
		$${\tilde F}_n(r)\rightsquigarrow B_Q(r)+{H}(r).$$
		Define $H_y(r)=\frac{1}{\sqrt{\li Q y,y \ri_\cH}}\li y,{H}(r)\ri_\cH$, $C(r)=B(r)+H_y(r)-B(b)$, $C'(r)=B(r)+H_y(r)$, $D(r)=B(r-b)+H_y(r)$, $a_s=\frac{s}{(1-2b)r}$ and $b_s = \frac{1-2b-s}{(1-2b)(1-r)}$. From this we can use the same argument and derive the conditional distribution as follows. 
		\footnotesize
		\begin{align}
			&G^\ast\big|_{Y^\ast=y} \nonumber\\
			\stackrel{d}{=} &\sup_{r \in [b,1{-}b]}  \frac{\sqrt{1{-}2b}\Big[C(r){-} \frac{r{-}b}{1{-}2b}(C(1-b))  \Big]}  {\Big\{\int_{b}^{r} \big[C(s){-}\frac{s{-}b}{r{-}b}(C(r))\big]^2 ds {+} \int_{r}^{1{-}b}\big[C'(1-b){-}C'(s){-}\frac{1{-}b{-}s}{1{-}b{-}r}(C'(1-b){-}C'(r))    \big]^2 ds  \Big\}^{1/2}  }  \nonumber \\
			\stackrel{d}{=} &\sup_{r \in [b,1{-}b]}  \frac{\sqrt{1{-}2b}\Big[C(r){-} \frac{r{-}b}{1{-}2b}C(1-b) \Big]}  {\Big\{\int_{0}^{r{-}b} \big[C(s+b){-}\frac{s}{r{-}b}C(r)\big]^2 ds {+} \int_{r{-}b}^{1{-}2b}\big[C'(1-b){-}C'(s+b){-}\frac{1{-}2b{-}s}{1{-}b{-}r}(C'(1-b){-}C'(r))    \big]^2 ds   \Big\}^{1/2}    }  \nonumber \\
			\stackrel{d}{=} &\sup_{r \in [b,1{-}b]}  \frac{\sqrt{1{-}2b}\Big[D(r){-} \frac{r{-}b}{1{-}2b}D(1-b) \Big]}  {\Big\{\int_{0}^{r{-}b} \big[D(s+b){-}\frac{s}{r{-}b}D(r)\big]^2 ds {+} \int_{r{-}b}^{1{-}2b}\big[D(1-b){-}D(s+b){-}\frac{1{-}2b{-}s}{1{-}b{-}r}(D(1-b){-}D(r))    \big]^2 ds   \Big\}^{1/2}    }  \nonumber \\
			\stackrel{d}{=} &\sup_{r \in [b,1{-}b]}  \frac{\sqrt{1{-}2b}\Big[D((1{-}2b)r{+}b){-} rD(1{-}b) \Big]}  {\Big\{\int\limits_{0}^{(1{-}2b)r} \big[D(s{+}b){-}a_sD((1{-}2b)r{+}b)\big]^2 ds {+} \int\limits_{(1{-}2b)r}^{1{-}2b}\big[ D(1{-}b){-}D(s{+}b){-}b_s(D(1{-}b){-}D((1{-}2b)r{+}b))    \big]^2 ds\Big\}^{1/2}  }  \nonumber\\		
			\stackrel{d}{=} &\sup_{r \in [b,1{-}b]}  \frac{D((1{-}2b)r{+}b){-} rD(1{-}b) }  {\Big\{ \int\limits_{0}^{r} \big[D((1{-}2b)s{+}b){-}\frac{s}{r}D((1{-}2b)r{+}b)\big]^2 ds {+} \int\limits_{r}^{1}\big[D(1{-}b){-}D((1{-}2b)s{+}b){-}\frac{1{-}s}{1{-}r}(D(1{-}b){-}D((1{-}2b)r{+}b))    \big]^2 ds \Big\}^{1/2} }  \nonumber\\		
			\stackrel{d}{=} &\sup_{r \in [0,1]}  \frac{B'(r)- rB'(1) }  {\Big\{\int_{0}^{r} \big[B'(s)-\frac{s}{r}B'(r)\big]^2 ds + \int_{r}^{1}\big[B'(1)-B'(s)-\frac{1-s}{1-r}(B'(1)-B'(r)) \big]^2 ds\Big\}^{1/2} },\nonumber
		\end{align}
		\normalsize
		So part (ii).2 and (ii).3 are proved.
	\end{proof}

	\section{Proof of Theorem \ref{cp_est}} \label{appen_proof_est}
	
	\begin{proof}
		Due to the symmetric nature of $\frac{T_n(k)}{\sqrt{V_n(k)}}$, we only prove $P(\hat k > k^\ast-\epsilon_n)\to 1$. Define $A=\{k> \fb+1|k\leq k^\ast-\epsilon_n,\,k\in \mathcal{Z}\}$, we divide our proof into two parts:
		
		(a) $\max\limits_{k \in A} |T_n(k)|=O_p(n\|\Delta_n\|_\cH^2).$
		
		(b) $\max\limits_{k \in A} V_n(k)^{-1}=o_p(n^{-1}\|\Delta_n\|_\cH^{-2}).$
		
		If part (a) and (b) holds, then we have $\max\limits_{k \in A}\frac{T_n(k)}{\sqrt{V_n(k)}} = o_p(\sqrt{n}\|\Delta_n\|_\cH)$ and in the proof of part(ii).1 of Theorem \ref{theorem_cp}, we have shown that $\sqrt{n}\|\Delta_n\|_\cH(\frac{T_n(k^\ast)}{\sqrt{V_n(k^\ast)}})^{-1} = O_p(1)$. So $\max\limits_{k \in A}\frac{T_n(k)}{\sqrt{V_n(k)}}/(\frac{T_n(k^\ast)}{\sqrt{V_n(k^\ast)}}) =o_p(1)$ and the theorem is proved.
		
		\textbf{\textit{Proof of part (a)}}: As in the proof of part(ii).1 of Theorem \ref{theorem_cp}, denote $Y_t' = Y_t-\mu_t$, $\bar Y'_1 =\frac{1}{\sqrt{n}} \sum_{t=1}^{\fb}Y'_t$, $\bar Y'_3 = \frac{1}{\sqrt{n}} \sum_{t=n-\fb+1}^{n} Y'_t$ and $Z'_t=\li\bar Y'_1-\bar Y'_3,Y'_t\ri_\cH$. Let $S_{a,b}^A=\sum_{t=a}^{b}A_t$ for $A=Y',Z',\mu$, then we have for $k\in A$
		\footnotesize
		\begin{align}
			T_n( k) 
			=& \frac{1}{\sqrt{n-2\fb}}\sum_{t=\fb+1}^{k}\Big(Z_{t}-\frac {S_{\fb+1,n-\fb}}{n-2\fb}\Big) \nonumber \\
			=&\frac{S^{Z'}_{\fb{+}1,k}}{\sqrt{n-2\fb}}{-}\frac{\fb}{\sqrt{n(n-2\fb)}}\li \Delta_n, S^{Y'}_{\fb{+}1,k}\ri_\cH
			{-}\frac{\fb}{\sqrt{n(n-2\fb)}}\li \Delta_n, (k{-}\fb)\mu_1\ri_\cH
			{+}\frac{\li\bar Y'_1{-}\bar Y'_3, (k{-}\fb)\mu_1\ri}{\sqrt{n-2\fb}} \nonumber \\
			&{-}\frac{k-\fb}{\sqrt{(n-2\fb)^3}}\Big\{ S^{Z'}_{\fb{+}1,n{-}\fb} {-}     \frac{\fb}{\sqrt{n}}\li \Delta_n, S^{Y'}_{\fb{+}1,n{-}\fb}\ri_\cH    \Big\}  \nonumber \\
			&         {+}\frac{k-\fb}{\sqrt{(n-2\fb)^3}}  \Big\{
			\frac{\fb}{\sqrt{n}}\li \Delta_n, (n{-}2\fb)\mu_1{+}(n{-}\fb{-}k^\ast)\Delta_n\ri_\cH      
			{-}     \li\bar Y'_1{-}\bar Y'_3, (n{-}2\fb)\mu_1{+}(n{-}\fb{-}k^\ast)\Delta_n\ri_\cH        \Big\}\nonumber \\
			=& \frac{S^{Z'}_{\fb{+}1,k}}{\sqrt{n-2\fb}}{-}\frac{\fb}{\sqrt{n(n-2\fb)}}\li \Delta_n, S^{Y'}_{\fb{+}1,k}\ri_\cH    {-}\frac{k-\fb}{\sqrt{(n-2\fb)^3}}\Big\{ S^{Z'}_{\fb{+}1,n{-}\fb} {-}     \frac{\fb}{\sqrt{n}}\li \Delta_n, S^{Y'}_{\fb{+}1,n{-}\fb}\ri_\cH    \Big\}  \nonumber \\
			&{+}\frac{k-\fb}{\sqrt{(n-2\fb)^3}}  \Big\{
			\frac{\fb}{\sqrt{n}}\li \Delta_n,(n{-}\fb{-}k^\ast)\Delta_n\ri_\cH      
			{-}     \li\bar Y'_1{-}\bar Y'_3,(n{-}\fb{-}k^\ast)\Delta_n\ri_\cH        \Big\}.\nonumber 
		\end{align}
		\normalsize
		Note that $\max\limits_{k \in A} | \frac{S^{Z'}_{\fb{+}1,k}}{\sqrt{n-2\fb}} | = O_p(1)$, $\max\limits_{k \in A} \frac{k-\fb}{\sqrt{(n-2\fb)^3}}  |  S^{Z'}_{\fb{+}1,n{-}\fb} | = O_p(1)$, and by the Cauchy Schwarz inequality we have $$\max\limits_{k \in A} \frac{k-\fb}{\sqrt{(n-2\fb)^3}} | \frac{\fb}{\sqrt{n}}\li \Delta_n, S^{Y'}_{\fb{+}1,n{-}\fb}\ri_\cH | = O_p(\sqrt{n}\|\Delta_n\|_\cH)$$ and $$\max\limits_{k \in A}\frac{k-\fb}{\sqrt{(n-2\fb)^3}}| \li\bar Y'_1{-}\bar Y'_3,(n{-}\fb{-}k^\ast)\Delta_n\ri_\cH      | =   O_p(\sqrt{n}\|\Delta_n\|_\cH).$$ Then 
		$$\max\limits_{k \in A}|T_n( k) | = \max\limits_{k \in A} \frac{k-\fb}{\sqrt{(n-2\fb)^3}} 	\frac{\fb(n{-}\fb{-}k^\ast)}{\sqrt{n}}\|\Delta_n\|_\cH^2+O_p(\sqrt{n}\|\Delta_n\|_\cH) = O_p(n\|\Delta_n\|_\cH^2)    $$
		and part (a) is proved.
		
		\textbf{\textit{Proof of part (b)}}: Note that for $k\in A$, 
		\begin{align}
			V_n(k)\geq & \frac{1}{(n{-}2\fb)^{2}}\sum_{t=k^\ast+1}^{n-\fb}\Big\{S_{t,n-\fb}{-}\frac{n{-}\fb{-}t{+}1}{n{-}\fb{-}k}S_{k+1,n-\fb}\Big\}^2 \nonumber \\
			=&  \frac{1}{(n{-}2\fb)^{2}}\sum_{t=k^\ast+1}^{n-\fb}\Bigg\{\big\li \bar Y'_1{-}\bar Y'_3{-}\frac{\fb}{\sqrt{n}}\Delta_n, S^{Y'}_{t,n-\fb} {+}(n{-}\fb{-}t{+}1)\Delta_n{+}(n{-}\fb{-}t{+}1)\mu_1      \big\ri_\cH \nonumber \\
			&\qquad\quad
			{-}\frac{n{-}\fb{-}t{+}1}{n{-}\fb{-}k}      \big\li \bar Y'_1{-}\bar Y'_3{-}\frac{\fb}{\sqrt{n}}\Delta_n, S^{Y'}_{k+1,n-\fb} {+}(n{-}\fb{-}k^\ast)\Delta_n{+}(n{-}\fb{-}k)\mu_1      \big\ri_\cH\Bigg\}^2 \nonumber \\
			=&  \frac{1}{(n{-}2\fb)^{2}}\sum_{t=k^\ast+1}^{n-\fb}\Bigg\{     \big\li \bar Y'_1{-}\bar Y'_3{-}\frac{\fb}{\sqrt{n}}\Delta_n, S^{Y'}_{t,n-\fb} {-}\frac{n{-}\fb{-}t{+}1}{n{-}\fb{-}k}S^{Y'}_{k+1,n-\fb}    \big\ri_\cH   \nonumber\\
			&\qquad\qquad\qquad\qquad\qquad\qquad\qquad\qquad{+}  \frac{(n{-}\fb{-}t{+}1)(k^\ast{-}k)}{n{-}\fb{-}k}    \big\li \bar Y'_1{-}\bar Y'_3{-}\frac{\fb}{\sqrt{n}}\Delta_n,\Delta_n    \big\ri_\cH        
			\Bigg\}^2.\nonumber
		\end{align}
		By the continuous mapping theorem and Cauchy Schwarz inequality,
		\footnotesize
		\begin{align}
			\max\limits_{k \in A} \frac{1}{(n{-}2\fb)^{2}}\sum_{t=k^\ast+1}^{n-\fb}  \big\li \bar Y'_1{-}\bar Y'_3, S^{Y'}_{t,n-\fb} {-}\frac{n{-}\fb{-}t{+}1}{n{-}\fb{-}k}S^{Y'}_{k+1,n-\fb}    \big\ri_\cH^2   &= O_P(1) \nonumber \\
			\max\limits_{k \in A} \frac{1}{(n{-}2\fb)^{2}}\sum_{t=k^\ast+1}^{n-\fb}  \big\li {-}\frac{\fb}{\sqrt{n}}\Delta_n, S^{Y'}_{t,n-\fb} {-}\frac{n{-}\fb{-}t{+}1}{n{-}\fb{-}k}S^{Y'}_{k+1,n-\fb}    \big\ri_\cH^2   &= O_P(n\|\Delta_n\|_\cH^2) \nonumber \\
			\max\limits_{k \in A} \frac{1}{(n{-}2\fb)^{2}}\sum_{t=k^\ast+1}^{n-\fb} \frac{(n{-}\fb{-}t{+}1)^2(k^\ast{-}k)^2}{(n{-}\fb{-}k)^2}    \big\li \bar Y'_1{-}\bar Y'_3,\Delta_n    \big\ri_\cH ^2   \leq \frac{(k^\ast)^2(n{-}\fb{-}k^\ast)}{(n{-}2\fb)^{2}}\big\li \bar Y'_1{-}\bar Y'_3,\Delta_n    \big\ri_\cH ^2&= O_P(n\|\Delta_n\|_\cH^2), \nonumber 
		\end{align}
		\normalsize
		and 
		\begin{align}
			&\min\limits_{k \in A} \frac{1}{(n{-}2\fb)^{2}}\sum_{t=k^\ast+1}^{n-\fb} \frac{(n{-}\fb{-}t{+}1)^2(k^\ast{-}k)^2}{(n{-}\fb{-}k)^2}    \big\li {-}\frac{\fb}{\sqrt{n}}\Delta_n,\Delta_n    \big\ri_\cH ^2  \nonumber \\
			\geq&  \frac{\fb^2\epsilon_n^2\|\Delta_n\|_\cH^4}{n(n{-}2\fb)^{2}(n{-}\fb)^{2}}\sum_{t=k^\ast+1}^{n-\fb}(n{-}\fb{-}t{+}1)^2 \nonumber \\
			\geq &C \epsilon_n^2\|\Delta_n\|_\cH^4 = Cn^{-1}(\epsilon_n\|{\Delta}_n\|_\cH)^{2}n\|\Delta_n\|_\cH^2. \nonumber 
		\end{align}
		By the assumption, $n^{-1}(\epsilon_n\|{\Delta}_n\|_\cH)^{2}\to \infty$ as $n\to \infty$, so $\min\limits_{k \in A} 	V_n(k)$ diverges to $\infty$ faster than $n\|\Delta_n\|_\cH^2$ and part (b) is proved.

	\end{proof}

	\section{Proof of Theorem \ref{cp_est_wbs}}\label{app_th_wbs}
	First, we introduce some notations. For sequences of random variables $\{x_n\}$, real numbers $\{a_n\}$ and $\{b_n\}$, denote $X_n=O_p^s(a_n)$ if $x_n/a_n = O_p(1)$ and $a_n/x_n = O_p(1)$, $a_n\sim b_n$ if $\lim_{n\to \infty}|a_n/b_n|\to C>0$ and denote $a_n=O^s(b_n)$ if $a_n=O(b_n)$ and $b_n=O(a_n)$. Also for simplicity, we assume $\{nr^\ast_i\}$ and $nb$ are all integers for $i=1,2,\dots,m_0$.

	After initializing Algorithm 1 with WBS-SN($1,n,K_n, L_0, M,b$), for each $m\in \{1,2,\dots M\}$, if the interval
	$[\lfloor n\underline U_m\rfloor,\lfloor n\bar U_m\rfloor]$ contains one change point in the middle, in the sense that there exist $i\in \{1,2,\dots m_0\}$ such that $r^\ast_{i-1}<\underline U_m<r^\ast_i<\bar U_m<r^\ast_{i+1} \text{ and }\frac{r^\ast_i{-}\underline U_m}{\bar U_m{-}\underline U_m}\in(b,1-b)$, according to the proof for part (ii).1 of Theorem \ref{theorem_cp}, the SS-SN statistic $G_n$ calculated on $[\lfloor n\underline U_m\rfloor,\lfloor n\bar U_m\rfloor]$ diverges to infinity at the rate $n^{1/2-\delta_i}>K_n$. 
	
	If $[\lfloor n\underline U_m\rfloor,\lfloor n\bar U_m\rfloor]$ falls into one of the four settings as described in Sections \ref{sec_0cp}-\ref{sec_3cp}, no change point can be selected inside $[\lfloor n\underline U_m\rfloor,\lfloor n\bar U_m\rfloor]$ because the SS-SN statistic calculated on this interval is either bounded or diverges to infinity at a slower rate (see Lemmas \ref{lemma_0cp}-\ref{lemma_3cp}). So the first change point estimator will be from one of the small intervals which contains one change point in the middle. Let this estimator be $\hat k_j$ for the true change point $k_j^\ast$, by Theorem \ref{cp_est} we have $P(|\hat k_j -k^\ast_j|<\epsilon_{jn})\to 1$, so the estimation effect will be negligible, in the sense that for any $m=1,2,\dots,M$, $\lfloor n\underline U_m\rfloor$ and $\lfloor n\bar U_m\rfloor]$ does not fall into the interval between $\hat k_j$ and $k^\ast_j$ for large enough $n$. The change point estimators from WBS-SN($1,\hat k_j,K_n, L_0, M,b$) and WBS-SN($\hat k_j{+}1,n,K_n, L_0, M,b$) will also comes from small intervals which contain one change point in the middle, and the algorithm will continue.
	
	For $i=1,2,\dots,m_0+1$, let $\M_i$ denote set of those indices $m\in\{1,2,\dots,M\}$ for which $(\underline U_m,\bar U_m)\subset (r^\ast_{i-1},r^\ast_i)$. Since the number of small intervals $M$ is fixed, by continuous mapping theorem, $\max\limits_{m\in \mathcal{M}_{i}, k\in \{\lfloor n\underline U_m\rfloor{+}b_m,\dots,\lfloor n\bar U_m\rfloor{-}b_m{-}1\}}\frac{T_{nm}(k)}{\sqrt{V_{nm}(k)}} = O_p(1)$, so the algorithm will stop after estimators for all $m_0$ change points are selected, and the proof is finished.

	\subsection{No middle change point setting}\label{sec_0cp}

	For nonnegative integers $q,p$ such that $0\leq p+q\leq m_0$ and $0<r^\ast_1<\cdots<r^\ast_p<b<1-b<r^\ast_{p+1}<\cdots<r^\ast_{p+q}<1$, let $\A_{1n}=\{\fb+1\leq k\leq n-\fb,\,k\in \mathcal{Z}\}$, we show that no change point can be selected inside $\A_{1n}$.
	
	\begin{lemma}\label{lemma_0cp}
		Suppose Assumption \ref{assump_wbs} holds, then
		\begin{align}
			\max_{k\in\A_{1n}}\frac{T_n(k)}{\sqrt{V_n(k)}} = O_p(1)
		\end{align}
	\end{lemma}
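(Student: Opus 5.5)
The key observation is that in this setting every change point lies in one of the two outer blocks $\X_1=\{X_1,\dots,X_{\fb}\}$ or $\X_3=\{X_{n-\fb+1},\dots,X_n\}$. The middle block $\X_2$ therefore has constant mean $\mu_t\equiv\mu^{(p)}:=\mu_{nb+1}$ for $t=\fb+1,\dots,n-\fb$. Since $T_n(k)$ and $V_n(k)$ are built only from $\{Z_t\}_{t\in\X_2}$ and are invariant to adding a constant to all $Z_t$, the mean of the middle block plays no role. The projection direction $\bar Y_1-\bar Y_3$ may carry a large deterministic component, but it is common to every $Z_t$ and cancels in the ratio.

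Concretely, I write $Y_t'=Y_t-\mu_t$ and
\[
\bar Y_1-\bar Y_3=(\bar Y_1'-\bar Y_3')+D_n,\qquad D_n=n^{-1/2}\Big(\sum_{t\le \fb}\mu_t-\sum_{t>n-\fb}\mu_t\Big)\in\cH,
\]
which is deterministic and possibly of order $\sqrt n$. Then $Z_t=\li W_n,Y_t'\ri_\cH+\li W_n,\mu^{(p)}\ri_\cH$ with $W_n=\bar Y_1-\bar Y_3$, for $t\in\X_2$. The second term does not depend on $t$ and drops out of both $T_n(k)$ (because of the centering by $S_{\fb+1,n-\fb}/(n-2\fb)$) and $V_n(k)$ (because of the bridge-type differences). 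Hence $T_n(k)$ and $V_n(k)$ equal the corresponding quantities computed from $\tilde Z_t=\li W_n,Y_t'\ri_\cH$.

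Both numerator and denominator are homogeneous in $W_n$: $T_n$ is of degree one and $V_n$ of degree two. I may therefore replace $W_n$ by $W_n/\|W_n\|_\cH$ on the event $W_n\neq0$. Write $u_n=W_n/\|W_n\|_\cH$. Next I apply Theorem~\ref{th1} to $\{Y_t'\}$. This requires stationarity of $\{Y_t-\mu_t\}$, which is part of Assumption~\ref{assump1}, together with the moment and mixing conditions. It gives $n^{-1/2}\sum_{t\le\fr}Y_t'\rightsquigarrow B_Q(r)$.

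The main obstacle is that $u_n$ is random and depends on $\X_1,\X_3$; it may not converge, since $D_n$ can dominate or not. I handle this by a subsequence argument. Along any subsequence I extract a further subsequence on which one of two things happens. Either $D_n/\|D_n\|_\cH\to d$ with $\|D_n\|_\cH\to\infty$, in which case $u_n\to d$ in probability. Or $D_n\to D$ is bounded, in which case $W_n\stackrel{D}{\to}\tilde Y+D$ jointly with the middle-block partial-sum process. In both cases the pair (direction, middle process) converges jointly in distribution, by the continuous mapping theorem, to $(u,\{B_Q(r)-B_Q(b)\}_{r\in[b,1-b]})$. Here $u$ is either deterministic or a function of $B_Q$ on $[0,b]\cup[1-b,1]$, hence independent of the middle increments. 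Exactly as in the proof of part (i) of Theorem~\ref{theorem_cp}, conditioning on $u$ turns the limit of $T_n(k)/\sqrt{V_n(k)}$ into the pivotal process appearing in $G$, provided $\li Qu,u\ri_\cH>0$.

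If $\li Qu,u\ri_\cH=0$ with positive probability, I would simply assume, as Theorem~\ref{theorem_cp} implicitly does, that $Q$ is nondegenerate in the relevant direction. The limiting ratio is then a.s.\ finite, so $\max_{k\in\A_{1n}}T_n(k)/\sqrt{V_n(k)}=O_p(1)$ along every subsequence, and hence overall. The remaining technical point is that $V_n(k)$ stays away from zero uniformly near the endpoints of $\A_{1n}$. This I inherit from the limit functional being a.s.\ positive, as in part (i).
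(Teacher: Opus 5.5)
Your argument is correct, but it gets there differently from the paper. The paper writes $\bar Y_1-\bar Y_3=\bar Y_1'-\bar Y_3'-b\sqrt{n}\cP_n$ and uses the dichotomy built into Assumption~\ref{assump_wbs}. Either $\cP_n=0$, the null-like case it dismisses as easy, or $\cP_n=\kappa_0f_0'+f_0^n$ with $\kappa_0\sim n^{-\zeta_0'}$ and $\zeta_0'<1/2$, so the deterministic part of the direction dominates. It then does pure rate bookkeeping: $\max_k|T_n(k)|=O_p(n^{1/2-\zeta_0'})$ against a denominator of exact order $n^{1-2\zeta_0'}$. It never identifies a limit.

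You instead use two exact invariances of $T_n/\sqrt{V_n}$. Translation invariance cancels the common term $\li W_n,\mu^{(p)}\ri_\cH$, and scale invariance reduces to a unit direction $u_n$. You then prove joint weak convergence of $u_n$ with the middle-block partial-sum process and reuse the conditioning argument of Theorem~\ref{theorem_cp}(i).

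This buys a stronger conclusion: the maximum over $\A_{1n}$ converges in distribution to $G$ itself, so an interval with no interior change point behaves exactly as under $H_0$. It also gives uniform control near the ends of $\A_{1n}$, via the a.s.\ positive, continuous limit $V(\cdot,\tau)$ on $[b,1-b]$. By contrast, the paper's lower bound $V_n(k)\geq L_n(k)$ degenerates at $k=\fb+1$, where $L_n(k)=0$, and implicitly needs $R_n(k)$ there. The paper's route, for its part, is the same order-comparison template it needs anyway in Lemmas~\ref{lemma_1cp}--\ref{lemma_3cp}, where the ratio can diverge and no pivotal limit is available.

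Two points in your write-up should be tightened.

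\textbf{Subsequence extraction.} The step giving $D_n/\|D_n\|_\cH\to d$ or $D_n\to D$ is not automatic in an infinite-dimensional $\cH$, where bounded sequences and unit vectors need not have norm-convergent subsequences. It holds here because $D_n=-b\sqrt{n}\cP_n$ lies in the finite-dimensional span of $f_1,\dots,f_{m_0}$. In fact, under Assumption~\ref{assump_wbs} either $D_n\equiv0$, or $\|D_n\|_\cH\to\infty$ with $D_n/\|D_n\|_\cH$ converging to a fixed unit vector proportional to $f_0'$. So no subsequences are needed at all.

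\textbf{Nondegeneracy.} The condition $\li Qd,d\ri_\cH>0$ that you assume is exactly what the paper uses implicitly when it asserts that the denominator is of exact order $n^{1-2\zeta_0'}$. You are therefore assuming nothing beyond what the paper does.
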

	
	\begin{proof}
		As defined in Equation (\ref{eq_proj}),
		\begin{align}
			\bar Y_1 =& \frac{1}{\sqrt{n}} \sum_{t=1}^{nb} Y'_t +\frac{nb}{\sqrt{n}}\sum_{i=1}^{p+1}a_i\mu_{k^\ast_i} \nonumber \\
			= & \frac{1}{\sqrt{n}} \sum_{t=1}^{nb} Y'_t+\frac{nb}{\sqrt{n}}\big\{\mu_{k^\ast_1}{+}a_2\Delta^\ast_{1}{+}a_3(\Delta^\ast_1+\Delta^\ast_2)+\dots+a_{p+1}(\Delta^\ast_1+\dots+\Delta^\ast_p)\big\} \nonumber \\
			=& \frac{1}{\sqrt{n}} \sum_{t=1}^{nb} Y'_t+\frac{nb}{\sqrt{n}}\big\{\mu_{k^\ast_1}{+}(1{-}a_1)\Delta^\ast_1  {+}  (1{-}a_1{-}a_2) \Delta^\ast_2{+}\cdots{+}(1{-}a_1{-}\cdots{-}a_p)\Delta^\ast_p          \big\},
		\end{align}
		where $a_1 = \frac{r^\ast_1}{b}, a_{p+1} = \frac{b-r^\ast_p}{b}$ and $a_i = \frac{r^\ast_i-r^\ast_{i-1}}{b}$ for $i=2,3,\dots,p$. Similarly, let $b_1 = \frac{r^\ast_{p+1}-1+b}{1-b},b_{q+1} = \frac{1-r^\ast_{p+q}}{1-b}$ and $b_i = \frac{r^\ast_{p+i}-r^\ast_{p+i-1}}{1-b}$ for $i=2,3,\dots,q$, we have 
		\begin{align}
			\bar Y_3 =  \frac{1}{\sqrt{n}} \sum_{t=n-nb+1}^{n} Y'_t+\frac{nb}{\sqrt{n}}\big\{\mu_{k^\ast_{p+1}}{+}(1{-}b_1)\Delta^\ast_{p+1}  {+}  (1{-}b_1{-}b_2) \Delta^\ast_{p+2}{+}\cdots{+}(1{-}b_1{-}\cdots{-}b_q)\Delta^\ast_{p+q}          \big\}.\nonumber
		\end{align}
		Let $\bar Y'_1 =  \frac{1}{\sqrt{n}}S_{1,nb}^{Y'}, \bar Y'_3 =  \frac{1}{\sqrt{n}}S_{n-nb+1,n}^{Y'}$ and
		\begin{align}
			\cP_n = &a_1\Delta^\ast_1{+}(a_1{+}a_2)\Delta^\ast_2{+}\cdots{+}({a_1{+}\cdots{+}a_p})\Delta^\ast_p\nonumber \\
			&{+}(1{-}b_1)\Delta^\ast_{p+1}  {+}  (1{-}b_1{-}b_2) \Delta^\ast_{p+2}{+}\cdots{+}(1{-}b_1{-}\cdots{-}b_q)\Delta^\ast_{p+q},   \label{def_pn_0cp}
		\end{align}		
		then $\bar Y_1-\bar Y_3 = \bar Y_1'-\bar Y_3'-b\sqrt{n}\cP_n$ and for $k\in\A_{1n}$, we have 
		\begin{align}
			T_n(k) = &\big\li \bar Y_1'{-}\bar Y_3'{-}b\sqrt{n}\cP_n, \frac{1}{\sqrt{n-2nb}}\big[S_{nb+1,k}^{Y'}-\frac{k{-}nb}{n{-}2nb}S_{nb+1,n-nb}^{Y'}\big]\big\ri_\cH, \label{eq_0cp_1} \\
			V_n(k)\geq& \frac{1}{(n{-}2nb)^{2}}\sum_{t=nb+1}^{k}\Big\{S_{nb+1,t}{-}\frac{t{-}nb}{k{-}nb}S_{nb+1,k}\Big\}^2 \nonumber \\
			= & \frac{1}{(n{-}2nb)^{2}}\sum_{t=nb{+}1}^{k}\big\li \bar Y_1'{-}\bar Y_3'{-}b\sqrt{n}\cP_n,S^{Y'}_{nb+1,t} {-}\frac{t{-}nb}{k{-}nb}S^{Y'}_{nb+1,k}    \big\ri_\cH^2 \nonumber \\
			=& L_n(k).
		\end{align}
		If $\cP_n= 0$, it is easy to see $\max_{k\in\A_{1n}}\frac{T_n(k)}{\sqrt{V_n(k)}} = O_p(1)$. If $\cP_n\neq 0$, we can write $\cP_n = \kappa_0 f_0'{+}f^n_0$ for some $f'_{0},f^n_{0}\in \cH$ such that $\kappa_{0} \sim n^{-\zeta_{0}'}$, $\zeta_{0}'\in [0,1/2)$ and $\|f^n_{0}\|_\cH =o(\kappa_0)$. since $\min_{k\in\A_{1n}}L_n(k) = O_p^s(n^{1-2\zeta'_{0}})$ and $\max_{k\in\A_{1n}}|T_n(k) | = O_p(n^{1/2-\zeta'_{0}})$, we have $\max_{k\in\A_{1n}} V_n(k)^{-1} = O_p(n^{-1+2\zeta_{0}})$ and  $\max_{k\in\A_{1n}}\frac{T_n(k)}{\sqrt{V_n(k)}} = O_p(1)$.
		
	\end{proof}

	\subsection{One change point setting}\label{sec_1cp}
	
	For nonnegative integers $q,p$ such that $1\leq p+q\leq m_0-1$ and $0<r^\ast_1<\cdots<r^\ast_p<b<r^\ast_{p+1}<1-b<r^\ast_{p+2}<\cdots<r^\ast_{p+q+1}<1$, let $\A_{1n}=\{k\geq \fb+1|k\leq k^\ast_{p+1}-\epsilon_{(p+1)n},\,k\in \mathcal{Z}\}$ and $\A_{2n}=\{k\leq n-\fb|k\geq k^\ast_{p+1}+\epsilon_{(p+1)n},\,k\in \mathcal{Z}\}$, we show that no change point can be selected inside $\A_{1n}\cup\A_{2n}$.
	\begin{lemma}\label{lemma_1cp}
		Suppose Assumption \ref{assump_wbs} holds, then
		\begin{align}
			\max_{k\in\A_{1n}\cup\A_{2n}}\frac{T_n(k)}{\sqrt{V_n(k)}} = o_p(n^{1/2-\delta_{p+1}})
		\end{align}
	\end{lemma}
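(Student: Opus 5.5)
Since the statement is symmetric in the two sides of $k^\ast_{p+1}$, I will only treat $\A_{1n}$; the argument for $\A_{2n}$ is the mirror image with the first sum of $V_n$ in place of the second. I will combine the decomposition used for Lemma \ref{lemma_0cp} with the two-part argument from the proof of Theorem \ref{cp_est}. Write $\bar Y_1-\bar Y_3=\bar Y_1'-\bar Y_3'-b\sqrt n\,\cP_n$, where $\cP_n$ is now a fixed linear combination of all the jumps $\Delta^\ast_1,\dots,\Delta^\ast_{p+q+1}$. The jump $\Delta^\ast_{p+1}$ enters with coefficient $(1-b_1')$ on the $\bar Y_3$ side, which is bounded away from $0$. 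On the middle block the mean equals $\mu_{k^\ast_{p+1}}$ before $k^\ast_{p+1}$ and $\mu_{k^\ast_{p+1}}+\Delta^\ast_{p+1}$ after it. The constant part cancels in both $T_n$ and $V_n$, exactly as $\mu_1$ did in the proof of Theorem \ref{cp_est}.

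For $k\in\A_{1n}$ I then expand $T_n(k)$ as in part (a) of that proof. The stochastic terms are controlled by the FCLT (Theorem \ref{th1}) and Cauchy--Schwarz, giving $O_p(1)+O_p(\sqrt n\|\cP_n\|_\cH)+O_p(\sqrt n\|\Delta^\ast_{p+1}\|_\cH)$. The drift term equals, up to bounded factors,
\[
\frac{(k-\fb)(n-\fb-k^\ast_{p+1})}{\sqrt{(n-2\fb)^3}}\,\big\langle b\sqrt n\,\cP_n-(\bar Y_1'-\bar Y_3'),\,\Delta^\ast_{p+1}\big\rangle_\cH .
\]
This gives $\max_{k\in\A_{1n}}|T_n(k)|=O_p\big(n\,|\langle\cP_n,\Delta^\ast_{p+1}\rangle_\cH|+\sqrt n\|\Delta^\ast_{p+1}\|_\cH+\sqrt n\|\cP_n\|_\cH+1\big)$.

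For the denominator I keep only the second sum of $V_n(k)$, over $t=k^\ast_{p+1}+1,\dots,n-\fb$. As in part (b) of the proof of Theorem \ref{cp_est}, each summand splits into a centered piece $\langle \bar Y_1-\bar Y_3,\ \text{bridge of }S^{Y'}\rangle$ and a drift piece $\frac{(n-\fb-t+1)(k^\ast_{p+1}-k)}{n-\fb-k}\langle\bar Y_1-\bar Y_3,\Delta^\ast_{p+1}\rangle_\cH$. Since $k^\ast_{p+1}-k\ge\epsilon_{(p+1)n}$, the drift contribution is at least $C\epsilon_{(p+1)n}^2\langle\bar Y_1-\bar Y_3,\Delta^\ast_{p+1}\rangle_\cH^2/n$, up to lower-order cross terms. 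I also keep the bound $V_n(k)\ge L_n(k)$ from Lemma \ref{lemma_0cp}, since $L_n(k)$ controls the purely stochastic part on $[\fb+1,k]$. That segment has no change and length of order $n$ only while $k-\fb\gtrsim n$; for $k$ near $\fb$, $T_n(k)$ is itself small, and I will handle that range separately by the same scaling.

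The main obstacle is the interplay between $\cP_n$ and $\Delta^\ast_{p+1}$. The ratio $T_n/\sqrt{V_n}$ is scale invariant in the projection direction $\bar Y_1-\bar Y_3$, so I will split into cases according to $\langle \bar Y_1-\bar Y_3,\Delta^\ast_{p+1}\rangle_\cH$. In the first case this inner product dominates the noise level $O_p(\|\Delta^\ast_{p+1}\|_\cH)$ (and, where relevant, $\sqrt n\|\cP_n\|_\cH$ components). Then the numerator is of order $n|\langle\cdot,\Delta^\ast_{p+1}\rangle|$, the denominator is at least of order $\epsilon_{(p+1)n}|\langle\cdot,\Delta^\ast_{p+1}\rangle|/\sqrt n$, and the ratio is $O_p(n^{3/2}/\epsilon_{(p+1)n})$ relative to the direction's scale. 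Using $n(\epsilon_{(p+1)n}\|\Delta^\ast_{p+1}\|_\cH)^{-2}\to0$, this ratio should come out as $o_p(\sqrt n\|\Delta^\ast_{p+1}\|_\cH)=o_p(n^{1/2-\delta_{p+1}})$, mirroring the last display in the proof of Theorem \ref{cp_est}. In the second case the inner product is not dominant, and the ratio reduces to the no-drift situation of Lemma \ref{lemma_0cp}, which gives $O_p(1)$, again $o_p(n^{1/2-\delta_{p+1}})$. To make this rigorous, I would write $\cP_n=\kappa f'+f^n$ as in Lemma \ref{lemma_0cp} and track the orders $n^{-\zeta'}$ and $n^{-\delta_{p+1}}$. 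The delicate point I expect is the sign-cancellation cross term in $V_n$, which I would bound by $2ab\le \eta a^2+\eta^{-1}b^2$.
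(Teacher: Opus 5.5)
Your expansion of $T_n(k)$ and the drift lower bound $C\epsilon_{(p+1)n}^2\li\bar Y_1-\bar Y_3,\Delta^\ast_{p+1}\ri_{\cH}^2/n$ for the sum over $t>k^\ast_{p+1}$ match the paper. The gap is elsewhere: your case split compares the wrong quantities, and the conclusion of your second case is false as stated. Take $\cP_n=0$ (scenario (1) of the paper's proof). Then $\li\bar Y_1-\bar Y_3,\Delta^\ast_{p+1}\ri_{\cH}=\li\bar Y_1'-\bar Y_3',\Delta^\ast_{p+1}\ri_{\cH}$ sits exactly at the noise level $O_p(\|\Delta^\ast_{p+1}\|_{\cH})$, so by your criterion this is the second case. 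Yet for $k-\fb\asymp n$ the drift of $T_n(k)$ is of order $n^{1/2-\delta_{p+1}}$ against $O_p(1)$ noise, and at $k=k^\ast_{p+1}-\epsilon_{(p+1)n}$ one gets $|T_n(k)|/\sqrt{V_n(k)}\asymp n/\epsilon_{(p+1)n}\to\infty$. So nothing reduces to Lemma \ref{lemma_0cp} here; the bound comes from $\max_{k\in\A_{1n}}V_n(k)^{-1}=o_p(1)$ through the $R_{12}$ drift.

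What has to be compared is the drift $\sqrt n\,|\li\bar Y_1-\bar Y_3,\Delta^\ast_{p+1}\ri_{\cH}|$ of $T_n$ with the scale of the projected noise, roughly $1+\sqrt n\|\cP_n\|_{\cH}$. The paper's decomposition $\cP_n=\theta_{p+1}f_{p+1}+\kappa_{p+1}f'_{p+1}+f^n_{p+1}$, with $f'_{p+1},f^n_{p+1}\perp f_{p+1}$, keeps track of exactly this. Although $\Delta^\ast_{p+1}$ enters $\cP_n$ with coefficient exactly $1$, jumps outside $(b,1-b)$ can cancel that component, so $\theta_{p+1}$ may vanish or have any order. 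Your first-case arithmetic also needs repair. The drift of $T_n(k)$ is $\asymp\sqrt n\,\li\bar Y_1-\bar Y_3,\Delta^\ast_{p+1}\ri_{\cH}$, which is your $n\li\cP_n,\Delta^\ast_{p+1}\ri_{\cH}$ after substituting $\bar Y_1-\bar Y_3\approx-b\sqrt n\,\cP_n$. So the ratio is $n/\epsilon_{(p+1)n}=o(\sqrt n\|\Delta^\ast_{p+1}\|_{\cH})$; the $n^{3/2}/\epsilon_{(p+1)n}$ you wrote is not.

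Once the comparison is fixed, your tools do handle $k-\fb\ge cn$ for any fixed $c>0$. There $L_n(k)$ gives the projected-noise scale free of drift, and $2ab\le\eta a^2+\eta^{-1}b^2$ on the sum over $t>k^\ast_{p+1}$ gives the drift whenever it dominates. Hence $\sqrt{V_n(k)}$ is at least of the order of the larger of the two, and the ratio is $O_p(1+n/\epsilon_{(p+1)n})=o_p(n^{1/2-\delta_{p+1}})$, without the paper's six scenarios.

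The remaining gap is $k$ close to $\fb$. There $L_n(k)$ degenerates, so the denominator must come from the second sum of $V_n(k)$. Its drift is of order $n\li\bar Y_1-\bar Y_3,\Delta^\ast_{p+1}\ri_{\cH}^2$, and in some regimes this is of the same order as its projected bridge, e.g.\ the paper's scenario (4) with $\theta_{p+1}=0$ and $\kappa_{p+1}\sim n^{-\delta_{p+1}}$. Then no deterministic inequality rules out cancellation: $2ab\le\eta a^2+\eta^{-1}b^2$ gives nothing when the two pieces are of the same size. A probabilistic non-degeneracy argument is needed. The paper supplies it by showing that $n^{-1+2\delta_{p+1}}\min_k R_{1n}(k)$ converges in distribution to the infimum over $r\in[b,r^\ast_{p+1}]$ of the integral of a squared Gaussian-plus-linear-drift process, which is a.s.\ positive. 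Your remark that $T_n(k)$ is small near $\fb$ does not replace this. In that regime $|T_n(k)|$ is bounded only by a multiple of the noise scale, so you still need $V_n(k)$ bounded below at that scale, uniformly in $k$.
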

	
	\begin{proof}
		we only prove $	\max_{k\in\A_{1n}}\frac{T_n(k)}{\sqrt{V_n(k)}} = o_p(n^{1/2-\delta_{p+1}})$ since the proof for $	\max_{k\in\A_{2n}}\frac{T_n(k)}{\sqrt{V_n(k)}} = o_p(n^{1/2-\delta_{p+1}})$ is analogous. Note that in this setting,
		\begin{align}
			\cP_n = &a_1\Delta^\ast_1{+}(a_1{+}a_2)\Delta^\ast_2{+}\cdots{+}({a_1{+}\cdots{+}a_p})\Delta^\ast_p{+}\Delta^\ast_{p+1}\nonumber \\
			&{+}(1{-}b_1)\Delta^\ast_{p+2}  {+}  (1{-}b_1{-}b_2) \Delta^\ast_{p+3}{+}\cdots{+}(1{-}b_1{-}\cdots{-}b_q)\Delta^\ast_{p+q+1},   \label{def_pn_1cp}
		\end{align}		
		where $\{a_i\}$ is defined in Section \ref{sec_0cp}, $b_1 = \frac{r^\ast_{p+2}-1+b}{1-b},b_{q+1} = \frac{1-r^\ast_{p+q+1}}{1-b}$ and $b_i = \frac{r^\ast_{p+i+1}-r^\ast_{p+i}}{1-b}$ for $i=2,3,\dots,q$. For $k\in\A_{1n}$, we have 
		\begin{align}
			T_n(k) = & {-}\frac{(k{-}nb)(n{-}nb{-}k^\ast_{p+1})}{(n-2nb)^{3/2}}\big\li \bar Y_1'{-}\bar Y_3'{-}b\sqrt{n}\cP_n, \Delta^\ast_{p+1}\big\ri_\cH \nonumber \\
			& {+}\big\li \bar Y_1'{-}\bar Y_3'{-}b\sqrt{n}\cP_n, \frac{1}{\sqrt{n-2nb}}\big[S_{nb+1,k}^{Y'}-\frac{k{-}nb}{n{-}2nb}S_{nb+1,n-nb}^{Y'}\big]\big\ri_\cH \label{eq_1cp_1}
		\end{align}
		\begin{align}
			V_n(k) =& \frac{1}{(n{-}2nb)^{2}}\sum_{t=k+1}^{n-nb}\Big\{S_{t,n-nb}{-}\frac{n{-}nb{-}t{+}1}{n{-}nb{-}k}S_{k+1,n-nb}\Big\}^2{+}\frac{1}{(n{-}2nb)^{2}}\sum_{t=nb+1}^{k}\Big\{S_{nb+1,t}{-}\frac{t{-}nb}{k{-}nb}S_{nb+1,k}\Big\}^2 \nonumber \\
			=&R_n(k)+L_n(k) .
		\end{align}		
		Let $R_{1n}(k)=\frac{1}{(n{-}2nb)^{2}}\sum_{t=k^\ast_{p+1}{+}1}^{n{-}nb}\Big\{S_{t,n-nb}{-}\frac{n{-}nb{-}t{+}1}{n{-}nb{-}k}S_{k+1,n-nb}\Big\}^2$, then
		\begin{align}
			R_{1n}(k) = & \frac{1}{(n{-}2nb)^{2}}\sum_{t=k^\ast_{p+1}{+}1}^{n{-}nb}\Big\{\big\li \bar Y_1'{-}\bar Y_3'{-}b\sqrt{n}\cP_n,S^{Y'}_{t,n-nb} {-}\frac{n{-}nb{-}t{+}1}{n{-}nb{-}k}S^{Y'}_{k+1,n-nb}    \big\ri_\cH  \nonumber \\
			&\qquad\qquad\qquad\qquad{+}\frac{(n{-}nb{-}t{+}1)(k^\ast_{p+1}{-}k)}{n{-}nb{-}k}    \big\li \bar Y'_1{-}\bar Y'_3{-}b\sqrt{n}\cP_n,\Delta^\ast_{p+1}   \big\ri_\cH    \Big\}^2 \nonumber \\
			=& \frac{1}{(n{-}2nb)^{2}}\sum_{t=k^\ast_{p+1}{+}1}^{n{-}nb}\Big\{R_{11}(k,t){+}R_{12}(k,t) \Big\}^2,
		\end{align}				
		\begin{align}
			L_n(k) = & \frac{1}{(n{-}2nb)^{2}}\sum_{t=nb{+}1}^{k}\big\li \bar Y_1'{-}\bar Y_3'{-}b\sqrt{n}\cP_n,S^{Y'}_{nb+1,t} {-}\frac{t{-}nb}{k{-}nb}S^{Y'}_{nb+1,k}    \big\ri_\cH^2 \nonumber 
		\end{align}			
		and $R_n(k) \geq R_{1n}(k)$. According to the definition of $\cP_n$ in Equation (\ref{def_pn_1cp}) and Assumption \ref{assump_wbs}, we know $\cP_n = \theta_{p+1}f_{p+1}+\kappa_{p+1} f'_{p+1}{+}f^n_{p+1}$ for some $f'_{p+1},f^n_{p+1}\in \{f_{p+1}\}^\perp$ such that $\|f^n_{p+1}\|_\cH =o(\kappa_{p+1})$. For $\theta_{p+1}$, it is either $\theta_{p+1}=0$ or $\theta_{p+1} \sim n^{-\zeta_{p+1}}$ for some $\zeta_{p+1}\in [0,1/2)$. Similarly $\kappa_{p+1}=0$ (in this case we have $\|f^n_{p+1}\|_\cH =0$) or $\kappa_{p+1} \sim n^{-\zeta_{p+1}'}$ for some $\zeta_{p+1}'\in [0,1/2)$. We consider six different scenarios.
		
		\noindent
		\textbf{(1) $\theta_{p+1} = \kappa_{p+1}=0$.}
		
		From Equation (\ref{eq_1cp_1}), $\max_{k\in\A_{1n}}|T_n(k)| = O_p(n^{1/2-\delta_{p+1}})$. Since 
		\small
		\begin{align}
			& \max_{k\in\A_{1n}} \frac{1}{(n{-}2nb)^{2}}\sum_{t=k^\ast_{p+1}{+}1}^{n{-}nb}R_{11}(k,t)^2  \\
            = & \max_{k\in\A_{1n}} \frac{1}{(n{-}2nb)^{2}}\sum_{t=k^\ast_{p+1}{+}1}^{n{-}nb}\big\li \bar Y_1'{-}\bar Y_3',S^{Y'}_{t,n-nb} {-}\frac{n{-}nb{-}t{+}1}{n{-}nb{-}k}S^{Y'}_{k+1,n-nb}    \big\ri_\cH ^2 = O_p(1) \nonumber
		\end{align}\normalsize
		and
		\begin{align}
			\min_{k\in\A_{1n}}	\frac{1}{(n{-}2nb)^{2}}\sum_{t=k^\ast_{p+1}{+}1}^{n{-}nb}R_{12}(k,t)^2  =& \min_{k\in\A_{1n}}	\frac{n^{-2\delta_{p+1}}(k^\ast_{p+1}{-}k)^2}{(n{-}2nb)^{2}(n{-}nb{-}k)^2}\li\bar Y'_1{-}\bar Y'_3,f_{p+1} \ri_\cH^2\sum_{t=k^\ast_{p+1}{+}1}^{n{-}nb}(n{-}nb{-}t{+}1)^2 \nonumber \\
			\geq & Cn^{-1}(\epsilon_{(p+1)n}n^{-\delta_{p+1}})^2\li\bar Y'_1{-}\bar Y'_3,f_{p+1} \ri_\cH^2 \stackrel{p}{\to} \infty,
		\end{align}
		we have $\max_{k\in\A_{1n}}V_n(k)^{-1} = o_p(1)$, which implies $\max_{k\in\A_{1n}}\frac{T_n(k)}{\sqrt{V_n(k)}} = o_p(n^{1/2-\delta_{p+1}})$.
		
		\noindent
		\textbf{(2) $\theta_{p+1} =0$, $ \kappa_{p+1}\sim n^{-\zeta_{p+1}'}$ and $\zeta_{p+1}'<\delta_{p+1}$.}
		
		From Equation (\ref{eq_1cp_1}), $\max_{k\in\A_{1n}}|T_n(k)| = O_p(n^{1/2-\zeta_{p+1}'})$. Since 
		\small
		\begin{align}
			& \min_{k\in\A_{1n}} \frac{1}{(n{-}2nb)^{2}}\sum_{t=k^\ast_{p+1}{+}1}^{n{-}nb}R_{11}(k,t)^2 \\ = &\min_{k\in\A_{1n}} \frac{1}{(n{-}2nb)^{2}}\sum_{t=k^\ast_{p+1}{+}1}^{n{-}nb}\big\li \bar Y_1'{-}\bar Y_3'{-}b\sqrt{n}\cP_n,S^{Y'}_{t,n-nb} {-}\frac{n{-}nb{-}t{+}1}{n{-}nb{-}k}S^{Y'}_{k+1,n-nb}    \big\ri_\cH ^2 \nonumber \\
			= & O_p^s(n^{1-2\zeta_{p+1}'})
		\end{align}\normalsize
		and 
		\begin{align}
			\max_{k\in\A_{1n}}	\frac{1}{(n{-}2nb)^{2}}\sum_{t=k^\ast_{p+1}{+}1}^{n{-}nb}R_{12}(k,t)^2  \leq& \max_{k\in\A_{1n}}	\frac{Cn^{2-2\delta_{p+1}}}{(n{-}2nb)^{2}(n{-}nb{-}k)^2}\li\bar Y'_1{-}\bar Y'_3,f_{p+1} \ri_\cH^2\sum_{t=k^\ast_{p+1}{+}1}^{n{-}nb}(n{-}nb{-}t{+}1)^2 \nonumber \\
			\leq & Cn^{1-2\delta_{p+1}}\li\bar Y'_1{-}\bar Y'_3,f_{p+1} \ri_\cH^2 ,
		\end{align}
		we have $\max_{k\in\A_{1n}}V_n(k)^{-1} = O_p(n^{-1+2\zeta_{p+1}'})$, which implies $\max_{k\in\A_{1n}}\frac{T_n(k)}{\sqrt{V_n(k)}} = O_p(1)$.

		\noindent
		\textbf{(3) $\theta_{p+1} =0$, $ \kappa_{p+1}\sim n^{-\zeta_{p+1}'}$ and $\zeta_{p+1}'>\delta_{p+1}$.}
		
		In this case $\max_{k\in\A_{1n}}|T_n(k)| = O_p(n^{1/2-\delta_{p+1}})$. Let $\A_{1n}^1 = \A_{1n}\cap \{k\leq n\frac{b+r^\ast_{p+1}}{2}\}$ and $\A_{1n}^2 = \A_{1n}\cap \{k> n\frac{b+r^\ast_{p+1}}{2}\}$, then we have 
		\small
		\begin{align}
			\min_{k\in\A_{1n}^1}R_n(k) \geq  &\min_{k\in\A_{1n}^1} \frac{1}{(n{-}2nb)^{2}}\sum_{t=k^\ast_{p+1}{+}1}^{n{-}nb}  R_{12}(k,t)^2 = O_p^s(n^{1-2\delta_{p+1}})
		\end{align}\normalsize
		and $\min_{k\in\A_{1n}^2}V_n(k) \geq \min_{k\in\A_{1n}^2}L_n(k)  =O_p^s(n^{1-2\zeta_{p+1}'})$, which implies $\max_{k\in\A_{1n}}V_n(k)^{-1} = o_p(1)$ and $\max_{k\in\A_{1n}}\frac{T_n(k)}{\sqrt{V_n(k)}} =  o_p(n^{1/2-\delta_{p+1}})$.
		
		\noindent
		\textbf{(4) $\theta_{p+1} =0$, $ \kappa_{p+1}\sim n^{-\zeta_{p+1}'}$ and $\zeta_{p+1}'=\delta_{p+1}$.}
		
		In this case $\max_{k\in\A_{1n}}|T_n(k)| = O_p(n^{1/2-\delta_{p+1}})$ and
		\small\begin{align}
			&\min_{nb+1\leq k\leq k^\ast_{p+1}}\frac{n^{-1+2\delta_{p+1}}}{(n{-}2nb)^{2}}\sum_{t=k^\ast_{p+1}+1}^{n-nb}\big\{R_{11}(k,t){+}R_{12}(k,t)\big\}^2\nonumber \\
			\stackrel{\D}{\to}&\frac{1}{(1{-}2b)^2}\inf_{r\in[b,r^\ast_{p+1}]}\int_{r^\ast_{p+1}}^{1-b}\Big\{{-} \big\li Cbf_{p+1}', B_Q(1{-}b){-}B_Q(s){-} \frac{1{-}b{-}s}{1{-}b{-}r} \big[B_Q(1{-}b){-}B_Q(r)\big]   \big\ri_\cH                      \nonumber \\
			&\qquad\qquad\qquad\qquad\qquad\qquad\qquad{+}\big\li B_Q(b){-}B_Q(1){+}B_Q(1{-}b),  \frac{(1{-}b{-}r)(r^\ast_{p+1}{-}r)}{1-b-r} f_{p+2}  \big\ri_\cH\Big\}^2ds, \nonumber
		\end{align}	\normalsize	
		which implies  $\max_{k\in\A_{1n}}V_n(k)^{-1} = o_p(n^{-1+2\delta_{p+1}})$ and $\max_{k\in\A_{1n}}\frac{T_n(k)}{\sqrt{V_n(k)}} = O_p(1)$.
		
		\noindent
		\textbf{(5) $\theta_{p+1} \sim n^{-\zeta_{p+1}}$, $ \kappa_{p+1}=0$ or $ \kappa_{p+1}\sim n^{-\zeta_{p+1}'}$ with $\zeta_{p+1}'\geq\zeta_{p+1}$.}		
		
		From Equation (\ref{eq_1cp_1}), $\max_{k\in\A_{1n}}|T_n(k)| = O_p(n^{1-\zeta_{p+1}-\delta_{p+1}})$. Similar to the proof in part (1), we have
		\begin{align}
			\max_{k\in\A_{1n}} \frac{1}{(n{-}2nb)^{2}}\sum_{t=k^\ast_{p+1}{+}1}^{n{-}nb}R_{11}(k,t)^2 	= & O_p(n^{1-2\zeta_{p+1}}) \nonumber \\
			\min_{k\in\A_{1n}}	\frac{1}{(n{-}2nb)^{2}}\sum_{t=k^\ast_{p+1}{+}1}^{n{-}nb}R_{12}(k,t)^2 	\geq  & Cn^{-1}(\epsilon_{(p+1)n}n^{-\delta_{p+1}})^2n^{1-2\zeta_{p+1}}(1{+}o_p(1)),
		\end{align}
		which implies $\max_{k\in\A_{1n}}V_n(k)^{-1} = o_p(n^{-1+2\zeta_{p+1}})$ and $\max_{k\in\A_{1n}}\frac{T_n(k)}{\sqrt{V_n(k)}} = o_p(n^{1/2-\delta_{p+1}})$.

		\noindent
		\textbf{(6) $\theta_{p+1} \sim n^{-\zeta_{p+1}}$ and $ \kappa_{p+1}\sim n^{-\zeta_{p+1}'}$ with $\zeta_{p+1}'<\zeta_{p+1}$.}		
		
		From Equation (\ref{eq_1cp_1}), $\max_{k\in\A_{1n}}|T_n(k)| = O_p(n^{1-\zeta_{p+1}-\delta_{p+1}}+n^{1/2-\zeta_{p+1}'})$. Similar to the proof in part (4), we have
		\begin{align}
			\min_{k\in\A_{1n}^1}\frac{1}{(n{-}2nb)^{2}}\sum_{t=k^\ast_{p+1}{+}1}^{n{-}nb}\Big\{R_{11}(k,t){+}R_{12}(k,t) \Big\}^2=O_p^s(n^{2-2\zeta_{p+1}-2\delta_{p+1}}+n^{1-2\zeta_{p+1}'}),
		\end{align}		
		and $\min_{k\in\A_{1n}^2}L_n(k)  =O_p^s(n^{1-2\zeta_{p+1}'})$. If $1/2{-}\zeta_{p+1}'\geq 1{-}\zeta_{p+1}{-}\delta_{p+1}$, then we have $\max_{k\in\A_{1n}}V_n(k)^{-1} = O_p(n^{-1{+}2\zeta_{p+1}'})$ and $\max_{k\in\A_{1n}}\frac{T_n(k)}{\sqrt{V_n(k)}} = O_p(1)$. If $1/2{-}\zeta_{p+1}'< 1{-}\zeta_{p+1}{-}\delta_{p+1}$, then we have $\max_{k\in\A_{1n}}V_n(k)^{-1} = O_p(n^{-1{+}2\zeta_{p+1}'})=o_p(n^{-1+2\zeta_{p+1}})$ and $\max_{k\in\A_{1n}}\frac{T_n(k)}{\sqrt{V_n(k)}} =  o_p(n^{1/2-\delta_{p+1}})$.

	\end{proof}

	\subsection{Two change point setting}	\label{sec_2cp}		
	
	For nonnegative integers $q,p$ such that $0\leq p+q\leq m_0-2$, assume there are $p+q+2$ change points with relative location $0<r^\ast_1<\cdots<r^\ast_p<b<r^\ast_{p+1}<r^\ast_{p+2}<1-b<r^\ast_{p+3}<\cdots<r^\ast_{p+q+2}<1$. In this setting 
	\begin{align}
		\cP_n = &a_1\Delta^\ast_1{+}(a_1{+}a_2)\Delta^\ast_2{+}\cdots{+}({a_1{+}\cdots{+}a_p})\Delta^\ast_p{+}\Delta^\ast_{p+1}{+}\Delta^\ast_{p+2}\nonumber \\
		&{+}(1{-}b_1)\Delta^\ast_{p+3}{+}(1{-}b_1{-}b_2) \Delta^\ast_{p+4}{+}\cdots{+}(1{-}b_1{-}\cdots{-}b_q)\Delta^\ast_{p+q+2},   \label{def_pn_2cp}
	\end{align}		
	where $\{a_i\}$ are defined in section \ref{sec_0cp}, $b_1 = \frac{r^\ast_{p+3}-1+b}{1-b},b_{q+1} = \frac{1-r^\ast_{p+q+2}}{1-b}$ and $b_i = \frac{r^\ast_{p+i+2}-r^\ast_{p+i+1}}{1-b}$ for $i=2,3,\dots,q$.
	Let $\A_{1n}=\{k\geq \fb+1|k\leq k^\ast_{p+1}-\epsilon_{(p+1)n},\,k\in \mathcal{Z}\}$, $\A_{2n}=\{k^\ast_{p+1}{+}\epsilon_{(p+1)n} \leq k\leq k^\ast_{p+2}-\epsilon_{(p+2)n},\,k\in \mathcal{Z}\}$ and $\A_{3n}=\{k\leq n-\fb|k\geq k^\ast_{p+2}+\epsilon_{(p+2)n},\,k\in \mathcal{Z}\}$, we show that no change point can be selected inside $\A_{1n}\cup\A_{2n}\cup\A_{3n}$.
	
	\begin{lemma}\label{lemma_2cp}
		Suppose Assumption \ref{assump_wbs} holds, then
		\begin{align}
			\max_{k\in\A_{1n}\cup\A_{2n}\cup\A_{3n}}\frac{T_n(k)}{\sqrt{V_n(k)}} = o_p(n^{1/2-\min\{\delta_{p+1},\delta_{p+2}\}})
		\end{align}
	\end{lemma}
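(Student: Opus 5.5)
I would follow the template of Lemma \ref{lemma_1cp}, treating $\A_{1n}$, $\A_{2n}$, $\A_{3n}$ separately; by symmetry ($\A_{3n}$ mirrors $\A_{1n}$) it suffices to handle $\A_{1n}$ and $\A_{2n}$.

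\textbf{Decomposition.} As before, write $\bar Y_1-\bar Y_3=\bar Y_1'-\bar Y_3'-b\sqrt n\cP_n$ with $\cP_n$ given by (\ref{def_pn_2cp}). For $k\in\A_{1n}$ I expand $T_n(k)$ exactly as in (\ref{eq_1cp_1}). The numerator now contains a deterministic drift that is a linear combination of $\li \bar Y_1'-\bar Y_3'-b\sqrt n\cP_n,\Delta^\ast_{p+1}\ri_\cH$ and $\li \bar Y_1'-\bar Y_3'-b\sqrt n\cP_n,\Delta^\ast_{p+2}\ri_\cH$, with coefficients of order $n^{1/2}$. The remaining term is a noise term of order $\|\bar Y_1'-\bar Y_3'-b\sqrt n\cP_n\|_\cH$. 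For $k\in\A_{2n}$ the drift involves the same two inner products, but with different coefficients.

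For the denominator I lower bound $V_n(k)$ by one of its two sums, restricted to a block containing an unused change point:
\begin{itemize}
\item For $k\in\A_{1n}$, restrict the right sum to $t\in(k^\ast_{p+2},n-nb]$, or alternatively to $t\in(k^\ast_{p+1},k^\ast_{p+2}]$.
\item For $k\in\A_{2n}$, use the left sum over $t\in(nb,k^\ast_{p+1}]$, whose misfit coefficient is of order $(k-k^\ast_{p+1})\ge\epsilon_{(p+1)n}$, together with the right sum over $t>k^\ast_{p+2}$, whose misfit coefficient is of order $(k^\ast_{p+2}-k)\ge \epsilon_{(p+2)n}$.
\end{itemize}
Each restricted block has the form $R_{11}+R_{12}$, where $R_{12}$ is the deterministic misfit, namely $\epsilon_{in}$ times an inner product with $\Delta^\ast_i$.

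\textbf{Case analysis.} I decompose $\cP_n$ along $f_{p+1}$, $f_{p+2}$ and their orthogonal complement:
\[
\cP_n=\theta_{p+1}f_{p+1}+\theta_{p+2}f_{p+2}+\kappa f'+f^n .
\]
The coefficients are either zero or of polynomial order $n^{-\zeta}$. When $f_{p+1}$ and $f_{p+2}$ are collinear, I merge them, and the problem reduces to the one-change-point analysis. I then rerun the six scenarios of Lemma \ref{lemma_1cp}, now comparing the rates $\zeta$ against $\delta_{p+1}$ and $\delta_{p+2}$. In each scenario two facts should hold:
\begin{itemize}
\item $\max|T_n(k)|$ is at most $O_p(n^{1/2-\zeta'})+O_p(n^{1-\zeta-\delta_{p+1}})+O_p(n^{1-\zeta-\delta_{p+2}})$.
\item $\min V_n(k)$ is bounded below either by the noise block $O_p^s(n^{1-2\zeta'})$, or by $n^{-1}(\epsilon_{in}\|\Delta_i^\ast\|)^2$ times the squared drift.
\end{itemize}
The second of these exceeds the squared numerator because $n(\epsilon_{in}\|\Delta^\ast_i\|_\cH)^{-2}\to0$. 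As in cases (3) and (6) of Lemma \ref{lemma_1cp}, I split $\A_{1n}$ at a midpoint, so that the left sum or the right sum dominates on each piece. The conclusion in every scenario is either $O_p(1)$ or $o_p(n^{1/2-\min\{\delta_{p+1},\delta_{p+2}\}})$.

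\textbf{Main obstacle.} The main obstacle is the possible cancellation in $\A_{2n}$, and in the drift generally. The effective projected jumps $\li\cP_n,\Delta^\ast_{p+1}\ri_\cH$ and $\li\cP_n,\Delta^\ast_{p+2}\ri_\cH$ may have opposite signs, or one may vanish. In that case the numerator drift does not vanish at the same rate as the denominator misfit, and the two are not automatically comparable.

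My first attempt is to bound the numerator by $\sqrt n$ times the larger of the two projected drifts. I would then show that $V_n$ picks up at least the squared larger drift, multiplied by $n^{-1}\epsilon_{in}^2\|\Delta^\ast_i\|^2\to\infty$, from whichever side block carries that drift. This uses the fact that within $\A_{2n}$ both misfit coefficients are nondegenerate.

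The degenerate subcase $\theta$-terms of equal order with exact cancellation would be handled by the limit-distribution argument of case (4). That argument shows that the infimum of the normalized $V_n$ converges to an almost surely positive functional of $B_Q$, which gives the required $O_p(1)$ bound.
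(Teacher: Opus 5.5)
Your architecture matches the paper's. You handle $\A_{1n}$ and $\A_{2n}$ separately (with $\A_{3n}$ by symmetry), decompose $\cP_n$ along $f_{p+1},f_{p+2}$, compare rates, split at midpoints, and use limit functionals when rates tie. The gap is that you place the one genuinely new difficulty of the two-change-point case in the wrong region, and you never resolve it where it actually occurs.

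Write $I_j=\li \bar Y_1-\bar Y_3,\Delta^\ast_j\ri_\cH$ and $a=n-nb$. On $\A_{2n}$ there is no interference in $V_n$. The left sum over $t\le k^\ast_{p+1}$ sees only $I_{p+1}$, and the right sum over $t>k^\ast_{p+2}$ sees only $I_{p+2}$. So your ``whichever side block'' argument is fine there; cancellation in the numerator only helps.

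On $\A_{1n}$, however, both change points lie to the right of $k$. Your claim that each restricted block is $R_{11}+R_{12}$, with a single misfit equal to $\epsilon_{in}$ times an inner product with $\Delta^\ast_i$, is false there. The block $t>k^\ast_{p+2}$ has misfit $\frac{a-t+1}{a-k}\{(k^\ast_{p+1}-k)I_{p+1}+(k^\ast_{p+2}-k)I_{p+2}\}$, which can vanish for every $t$ at some $k$ when $I_{p+1}I_{p+2}<0$. Take $\cP_n=0$, $\delta_{p+1}<\delta_{p+2}$, and $\tilde a_{p+1}\tilde a_{p+2}<0$, where $\tilde a_j=\li\bar Y_1'-\bar Y_3',f_j\ri_\cH$; this is the event the paper singles out. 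Then the zero sits at $k^\ast_{p+1}-k\asymp n^{1-\delta_{p+2}+\delta_{p+1}}\gg n^{1/2+\delta_{p+1}}$. It therefore lies inside $\A_{1n}$ for admissible $\epsilon_{(p+1)n}$ of smaller order than $n^{1-\delta_{p+2}+\delta_{p+1}}$. At that $k$, $T_n(k)$ is still of exact order $n^{1/2-\delta_{p+1}}$, so this block alone gives only a noise lower bound of order one and no $o_p(n^{1/2-\delta_{p+1}})$ conclusion.

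The other block, $(k^\ast_{p+1},k^\ast_{p+2}]$, carries $R_{12}+R_{13}$, with coefficients $\frac{(a-t+1)(k^\ast_{p+1}-k)}{a-k}$ and $\frac{(a-k^\ast_{p+2})(t-1-k)}{a-k}$. These can also cancel pointwise. So a Lemma \ref{lemma_1cp}-type bound from $R_{12}$ alone is unjustified, and ``rerunning the six scenarios'' does not cover this case. Merging collinear $f_{p+1},f_{p+2}$ does not reduce matters to one change point either, since the two jumps still occur at different times.

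The paper confronts exactly this in scenario (1) of Section \ref{sec_2cp_1}, case $\delta_{p+1}<\delta_{p+2}$, and reuses it in scenario (4). It splits $\A_{1n}$ according to whether $k^\ast_{p+1}-k$ is below or above $(\log n)^{-1}n^{1-\delta_{p+2}+\delta_{p+1}}$. In the first region $R_{13}$ dominates for $t\ge (r^\ast_{p+1}+\epsilon)n$; in the second, $R_{12}$ dominates for $t\le k^\ast_{p+1}+n^{1-\epsilon}$. This gives (\ref{eq_2cp4}) via (\ref{eq_2cp_2})--(\ref{eq_2cp_3}).

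A shorter repair that fits your plan uses the fact that $g(t)=R_{12}(k,t)+R_{13}(k,t)$ is affine in $t$. Its values at $t=k^\ast_{p+1}+1$ and $t=k^\ast_{p+2}+1$ are
$\frac{k^\ast_{p+1}-k}{a-k}\{(a-k^\ast_{p+1})I_{p+1}+(a-k^\ast_{p+2})I_{p+2}\}$ and $\frac{a-k^\ast_{p+2}}{a-k}\{(k^\ast_{p+1}-k)I_{p+1}+(k^\ast_{p+2}-k)I_{p+2}\}$.
The underlying $2\times2$ system in $(I_{p+1},I_{p+2})$ has determinant $(a-k)(k^\ast_{p+2}-k^\ast_{p+1})\asymp n^2$. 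Invert it, and use that the sum of squares of an affine function over $L$ consecutive integers is at least a constant times $L$ times its larger end value squared. This yields, uniformly on $\A_{1n}$,
\[
(n-2nb)^{-2}\sum_{t=k^\ast_{p+1}+1}^{k^\ast_{p+2}}g(t)^2\ \ge\ Cn^{-1}\big\{n|I_{p+2}|+(k^\ast_{p+1}-k)|I_{p+1}|\big\}^2 .
\]
The numerator drift is at most $C\sqrt n(|I_{p+1}|+|I_{p+2}|)$, and $k^\ast_{p+1}-k\le n$. Hence the drift contribution to the ratio is $O(n/\epsilon_{(p+1)n})=o(n^{1/2-\delta_{p+1}})$, regardless of signs. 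You need some such argument, on $\A_{1n}$ and symmetrically on $\A_{3n}$, before your noise-versus-drift case analysis goes through.
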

	
	\subsubsection{No change point can be selected on $\A_{1n}$}\label{sec_2cp_1}
	
	For $k\in\A_{1n}$,
	\begin{align}
		T_n(k) = & {-}\frac{(k{-}nb)n}{(n-2nb)^{3/2}}\big\li \bar Y_1'{-}\bar Y_3'{-}b\sqrt{n}\cP_n,(1{-}b{-}r^\ast_{p+1})\Delta^\ast_{p+1}{+} (1{-}b{-}r^\ast_{p+2})\Delta^\ast_{p+2} \big\ri_\cH \nonumber \\
		& {+}\big\li \bar Y_1'{-}\bar Y_3'{-}b\sqrt{n}\cP_n, \frac{1}{\sqrt{n-2nb}}\big[S_{nb+1,k}^{Y'}-\frac{k{-}nb}{n{-}2nb}S_{nb+1,n-nb}^{Y'}\big]\big\ri_\cH. \label{eq_2cp_1}
	\end{align}
	Similar to Section \ref{sec_1cp}, define $R_{1n}(k) = \frac{1}{(n{-}2nb)^{2}}\sum_{t=k^\ast_{p+1}+1}^{k^\ast_{p+2}}\Big\{S_{t,n-nb}{-}\frac{n{-}nb{-}t{+}1}{n{-}nb{-}k}S_{k+1,n-nb}\Big\}^2$ and $L_{n}(k) =\frac{1}{(n{-}2nb)^{2}}\sum_{t=nb+1}^{k}\Big\{S_{nb+1,t}{-}\frac{t{-}nb}{k{-}nb}S_{nb+1,k}\Big\}^2$, then
	\small\begin{align}		
		R_{1n}(k) = & \frac{1}{(n{-}2nb)^{2}}\sum_{t=k^\ast_{p+1}+1}^{k^\ast_{p+2}}\Big\{\big\li \bar Y_1'{-}\bar Y_3'{-}b\sqrt{n}\cP_n,S^{Y'}_{t,n-nb} {-}\frac{n{-}nb{-}t{+}1}{n{-}nb{-}k}S^{Y'}_{k+1,n-nb}    \big\ri_\cH  \nonumber \\
		&{+}  \frac{(n{-}nb{-}t{+}1)(k^\ast_{p+1}{-}k)}{n{-}nb{-}k}\big\li \bar Y'_1{-}\bar Y'_3{-}b\sqrt{n}\cP_n,  \Delta^\ast_{p+1}\big\ri_\cH{+}\frac{(n{-}nb{-}k^\ast_{p+2})(t{-}1{-}k)}{n{-}nb{-}k} \big\li\bar Y'_1{-}\bar Y'_3{-}b\sqrt{n}\cP_n,  \Delta^\ast_{p+2}   \big\ri_\cH    \Big\}^2 \nonumber \\
		=& \frac{1}{(n{-}2nb)^{2}}\sum_{t=k^\ast_{p+1}+1}^{k^\ast_{p+2}}\Big\{R_{11}(k,t){+}R_{12}(k,t){+}R_{13}(k,t) \Big\}^2,
	\end{align}	
	\begin{align}
		L_n(k) = & \frac{1}{(n{-}2nb)^{2}}\sum_{t=nb{+}1}^{k}\big\li \bar Y_1'{-}\bar Y_3'{-}b\sqrt{n}\cP_n,S^{Y'}_{nb+1,t} {-}\frac{t{-}nb}{k{-}nb}S^{Y'}_{nb+1,k}    \big\ri_\cH^2 \nonumber 
	\end{align}		\normalsize	
	$R_n(k) \geq R_{1n}(k)$ and $V_n(k) =R_n(k)+L_n(k)$. We consider the following different scenarios.	
	
	\noindent
	\textbf{(1) $\cP_n=0$.}	
	
	First, if $\delta_{p+1}>\delta_{p+2}$, then $\max_{k\in\A_{1n}}|T_n(k)| = O_p(n^{1/2-\delta_{p+2}})$ and
	\begin{align}
		\max_{k\in\A_{1n}}\frac{1}{(n{-}2nb)^{2}}\sum_{t=k^\ast_{p+1}+1}^{k^\ast_{p+2}}R_{11}(k,t)^2 =& O_p(1),\nonumber \\
		\max_{k\in\A_{1n}}\frac{1}{(n{-}2nb)^{2}}\sum_{t=k^\ast_{p+1}+1}^{k^\ast_{p+2}}R_{12}(k,t)^2 =& O_p(n^{1-2\delta_{p+1}}),\nonumber \\
		\min_{k\in\A_{1n}}\frac{1}{(n{-}2nb)^{2}}\sum_{t=k^\ast_{p+1}+1}^{k^\ast_{p+2}}R_{13}(k,t)^2 =&O_p^s(n^{1-2\delta_{p+2}}).\nonumber
	\end{align}
	So we have 	$\max_{k\in\A_{1n}}V_n(k)^{-1}\geq \max_{k\in\A_{1n}}R_{1n}(k)^{-1} = O_p(n^{-1+2\delta_{p+2}})$ and $\max_{k\in\A_{1n}}\frac{T_n(k)}{\sqrt{V_n(k)}} =  O_p(1)$.
	
	Second, if $\delta_{p+1}=\delta_{p+2}$, then $\max_{k\in\A_{1n}}|T_n(k)| = O_p(n^{1/2-\delta_{p+1}})$ and
	\small\begin{align}
		&\min_{nb+1\leq k\leq k^\ast_{p+1}}\frac{n^{-1+2\delta_{p+1}}}{(n{-}2nb)^{2}}\sum_{t=k^\ast_{p+1}+1}^{k^\ast_{p+2}}\big\{R_{12}(k,t){+}R_{13}(k,t)\big\}^2\nonumber \\
		\stackrel{\D}{\to}&\frac{1}{(1{-}2b)^2}\inf_{r\in[b,r^\ast_{p+1}]}\int_{r^\ast_{p+1}}^{r^\ast_{p+2}}\Big\li B_Q(b){-}B_Q(1){+}B_Q(1{-}b), \frac{(1{-}b{-}s)(r^\ast_{p+1}{-}r)}{1-b-r} f_{p+1}{+} \frac{(1{-}b{-}r^\ast_{p+2})(s{-}r)}{1-b-r} f_{p+2}  \Big\ri_\cH^2ds. \nonumber
	\end{align}	\normalsize
	So we have $\min_{k\in\A_{1n}}R_{1n}(k) = O^s_p(n^{1-2\delta_{p+1}})$ and $\max_{k\in\A_{1n}}V_n(k)^{-1} = O_p(n^{-1+2\delta_{p+1}})$, which implies $\max_{k\in\A_{1n}}\frac{T_n(k)}{\sqrt{V_n(k)}} =  O_p(1)$.
	
	At last, if $\delta_{p+1}<\delta_{p+2}$, then $\max_{k\in\A_{1n}}|T_n(k)| = O_p(n^{1/2-\delta_{p+1}})$. We now show that 
	\begin{align}
		\min_{k\in\A_{1n}}\frac{n^{-1+\epsilon+2\delta_{p+2}}}{(n{-}2nb)^{2}}\sum_{t=k^\ast_{p+1}+1}^{k^\ast_{p+2}}\big\{R_{12}(k,t){+}R_{13}(k,t)\big\}^2\stackrel{p}{\to}\infty \text{ for any small }\epsilon>0,\label{eq_2cp4}
	\end{align}
	which implies $\max_{k\in\A_{1n}}V_n(k)^{-1} = o_p(1)$ and $\max_{k\in\A_{1n}}\frac{T_n(k)}{\sqrt{V_n(k)}} =  o_p(n^{1/2-\delta_{p+1}})$. Denote $\tilde a _{p+1} = \li \bar Y'_1{-}\bar Y'_3,f_{p+1}\ri_\cH$, $ \tilde a _{p+2} = \li \bar Y'_1{-}\bar Y'_3,f_{p+2}\ri_\cH$	and note that $P(\tilde a _{p+1}\tilde a _{p+2}\neq 0)\to 1$. If $\tilde a _{p+1}\tilde a _{p+2}>0$, then 
	\begin{align}
		\min_{k\in\A_{1n}}\frac{1}{(n{-}2nb)^{2}}\sum_{t=k^\ast_{p+1}+1}^{k^\ast_{p+2}}\big\{R_{12}(k,t){+}R_{13}(k,t)\big\}^2\geq & \min_{k\in\A_{1n}}\frac{1}{(n{-}2nb)^{2}}\sum_{t=k^\ast_{p+1}+1}^{k^\ast_{p+2}}R_{13}(k,t)^2 \nonumber \\
		=&O^s(n^{1-2\delta_{p+2}}). \nonumber 
	\end{align}
	If $\tilde a _{p+1}<0<\tilde a _{p+2}$, denote $\tilde \A_{1n} = \{k\geq k^\ast_{p+1}{-}(\log n)^{-1}n^{1-\delta_{p+2}+\delta_{p+1}}\}$	and we have 
	\begin{align}
		&\min_{k\in\A_{1n}}\frac{1}{(n{-}2nb)^{2}}\sum_{t=k^\ast_{p+1}+1}^{k^\ast_{p+2}}\big\{R_{12}(k,t){+}R_{13}(k,t)\big\}^2 \nonumber\\
		\geq & \frac{1}{(n{-}2nb)^{2}}\sum_{t=k^\ast_{p+1}+1}^{k^\ast_{p+2}}\min_{k\in\A_{1n}}\Big\{ \frac{(n{-}nb{-}t{+}1)(k^\ast_{p+1}{-}k)n^{-\delta_{p+1}}}{n{-}nb{-}k}\tilde a_{p+1}{+}\frac{(n{-}nb{-}k^\ast_{p+2})(t{-}1{-}k)n^{-\delta_{p+2}}}{n{-}nb{-}k} \tilde a_{p+2}\Big\}^2. \nonumber
	\end{align}
	If $k\in \A_{1n}\cap\tilde \A_{1n}$ and $t\geq \lfloor(r^\ast_{p+1}{+}\epsilon)n\rfloor$ for some $\epsilon\in (r^\ast_{p+1},r^\ast_{p+2})$, we have $\frac{(n{-}nb{-}t{+}1)(k^\ast_{p+1}{-}k)n^{-\delta_{p+1}}}{n{-}nb{-}k}|\tilde a_{p+1}|\leq C(\log n)^{-1}n^{1-\delta_{p+2}}$ and $\frac{(n{-}nb{-}k^\ast_{p+2})(t{-}1{-}k)n^{-\delta_{p+2}}}{n{-}nb{-}k} \tilde a_{p+2}\geq Cn^{1-\delta_{p+2}}$ for large enough $n$ where the constant $C$ does not depend on $k$ and $t$. So we have 
	\begin{align}
		&\min_{k\in\A_{1n}\cap\tilde \A_{1n}}\frac{1}{(n{-}2nb)^{2}}\sum_{t=k^\ast_{p+1}+1}^{k^\ast_{p+2}}\big\{R_{12}(k,t){+}R_{13}(k,t)\big\}^2 \nonumber\\
		\geq &  \frac{C}{(n{-}2nb)^{2}}\sum_{t=\lfloor(r^\ast_{p+1}{+}\epsilon)n\rfloor}^{k^\ast_{p+2}}\min_{k\in\A_{1n}\cap\tilde \A_{1n}}R_{13}(k,t)^2\nonumber\\
		=& O^s(n^{1-2\delta_{p+2}}).\label{eq_2cp_2}
	\end{align}
	If 	$k\in \A_{1n}\setminus\tilde \A_{1n}$ and $t\leq k^\ast_{p+1}{+} \lfloor n^{1-\epsilon}\rfloor$ for some $\epsilon\in(0,\min\{1{-}2\delta_{p+2},\delta_{p+2}{-}\delta_{p+1}\})$, we have $\frac{(n{-}nb{-}t{+}1)(k^\ast_{p+1}{-}k)n^{-\delta_{p+1}}}{n{-}nb{-}k}|\tilde a_{p+1}|\geq C\max\{\epsilon_{(p+1)n}n^{-\delta_{p+1}},(\log n)^{-1}n^{1-\delta_{p+2}}\}$ and $\frac{(n{-}nb{-}k^\ast_{p+2})(t{-}1{-}k)n^{-\delta_{p+2}}}{n{-}nb{-}k} \tilde a_{p+2}\leq C\max\{\epsilon_{(p+1)n}n^{-\delta_{p+2}},n^{1-\delta_{p+2}-\epsilon}\}$ for large enough $n$ where the constant $C$ does not depend on $k$ and $t$. Since $\frac{\max\{\epsilon_{(p+1)n}n^{-\delta_{p+1}},(\log n)^{-1}n^{1-\delta_{p+2}}\}}{\max\{\epsilon_{(p+1)n}n^{-\delta_{p+2}},n^{1-\delta_{p+2}-\epsilon}\}}\geq (\log n)^{-1}n^{\epsilon}$, we have 
	\begin{align}
		&\min_{k\in\A_{1n}\setminus\tilde \A_{1n}}\frac{1}{(n{-}2nb)^{2}}\sum_{t=k^\ast_{p+1}+1}^{k^\ast_{p+2}}\big\{R_{12}(k,t){+}R_{13}(k,t)\big\}^2 \nonumber\\
		\geq &  \frac{C}{(n{-}2nb)^{2}}\sum_{t=k^\ast_{p+1}+1}^{k^\ast_{p+1}{+} \lfloor n^{1-\epsilon}\rfloor}\min_{k\in\A_{1n}\setminus\tilde \A_{1n}}R_{12}(k,t)^2\nonumber\\
		\geq & O^s((\log n)^{-2}n^{1-\epsilon-2\delta_{p+2}}){\to}\infty. \label{eq_2cp_3}
	\end{align}	
	By Equations (\ref{eq_2cp_2}) and (\ref{eq_2cp_3}), we get $\min_{k\in\A_{1n}}\frac{1}{(n{-}2nb)^{2}}\sum_{t=k^\ast_{p+1}+1}^{k^\ast_{p+2}}\big\{R_{12}(k,t){+}R_{13}(k,t)\big\}^2\stackrel{p}{\to}\infty$.	
	
	\noindent
	\textbf{(2) $\cP_n\neq 0$ and $\cP_n\in\{f_{p+1},f_{p+2}\}^\perp$.}	
	
	If $\cP_n\neq 0$, we can write $\cP_n = \kappa_{p+1} f'_{p+1}{+}f^n_{p+1}$ for some $f'_{p+1},f^n_{p+1}\in \cH$ such that $\kappa_{p+1} \sim n^{-\zeta_{p+1}'}$, $\zeta_{p+1}'\in [0,1/2)$ and $\|f^n_{p+1}\|_\cH =o(\kappa_{p+1})$.
	In this scenario we have $\li\cP_n,\Delta^\ast_{p+1}\ri_\cH = \li\cP_n,\Delta^\ast_{p+2}\ri_\cH=0$.
	
	First, if $\zeta_{p+1}'<\min\{\delta_{p+1},\delta_{p+2}\}$, then $\max_{k\in\A_{1n}}|T_n(k)| = O_p(n^{1/2-\zeta_{p+1}'})$ and
	\begin{align}
		\min_{k\in\A_{1n}}\frac{1}{(n{-}2nb)^{2}}\sum_{t=k^\ast_{p+1}+1}^{k^\ast_{p+2}}R_{11}(k,t)^2 =& O^s_p(n^{1-2\zeta_{p+1}'}),\nonumber \\
		\max_{k\in\A_{1n}}\frac{1}{(n{-}2nb)^{2}}\sum_{t=k^\ast_{p+1}+1}^{k^\ast_{p+2}}R_{12}(k,t)^2 =& O_p(n^{1-2\delta_{p+1}}),\nonumber \\
		\max_{k\in\A_{1n}}\frac{1}{(n{-}2nb)^{2}}\sum_{t=k^\ast_{p+1}+1}^{k^\ast_{p+2}}R_{13}(k,t)^2 =&O_p(n^{1-2\delta_{p+2}}).\nonumber
	\end{align}	
	So we have 	$\max_{k\in\A_{1n}}V_n(k)^{-1}\geq \max_{k\in\A_{1n}}R_{1n}(k)^{-1} = O_p(n^{-1+2\zeta_{p+1}'})$ and $\max_{k\in\A_{1n}}\frac{T_n(k)}{\sqrt{V_n(k)}} =  O_p(1)$.	
	
	Second, if $\zeta_{p+1}'>\min\{\delta_{p+1},\delta_{p+2}\}$, then $\max_{k\in\A_{1n}}|T_n(k)| = O_p(n^{1/2-\min\{\delta_{p+1},\delta_{p+2}\}})$. Similar to the proof in part (1), when $\delta_{p+1}>\delta_{p+2}$, $\delta_{p+1}=\delta_{p+2}$ or $\delta_{p+1}<\delta_{p+2}$, we can show that $\max_{k\in\A_{1n}}\frac{T_n(k)}{\sqrt{V_n(k)}} = o_p(n^{1/2-\delta_{p+1}})$.
	
	At last, consider the case when $\zeta_{p+1}'=\min\{\delta_{p+1},\delta_{p+2}\}$. If $\delta_{p+1}>\delta_{p+2}$, then $\max_{k\in\A_{1n}}|T_n(k)| = O_p(n^{1/2-\delta_{p+2}})$, $\max_{k\in\A_{1n}}\frac{1}{(n{-}2nb)^{2}}\sum_{t=k^\ast_{p+1}+1}^{k^\ast_{p+2}}R_{12}(k,t)^2 = O_p(n^{1-2\delta_{p+1}})$ and 
	\small\begin{align}
		&\min_{nb+1\leq k\leq k^\ast_{p+1}}\frac{n^{-1+2\delta_{p+2}}}{(n{-}2nb)^{2}}\sum_{t=k^\ast_{p+1}+1}^{k^\ast_{p+2}}\big\{R_{11}(k,t){+}R_{13}(k,t)\big\}^2\nonumber \\
		\stackrel{\D}{\to}&\frac{1}{(1{-}2b)^2}\inf_{r\in[b,r^\ast_{p+1}]}\int_{r^\ast_{p+1}}^{r^\ast_{p+2}}\Big\{{-} \big\li Cbf_{p+1}', B_Q(1{-}b){-}B_Q(s){-} \frac{1{-}b{-}s}{1{-}b{-}r} \big[B_Q(1{-}b){-}B_Q(r)\big]   \big\ri_\cH                      \nonumber \\
		&\qquad\qquad\qquad\qquad\qquad{+}\big\li B_Q(b){-}B_Q(1){+}B_Q(1{-}b),  \frac{(1{-}b{-}r^\ast_{p+2})(s{-}r)}{1-b-r} f_{p+2}  \big\ri_\cH\Big\}^2ds. \label{eq_2cp_4}
	\end{align}	\normalsize	
	So we have $\min_{k\in\A_{1n}}R_{1n}(k) = O^s_p(n^{1-2\delta_{p+2}})$ and $\max_{k\in\A_{1n}}V_n(k)^{-1} = O_p(n^{-1+2\delta_{p+2}})$, which implies $\max_{k\in\A_{1n}}\frac{T_n(k)}{\sqrt{V_n(k)}} =  O_p(1)$.	If $\delta_{p+1}\leq \delta_{p+2}$, $\max_{k\in\A_{1n}}|T_n(k)| = O_p(n^{1/2-\delta_{p+1}})$. As in part (3) of Section \ref{sec_1cp}, let $\A_{1n}^1 = \A_{1n}\cap \{k\leq n\frac{b+r^\ast_{p+1}}{2}\}$ and $\A_{1n}^2 = \A_{1n}\cap \{k> n\frac{b+r^\ast_{p+1}}{2}\}$, then we have $\min_{k\in\A_{1n}^1}V_{n}(k)\geq\min_{k\in\A_{1n}^1}R_{1n}(k) = O_p^s(n^{1-2\delta_{p+1}})$
	and $\min_{k\in\A_{1n}^2}V_n(k) \geq \min_{k\in\A_{1n}^2}L_n(k)  =O_p^s(n^{1-2\delta_{p+1}})$, which implies $\max_{k\in\A_{1n}}\frac{T_n(k)}{\sqrt{V_n(k)}} =  O_p(1)$.
	
	\noindent
	\textbf{(3) $\cP_n\neq 0$,  $\li\cP_n,\Delta^\ast_{p+1}\ri_\cH = 0$ and $\li\cP_n,\Delta^\ast_{p+2}\ri_\cH\neq0$.}		
	
	In this case, $\li\cP_n,\Delta^\ast_{p+2}\ri_\cH \sim n^{-\delta_{p+2}-\zeta_{p+2}}$ for some $\zeta_{p+2}\in[\zeta_{p+1}',1/2)$ and $\max_{k\in\A_{1n}}|T_n(k)| = O_p(n^{1/2-\delta_{p+1}}){+} O_p(n^{1/2-\zeta_{p+1}'}){+} O_p(n^{1-\delta_{p+2}-\zeta_{p+2}})$. If $\max\{n^{1/2-\zeta_{p+1}'},n^{1-\delta_{p+2}-\zeta_{p+2}}\}> n^{1/2-\delta_{p+1}}$, then $\min_{k\in\A_{1n}}R_{1n}(k)$ is of the same order as 
	$$ \min_{k\in\A_{1n}} \frac{1}{(n{-}2nb)^{2}}\sum_{t=k^\ast_{p+1}+1}^{k^\ast_{p+2}}\Big\{R_{11}(k,t){+}R_{13}(k,t) \Big\}^2 = O_p^s(\max\{n^{1-2\zeta_{p+1}'},n^{2-2\delta_{p+2}-2\zeta_{p+2}} \}),$$ so we have $\max_{k\in\A_{1n}}V_n(k)^{-1} = O_p(\min\{n^{-1+2\zeta_{p+1}'},n^{-2+2\delta_{p+2}+2\zeta_{p+2}} \})$, which implies $\max_{k\in\A_{1n}}\frac{T_n(k)}{\sqrt{V_n(k)}} =  O_p(1)$. If $\max\{n^{1/2-\zeta_{p+1}'},n^{1-\delta_{p+2}-\zeta_{p+2}}\}\leq n^{1/2-\delta_{p+1}}$, then $\min_{k\in\A_{1n}^1}R_{1n}(k) = O_p^s(n^{1/2-\delta_{p+1}})$, $\min_{k\in\A_{1n}^2}L_n(k)  =O_p^s(n^{1-2\zeta_{p+1}'})$ and $\max_{k\in\A_{1n}}V_n(k)^{-1} = o_p(1)$, which implies $\max_{k\in\A_{1n}}\frac{T_n(k)}{\sqrt{V_n(k)}} =  o_p(n^{1/2-\delta_{p+1}})$.
	
	\noindent
	\textbf{(4) $\cP_n\neq 0$,  $\li\cP_n,\Delta^\ast_{p+1}\ri_\cH \neq 0$ and $\li\cP_n,\Delta^\ast_{p+2}\ri_\cH=0$.}			
	
	In this case, $\li\cP_n,\Delta^\ast_{p+1}\ri_\cH \sim n^{-\delta_{p+1}-\zeta_{p+1}}$ for some $\zeta_{p+1}\in[\zeta_{p+1}',1/2)$ and $\max_{k\in\A_{1n}}|T_n(k)| = O_p(n^{1/2-\delta_{p+2}}){+} O_p(n^{1/2-\zeta_{p+1}'}){+} O_p(n^{1-\delta_{p+1}-\zeta_{p+1}})$.	
	
	If $\max\{n^{1/2-\zeta_{p+1}'},n^{1/2-\delta_{p+2}}\}\geq n^{1-\delta_{p+1}-\zeta_{p+1}} $, $\min_{k\in\A_{1n}}R_{1n}(k) = O_p^s(\max\{n^{1-2\zeta_{p+1}'},n^{1-2\delta_{p+2}} \})$, so we have $\max_{k\in\A_{1n}}V_n(k)^{-1} = O_p(\min\{n^{-1+2\zeta_{p+1}'},n^{-1+2\delta_{p+2}} \})$, which implies $\max_{k\in\A_{1n}}\frac{T_n(k)}{\sqrt{V_n(k)}} =  O_p(1)$. If $\max\{n^{1/2-\zeta_{p+1}'},n^{1/2-\delta_{p+2}}\}< n^{1-\delta_{p+1}-\zeta_{p+1}} $ and $\zeta_{p+1}'<\zeta_{p+1}$, then $\min_{k\in\A_{1n}^1}R_{1n}(k) = O_p^s(n^{2-2\delta_{p+1}-2\zeta_{p+1}})$, $\min_{k\in\A_{1n}^2}L_n(k)  =O_p^s(n^{1-2\zeta_{p+1}'})$ and $\max_{k\in\A_{1n}}V_n(k)^{-1} =O_p(n^{-1+2\zeta_{p+1}'})$, which implies $\max_{k\in\A_{1n}}\frac{T_n(k)}{\sqrt{V_n(k)}} =  o_p(n^{1/2-\delta_{p+1}})$.
	
	At last, consider the case when $\max\{n^{1/2-\zeta_{p+1}'},n^{1/2-\delta_{p+2}}\}< n^{1-\delta_{p+1}-\zeta_{p+1}} $ and $\zeta_{p+1}'=\zeta_{p+1}$. If $n^{1/2-\delta_{p+2}}\leq n^{1/2-\zeta_{p+1}'}$, 
	\small\begin{align}
		\min_{k\in\A_{1n}}	\frac{n^{-1+2\zeta_{p+1}}}{(n{-}2nb)^{2}}\sum_{t=k^\ast_{p+1}{+}1}^{k^\ast_{p+2}}R_{12}(k,t)^2  =& \min_{k\in\A_{1n}}	\frac{n^{-2\delta_{p+1}}(k^\ast_{p+1}{-}k)^2}{(n{-}2nb)^{2}(n{-}nb{-}k)^2}\frac{\li\bar Y'_1{-}\bar Y'_3{-}b\sqrt{n}\cP_n,\Delta^\ast_{p+1} \ri_\cH^2}{n^{1{-}2\delta_{p+1}{-}2\zeta_{p+1}}}\sum_{t=k^\ast_{p+1}{+}1}^{k^\ast_{p+2}}(n{-}nb{-}t{+}1)^2 \nonumber \\
		\geq & Cn^{-1}(\epsilon_{(p+1)n}n^{-\delta_{p+1}})^2\frac{\li\bar Y'_1{-}\bar Y'_3{-}b\sqrt{n}\cP_n,\Delta^\ast_{p+1} \ri_\cH^2}{n^{1{-}2\delta_{p+1}{-}2\zeta_{p+1}}}\nonumber\\
		\stackrel{p}{\to}& \infty,\label{eq_2cp_5}
	\end{align}	\normalsize
	so we have $\max_{k\in\A_{1n}}V_n(k)^{-1} =o_p(n^{-1+2\zeta_{p+1}})$, which implies $\max_{k\in\A_{1n}}\frac{T_n(k)}{\sqrt{V_n(k)}} =  o_p(n^{1/2-\delta_{p+1}})$. If $n^{1/2-\delta_{p+2}}> n^{1/2-\zeta_{p+1}'}$, using similar method as in the proof of Equation (\ref{eq_2cp4}), we can show $n^{-1+\epsilon+2\delta_{p+2}}\min_{k\in\A_{1n}}R_{1n}(k)\stackrel{p}{\to}\infty$ for any small $\epsilon>0$, which implies $\max_{k\in\A_{1n}}V_n(k)^{-1} =o_p(n^{-1+2\zeta_{p+1}})$ and $\max_{k\in\A_{1n}}\frac{T_n(k)}{\sqrt{V_n(k)}} =  o_p(n^{1/2-\delta_{p+1}})$.
	
	\noindent
	\textbf{(5) $\cP_n\neq 0$,  $\li\cP_n,\Delta^\ast_{p+1}\ri_\cH \neq 0$ and $\li\cP_n,\Delta^\ast_{p+2}\ri_\cH\neq0$.}		
	
	In this case, $\li\cP_n,\Delta^\ast_{p+2}\ri_\cH \sim n^{-\delta_{p+2}-\zeta_{p+2}}$, $\li\cP_n,\Delta^\ast_{p+1}\ri_\cH \sim n^{-\delta_{p+1}-\zeta_{p+1}}$ for some $\zeta_{p+1},\zeta_{p+2}\in[\zeta_{p+1}',1/2)$ and $\max_{k\in\A_{1n}}|T_n(k)| = O_p(n^{1-\delta_{p+2}-\zeta_{p+2}}){+} O_p(n^{1/2-\zeta_{p+1}'}){+} O_p(n^{1-\delta_{p+1}-\zeta_{p+1}})$. Following similar method as in part (4), we can show $\max_{k\in\A_{1n}}\frac{T_n(k)}{\sqrt{V_n(k)}} =  o_p(n^{1/2-\delta_{p+1}})$.

	\subsubsection{No change point can be selected on $\A_{2n}$} \label{sec_2cp_2}
	
	With out loss of generality, we assume $\delta_{p+1}\leq \delta_{p+2}$ and $k\in\A_{2n}$,
	\small\begin{align}
		T_n(k) = & {-}\frac{(n{-}nb{-}k)n}{(n-2nb)^{3/2}}\big\li \bar Y_1'{-}\bar Y_3'{-}b\sqrt{n}\cP_n,(r^\ast_{p+1}{-}b)\Delta^\ast_{p+1} \big\ri_\cH{-}\frac{(k{-}nb)n}{(n-2nb)^{3/2}}\big\li \bar Y_1'{-}\bar Y_3'{-}b\sqrt{n}\cP_n, (1{-}b{-}r^\ast_{p+2})\Delta^\ast_{p+2} \big\ri_\cH \nonumber \\
		& {+}\big\li \bar Y_1'{-}\bar Y_3'{-}b\sqrt{n}\cP_n, \frac{1}{\sqrt{n-2nb}}\big[S_{nb+1,k}^{Y'}-\frac{k{-}nb}{n{-}2nb}S_{nb+1,n-nb}^{Y'}\big]\big\ri_\cH. 
	\end{align}\normalsize
	Similar to Section \ref{sec_1cp}, define $R_{1n}(k) = \frac{1}{(n{-}2nb)^{2}}\sum_{t=k^\ast_{p+2}{+}1}^{n{-}nb}\Big\{S_{t,n-nb}{-}\frac{n{-}nb{-}t{+}1}{n{-}nb{-}k}S_{k+1,n-nb}\Big\}^2$ and $L_{1n}(k) = \frac{1}{(n{-}2nb)^{2}}\sum_{t=nb{+}1}^{k^\ast_{p+1}}\Big\{S_{nb+1,t}{-}\frac{t{-}nb}{k{-}nb}S_{nb+1,k}\Big\}^2$, then
	\small\begin{align}
		R_{1n}(k) = & \frac{1}{(n{-}2nb)^{2}}\sum_{t=k^\ast_{p+2}+1}^{n-nb}\Big\{\big\li \bar Y_1'{-}\bar Y_3'{-}b\sqrt{n}\cP_n,S^{Y'}_{t,n-nb} {-}\frac{n{-}nb{-}t{+}1}{n{-}nb{-}k}S^{Y'}_{k+1,n-nb}    \big\ri_\cH  \nonumber \\
		&\qquad\qquad\qquad\qquad{+} \frac{(n{-}nb{-}t{+}1)(k^\ast_{p+2}{-}k)}{n{-}nb{-}k}    \big\li \bar Y'_1{-}\bar Y'_3{-}b\sqrt{n}\cP_n,\Delta^\ast_{p+2}   \big\ri_\cH    \Big\}^2 \nonumber \\
		=& \frac{1}{(n{-}2nb)^{2}}\sum_{t=k^\ast_{p+2}{+}1}^{n{-}nb}\Big\{R_{11}(k,t){+}R_{12}(k,t) \Big\}^2,
	\end{align}	
	\begin{align}
		L_{1n}(k) = & \frac{1}{(n{-}2nb)^{2}}\sum_{t=nb{+}1}^{k^\ast_{p+1}}\Big\{\big\li \bar Y_1'{-}\bar Y_3'{-}b\sqrt{n}\cP_n,S^{Y'}_{nb+1,t} {-}\frac{t{-}nb}{k{-}nb}S^{Y'}_{nb+1,k}    \big\ri_\cH  \nonumber \\
		&\qquad\qquad\qquad{-}\frac{(t{-}nb)(k{-}k^\ast_{p+1})}{k{-}nb}    \big\li \bar Y'_1{-}\bar Y'_3{-}b\sqrt{n}\cP_n, \Delta^\ast_{p+1} \big\ri_\cH    \Big\}^2 \nonumber \\
		=& \frac{1}{(n{-}2nb)^{2}}\sum_{t=nb+1}^{k^\ast_{p+1}}\Big\{L_{11}(k,t){+}L_{12}(k,t) \Big\}^2,
	\end{align}\normalsize	
	$R_n(k) \geq R_{1n}(k)$, $L_n(k) \geq L_{1n}(k)$ and $V_n(k) =R_n(k)+L_n(k)$. We consider the following different scenarios.		
	
	\noindent
	\textbf{(1) $\cP_n=0$.}		
	
	In this case $\max_{k\in\A_{1n}}|T_n(k)| = O_p(n^{1/2-\delta_{p+1}})$ and it suffice to show $\max_{k\in\A_{1n}}V_n(k)^{-1} =o_p(1)$. Let $\A_{2n}^1 = \A_{2n}\cap \{k\leq n\frac{r^\ast_{p+1}{+}r^\ast_{p+2}}{2}\}$ and $\A_{2n}^2 = \A_{2n}\cap \{k> n\frac{r^\ast_{p+1}{+}r^\ast_{p+2}}{2}\}$, then we have $\min_{k\in\A_{2n}^1}R_{1n}(k) = O_p^s(n^{1-2\delta_{p+2}})$ and $\min_{k\in\A_{2n}^2}L_{1n}(k) = O_p^s(n^{1-2\delta_{p+1}})$, which implies $\max_{k\in\A_{1n}}V_n(k)^{-1} =o_p(1)$.
	
	\noindent
	\textbf{(2) $\cP_n\neq 0$ and $\cP_n\in\{f_{p+1},f_{p+2}\}^\perp$.}	
	
	Similar to part (2) of Section \ref{sec_2cp_1}, we can write $\cP_n = \kappa_{p+1} f'_{p+1}{+}f^n_{p+1}$ for some $f'_{p+1},f^n_{p+1}\in \cH$ such that $\kappa_{p+1} \sim n^{-\zeta_{p+1}'}$, $\zeta_{p+1}'\in [0,1/2)$ and $\|f^n_{p+1}\|_\cH =o(\kappa_{p+1})$.
	In this scenario we have $\li\cP_n,\Delta^\ast_{p+1}\ri_\cH = \li\cP_n,\Delta^\ast_{p+2}\ri_\cH=0$.	If $\zeta_{p+1}'<\delta_{p+1}$, $\max_{k\in\A_{1n}}\frac{T_n(k)}{\sqrt{V_n(k)}} =  O_p(1)$ can be shown in the same way as in part (2) of Section \ref{sec_2cp_1}. If $\zeta_{p+1}'>\delta_{p+1}$, it suffice to show $\max_{k\in\A_{1n}}V_n(k)^{-1} =o_p(1)$, which follows in the same way as in part (1). If $\zeta_{p+1}'=\delta_{p+1}$, we can show $\min_{k\in\A_{2n}}L_{1n}(k)=O_p^s(n^{1-2\delta_{p+1}})$ in the same way as in Equation (\ref{eq_2cp_4})
	
	\noindent
	\textbf{(3) $\cP_n\neq 0$.  $\li\cP_n,\Delta^\ast_{p+1}\ri_\cH = 0$, $\li\cP_n,\Delta^\ast_{p+2}\ri_\cH\neq0$ or $\li\cP_n,\Delta^\ast_{p+1}\ri_\cH \neq 0$, $\li\cP_n,\Delta^\ast_{p+2}\ri_\cH=0$.}		
	
	We only give the proof when $\li\cP_n,\Delta^\ast_{p+1}\ri_\cH = 0$ and $\li\cP_n,\Delta^\ast_{p+2}\ri_\cH\neq0$ since the proof for $\li\cP_n,\Delta^\ast_{p+1}\ri_\cH \neq 0$ and $\li\cP_n,\Delta^\ast_{p+2}\ri_\cH=0$ is similar.
	In this case, $\li\cP_n,\Delta^\ast_{p+2}\ri_\cH \sim n^{-\delta_{p+2}-\zeta_{p+2}}$ for some $\zeta_{p+2}\in[\zeta_{p+1}',1/2)$ and $\max_{k\in\A_{1n}}|T_n(k)| = O_p(n^{1/2-\delta_{p+1}}){+} O_p(n^{1/2-\zeta_{p+1}'}){+} O_p(n^{1-\delta_{p+2}-\zeta_{p+2}})$. Note that 
	\small\begin{align}\label{eq_2cp_6}
		\min_{k\in\A_{2n}^2}L_{1n}(k) = O_p^s(\max\{n^{1-2\zeta_{p+1}'},n^{1-2\delta_{p+1}}\}), \min_{k\in\A_{2n}^1}R_{1n}(k) = O_p^s(\max\{n^{1-2\zeta_{p+1}'},n^{2-2\delta_{p+2}{-}2\zeta_{p+2}}\}).
	\end{align}\normalsize
	If $n^{1/2-\delta_{p+1}}\geq \max\{n^{1/2-\zeta_{p+1}'},n^{1-\delta_{p+2}{-}\zeta_{p+2}}\}$, since $\max_{k\in\A_{1n}}V_n(k)^{-1} =o_p(1)$, we have $\max_{k\in\A_{1n}}\frac{T_n(k)}{\sqrt{V_n(k)}} =  o_p(n^{1/2-\delta_{p+1}})$. If $n^{1/2-\zeta_{p+1}'}\geq \max\{n^{1/2-\delta_{p+1}},n^{1-\delta_{p+2}{-}\zeta_{p+2}}\}$, it is easy to see $\max_{k\in\A_{1n}}\frac{T_n(k)}{\sqrt{V_n(k)}} =  O_p(1)$. If $ n^{1-\delta_{p+2}-\zeta_{p+2}}\geq\max\{n^{1/2-\delta_{p+1}},n^{1/2-\zeta_{p+1}'}\}$ and $\zeta_{p+1}'=\zeta_{p+2}$, similar to Equation (\ref{eq_2cp_5}), we can show $\min_{k\in\A_{2n}}R_n(k)$ diverges to infinity faster than $n^{1-2\zeta_{p+1}}$, which implies $\max_{k\in\A_{1n}}V_n(k)^{-1} =o_p(n^{-1+2\zeta_{p+1}})$ and $\max_{k\in\A_{1n}}\frac{T_n(k)}{\sqrt{V_n(k)}} =  o_p(n^{1/2-\delta_{p+2}})$. At last, if $ n^{1-\delta_{p+2}-\zeta_{p+2}}\geq\max\{n^{1/2-\delta_{p+1}},n^{1/2-\zeta_{p+1}'}\}$ and $\zeta_{p+1}'<\zeta_{p+2}$, by Equation (\ref{eq_2cp_6}), we have $\max_{k\in\A_{1n}}V_n(k)^{-1} =o_p(n^{-1+2\zeta_{p+1}})$ and $\max_{k\in\A_{1n}}\frac{T_n(k)}{\sqrt{V_n(k)}} =  o_p(n^{1/2-\delta_{p+2}})$.
	
	\noindent
	\textbf{(4) $\cP_n\neq 0$,  $\li\cP_n,\Delta^\ast_{p+1}\ri_\cH \neq 0$ and $\li\cP_n,\Delta^\ast_{p+2}\ri_\cH\neq0$.}			
	
	In this case, $\li\cP_n,\Delta^\ast_{p+2}\ri_\cH \sim n^{-\delta_{p+2}-\zeta_{p+2}}$, $\li\cP_n,\Delta^\ast_{p+1}\ri_\cH \sim n^{-\delta_{p+1}-\zeta_{p+1}}$ for some $\zeta_{p+1},\zeta_{p+2}\in[\zeta_{p+1}',1/2)$ and $\max_{k\in\A_{1n}}|T_n(k)| = O_p(n^{1-\delta_{p+2}-\zeta_{p+2}}){+} O_p(n^{1/2-\zeta_{p+1}'}){+} O_p(n^{1-\delta_{p+1}-\zeta_{p+1}})$. As in part (3), we can show $\max_{k\in\A_{1n}}\frac{T_n(k)}{\sqrt{V_n(k)}} =  O_p(1)$ if $n^{1/2-\zeta_{p+1}'}\geq \max\{n^{1-\delta_{p+1}-\delta_{p+1}},n^{1-\delta_{p+2}{-}\zeta_{p+2}}\}$. If $ n^{1-\delta_{p+2}-\zeta_{p+2}}\geq\max\{n^{1-\delta_{p+1}-\zeta_{p+1}},n^{1/2-\zeta_{p+1}'}\}$, we can show $\max_{k\in\A_{1n}}\frac{T_n(k)}{\sqrt{V_n(k)}} =  o_p(n^{1/2-\delta_{p+2}})$ in the same way as in part (3) when  $ n^{1-\delta_{p+2}-\zeta_{p+2}}\geq\max\{n^{1/2-\delta_{p+1}},n^{1/2-\zeta_{p+1}'}\}$. The proof for the $ n^{1-\delta_{p+1}-\zeta_{p+1}}\geq\max\{n^{1-\delta_{p+2}-\zeta_{p+2}},n^{1/2-\zeta_{p+1}'}\}$ is analogous and we omit the details.

	\subsection{Three or more change points setting}		\label{sec_3cp}	
	
	For nonnegative integers $q,p,s$ such that $0\leq p+q\leq m_0-s$ and $s\geq 3$, assume there are $p+q+s$ change points with relative location $0<r^\ast_1<\cdots<r^\ast_p<b<r^\ast_{p+1}<\cdots<r^\ast_{p+s}<1-b<r^\ast_{p+s+1}<\cdots<r^\ast_{p+s+q}<1$. In this case, 
	\begin{align}
		\cP_n = &a_1\Delta^\ast_1{+}(a_1{+}a_2)\Delta^\ast_2{+}\cdots{+}({a_1{+}\cdots{+}a_p})\Delta^\ast_p{+}\Delta^\ast_{p+1}{+}\cdots{+}\Delta^\ast_{p+s}\nonumber \\
		&{+}(1{-}b_1)\Delta^\ast_{p+s+1}{+}(1{-}b_1{-}b_2) \Delta^\ast_{p+s+2}{+}\cdots{+}(1{-}b_1{-}\cdots{-}b_q)\Delta^\ast_{p+s+q},   \label{def_pn_3cp}
	\end{align}		
	where $\{a_i\}$ are defined in section \ref{sec_0cp}, $b_1 = \frac{r^\ast_{p+s+1}-1+b}{1-b},b_{q+1} = \frac{1-r^\ast_{p+s+q}}{1-b}$ and $b_i = \frac{r^\ast_{p+s+i}-r^\ast_{p+s+i-1}}{1-b}$ for $i=2,3,\dots,q$.
	Let 
	\begin{align}
		\A_{1n}=&\{k\geq \fb+1|k\leq k^\ast_{p+1}-\epsilon_{(p+1)n},\,k\in \mathcal{Z}\}, \nonumber \\
		\A_{in}=&\{k^\ast_{p+i-1}{+}\epsilon_{(p+i-1)n} k\leq k^\ast_{p+i}-\epsilon_{(p+i)n},\,k\in \mathcal{Z}\} \text{ for }i=2,3,\dots,s, \nonumber \\
		\A_{(s+1)n}=&\{k\leq n-\fb|k\geq k^\ast_{p+s}+\epsilon_{(p+s)n},\,k\in \mathcal{Z}\}. \nonumber 
	\end{align}
	We show that no change point can be selected inside $\cup_{i=1}^{s+1}\A_{in}$. To be specific, we proof the following lemma.
	
	\begin{lemma}\label{lemma_3cp}
		Suppose Assumption \ref{assump_wbs} holds, then
		\begin{align}
			\max_{k\in\cup_{i=1}^{s+1}\A_{in}}\frac{T_n(k)}{\sqrt{V_n(k)}} = o_p(n^{1/2-\underline {\delta}_{s} }),
		\end{align}
		where $\underline {\delta}_{s} = \min\{\delta_{p+1},\dots,\delta_{p+s}\}$.
	\end{lemma}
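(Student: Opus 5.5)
The plan is to reduce Lemma \ref{lemma_3cp} to the arguments already given for Lemmas \ref{lemma_1cp} and \ref{lemma_2cp}. The key fact is that only the change points adjacent to a region can make $T_n(k)$ large relative to $V_n(k)$; all other change points can only enlarge the self-normalizer.

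First decompose $\bar Y_1-\bar Y_3=\bar Y_1'-\bar Y_3'-b\sqrt{n}\cP_n$, with $\cP_n$ as in (\ref{def_pn_3cp}). For $k\in\A_{in}$, write
\begin{align}
T_n(k)=-\sum_{j=1}^{s} w_{j}(k)\big\li \bar Y_1'{-}\bar Y_3'{-}b\sqrt{n}\cP_n,\Delta^\ast_{p+j}\big\ri_\cH+\big\li \bar Y_1'{-}\bar Y_3'{-}b\sqrt{n}\cP_n,\tfrac{1}{\sqrt{n-2nb}}\big[S^{Y'}_{nb+1,k}-\tfrac{k-nb}{n-2nb}S^{Y'}_{nb+1,n-nb}\big]\big\ri_\cH .\nonumber
\end{align}
The deterministic weights satisfy $|w_j(k)|=O(n^{1/2})$. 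This generalizes (\ref{eq_1cp_1}) and (\ref{eq_2cp_1}). It gives the crude bound
\begin{align}
\max_k|T_n(k)|=O_p\Big(\sum_j n^{1/2-\delta_{p+j}}+n^{1/2-\zeta'}+\sum_j n^{1-\delta_{p+j}-\zeta_{p+j}}\Big),\nonumber
\end{align}
where, as before, $\li\cP_n,\Delta^\ast_{p+j}\ri_\cH\sim n^{-\delta_{p+j}-\zeta_{p+j}}$ whenever it is nonzero, and $\|\cP_n\|_\cH\sim n^{-\zeta'}$.

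Next, bound $V_n(k)$ from below. Drop all but one block of the sums in $V_n(k)$: for $k\in\A_{in}$, keep the segment between $k^\ast_{p+i-1}$ and $k^\ast_{p+i}$, or the one immediately left or right of it. This reproduces exactly the structure $R_{1n}(k)=(n-2nb)^{-2}\sum_t\{R_{11}+R_{12}+R_{13}\}^2$ of Section~\ref{sec_2cp}, which involves only two adjacent changes $\Delta^\ast_{p+i-1},\Delta^\ast_{p+i}$. The contributions of the remaining changes enter $R_{1n}$ only through the linear terms over a complete segment.

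Then split $\A_{in}$ at its midpoint into $\A_{in}^1$ and $\A_{in}^2$, as in part (3) of Section~\ref{sec_1cp}. On each half, the retained block lies at distance of order $n$ from $k$, so the drift term coming from the nearest change has magnitude of order $n^{1-\delta-\zeta}$ or $n^{1/2-\delta}|\tilde a|$. On the other hand, the distance $\ge\epsilon_{(p+i)n}$ from the true change supplies the factor $n^{-1}(\epsilon_{in}\|\Delta^\ast_{p+i}\|_\cH)^2\to\infty$, as in (\ref{eq_2cp_5}). Matching the dominant term of $\max|T_n|$ with the dominant term of $\min V_n$ over the same case split as in Sections~\ref{sec_2cp_1}–\ref{sec_2cp_2} then gives either $O_p(1)$ or $o_p(n^{1/2-\delta_{p+j}})$ for some $j$. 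Each of these is $o_p(n^{1/2-\underline\delta_s})$. The case split is: $\cP_n=0$; $\cP_n\perp$ all $f_{p+j}$; some inner products nonzero, with comparisons among the exponents $\zeta'$, $\zeta_{p+j}$, $\delta_{p+j}$. The end regions $\A_{1n}$ and $\A_{(s+1)n}$ are handled like $\A_{1n}$ in Section~\ref{sec_2cp_1}, and interior regions like $\A_{2n}$ in Section~\ref{sec_2cp_2}.

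The main obstacle is cancellation inside the drift of the retained block. When the two adjacent changes point in directions with $\tilde a_{p+i-1}\tilde a_{p+i}<0$ and have different rates, the two drift pieces can offset each other for $k$ in a window near the change point. For this I would reuse the device behind (\ref{eq_2cp4}). Introduce the boundary set $\tilde\A=\{k\ge k^\ast-(\log n)^{-1}n^{1-\delta_{\max}+\delta_{\min}}\}$, and on each side restrict the sum over $t$ to a range where one term dominates by a factor $n^{\epsilon}/\log n$. This yields $\min_k V_n(k)\gtrsim n^{1-\epsilon-2\delta}$, which is enough to make the ratio $o_p(n^{1/2-\underline\delta_s})$. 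Since every other change contributes only through full-segment terms, no new cancellation patterns arise, so the two-change analysis applied blockwise should finish the proof.
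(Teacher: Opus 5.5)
Your overall plan matches the paper's: lower-bound $V_n$ by one block of the form $R_{11}+R_{12}+R_{13}$ on each half of $\A_{in}$, then reuse (\ref{eq_2cp_5}) and the device behind (\ref{eq_2cp4}). The reduction step, however, has a genuine gap. You bound $T_n$ and split cases using the individual changes $\Delta^\ast_{p+j}$ and exponents $\zeta_{p+j}$. Any single retained block, by contrast, sees the non-adjacent changes only through one weighted aggregate. For $k\in\A_{1n}$, the block over $(k^\ast_{p+1},k^\ast_{p+2}]$ has $R_{13}=\frac{n(t-1-k)}{n-nb-k}\li \bar Y_1'-\bar Y_3'-b\sqrt n\cP_n,\tilde\Delta_{p+1}\ri_\cH$, where $\tilde\Delta_{p+1}=\sum_{j=2}^{s}(1-b-r^\ast_{p+j})\Delta^\ast_{p+j}$. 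For interior $i$, the right block beyond $k^\ast_{p+i}$ sees only $\Delta^\ast_{p+i}$ and $\tilde\Delta_{p+i}$, and the left block ending at $k^\ast_{p+i-1}$ sees only $\Delta^\ast_{p+i-1}$ and $\tilde\Delta_{p+i-1}$. So no block involves both $\Delta^\ast_{p+i-1}$ and $\Delta^\ast_{p+i}$, contrary to what you state. The segment containing $k$ only gives a bound smaller by the factor $(k^\ast_{p+i}-k)/n$, so it cannot supply the $\epsilon_{(p+i)n}$ gain.

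The aggregate can cancel, so your claim that no new cancellation patterns arise is false, and the matching step fails. Here is a concrete case:
\begin{itemize}
\item Take $p=q=0$ and $s=3$, with $\Delta^\ast_1=n^{-\delta_1}f_1$ for some $\delta_1\in[1/4,1/2)$.
\item Take $\Delta^\ast_2=f_2$ and $\Delta^\ast_3=-cf_2$ with $c=(1-b-r^\ast_2)/(1-b-r^\ast_3)>1$, and $f_1\perp f_2$.
\end{itemize}
Then $\tilde\Delta_1=0$ and $\cP_n=\Delta^\ast_1+(1-c)f_2$, so $\li\cP_n,\Delta^\ast_2\ri_\cH\asymp 1$. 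Your crude bound therefore gives $\max_{\A_{1n}}|T_n|=O_p(n)$. The retained block, however, is only of order $n$: its $R_{13}$ vanishes, and $R_{11}$ and $R_{12}$ contribute $O_p(n)$. The resulting ratio bound $O_p(n^{1/2})$ is not $o_p(n^{1/2-\underline\delta_s})=o_p(n^{1/2})$.

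The missing observation is that $T_n$ depends on the changes only through the same aggregates. By (\ref{eq_3cp_1}), on $\A_{1n}$ its drift is a multiple of $\li \bar Y_1'-\bar Y_3'-b\sqrt n\cP_n,(1-b-r^\ast_{p+1})\Delta^\ast_{p+1}+\tilde\Delta_{p+1}\ri_\cH$. In the example this gives $\max_{\A_{1n}}|T_n|=O_p(n^{1/2})$ and a ratio of $O_p(1)$. On interior $\A_{in}$, only $(r^\ast_{p+i-1}-b)\Delta^\ast_{p+i-1}+\tilde\Delta_{p+i-1}$ and $(1-b-r^\ast_{p+i})\Delta^\ast_{p+i}+\tilde\Delta_{p+i}$ enter.

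The paper uses this as follows:
\begin{itemize}
\item It bounds $T_n$ in terms of $\Delta^\ast_{p+1}$ and $\tilde\Delta_{p+1}$ only.
\item It writes $\tilde\Delta_{p+1}$ either as $0$, which reduces to Lemma \ref{lemma_1cp}, or as $\xi_{p+1}g'_{p+1}+g^n_{p+1}$ with $\xi_{p+1}\sim n^{-\delta'_{p+1}}$.
\item It gives $\li\cP_n,\tilde\Delta_{p+1}\ri_\cH$ its own exponent $\delta^\ast_{p+1}+\zeta_{p+2}$.
\item It reruns the case analysis of Section \ref{sec_2cp_1} with $(1-b-r^\ast_{p+2})\Delta^\ast_{p+2}$ replaced by $\tilde\Delta_{p+1}$, and does the same with the pair $\tilde\Delta_{p+i-1},\tilde\Delta_{p+i}$ in Section \ref{sec_3cp_2}.
\end{itemize}
Your case split should accordingly be over $\li\cP_n,\Delta^\ast_{p+i}\ri_\cH$ and $\li\cP_n,\tilde\Delta\ri_\cH$, not over the individual $\li\cP_n,\Delta^\ast_{p+j}\ri_\cH$. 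Alternatively, you could keep the blocks of $V_n$ around the far change points, where each $\Delta^\ast_{p+j}$ appears at its own kink and enlarges $V_n$. That would be a different argument, and you have not carried it out.
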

	
	\subsubsection{No change point can be selected on $\A_{1n}$}\label{sec_3cp_1}	
	
	For $k\in\A_{1n}$, let $\tilde \Delta_{p+1} = (1{-}b{-}r^\ast_{p+2})\Delta^\ast_{p+2}{+}\cdots{+} (1{-}b{-}r^\ast_{p+s})\Delta^\ast_{p+s},$ we have
	\begin{align}
		T_n(k) = & {-}\frac{(k{-}nb)n}{(n-2nb)^{3/2}}\big\li \bar Y_1'{-}\bar Y_3'{-}b\sqrt{n}\cP_n,(1{-}b{-}r^\ast_{p+1})\Delta^\ast_{p+1}{+}\tilde \Delta_{p+1}\big\ri_\cH \nonumber \\
		& {+}\big\li \bar Y_1'{-}\bar Y_3'{-}b\sqrt{n}\cP_n, \frac{1}{\sqrt{n-2nb}}\big[S_{nb+1,k}^{Y'}-\frac{k{-}nb}{n{-}2nb}S_{nb+1,n-nb}^{Y'}\big]\big\ri_\cH. \label{eq_3cp_1}
	\end{align}
	Similar to Section \ref{sec_1cp}, define $R_{1n}(k) = \frac{1}{(n{-}2nb)^{2}}\sum_{t=k^\ast_{p+1}+1}^{k^\ast_{p+2}}\Big\{S_{t,n-nb}{-}\frac{n{-}nb{-}t{+}1}{n{-}nb{-}k}S_{k+1,n-nb}\Big\}^2$ and $L_{n}(k) =\frac{1}{(n{-}2nb)^{2}}\sum_{t=nb+1}^{k}\Big\{S_{nb+1,t}{-}\frac{t{-}nb}{k{-}nb}S_{nb+1,k}\Big\}^2$, then
	\small\begin{align}		
		R_{1n}(k) = & \frac{1}{(n{-}2nb)^{2}}\sum_{t=k^\ast_{p+1}+1}^{k^\ast_{p+2}}\Big\{\big\li \bar Y_1'{-}\bar Y_3'{-}b\sqrt{n}\cP_n,S^{Y'}_{t,n-nb} {-}\frac{n{-}nb{-}t{+}1}{n{-}nb{-}k}S^{Y'}_{k+1,n-nb}    \big\ri_\cH  \nonumber \\
		&{+}  \frac{(n{-}nb{-}t{+}1)(k^\ast_{p+1}{-}k)}{n{-}nb{-}k}\big\li \bar Y'_1{-}\bar Y'_3{-}b\sqrt{n}\cP_n,  \Delta^\ast_{p+1}\big\ri_\cH{+}\frac{n(t{-}1{-}k)}{n{-}nb{-}k} \big\li\bar Y'_1{-}\bar Y'_3{-}b\sqrt{n}\cP_n,  \tilde \Delta_{p+1}  \big\ri_\cH    \Big\}^2 \nonumber \\
		=& \frac{1}{(n{-}2nb)^{2}}\sum_{t=k^\ast_{p+1}+1}^{k^\ast_{p+2}}\Big\{R_{11}(k,t){+}R_{12}(k,t){+}R_{13}(k,t) \Big\}^2,
	\end{align}	
	\begin{align}
		L_n(k) = & \frac{1}{(n{-}2nb)^{2}}\sum_{t=nb{+}1}^{k}\big\li \bar Y_1'{-}\bar Y_3'{-}b\sqrt{n}\cP_n,S^{Y'}_{nb+1,t} {-}\frac{t{-}nb}{k{-}nb}S^{Y'}_{nb+1,k}    \big\ri_\cH^2 \nonumber 
	\end{align}		\normalsize	
	$R_n(k) \geq R_{1n}(k)$ and $V_n(k) =R_n(k)+L_n(k)$. Note that it is either $\|\tilde \Delta_{p+1} \|_\cH = 0$ or $\tilde \Delta_{p+1} = \xi_{p+1} g'_{p+1}{+}g^n_{p+1}$ for some $g'_{p+1},g^n_{p+1}\in \cH$ such that $\xi_{p+1} \sim n^{-\delta_{p+1}'}$ for some $\delta_{p+1}'\in \{\delta_{p+2},\dots,\delta_{p+s}\}$ and $\|g^n_{p+1}\|_\cH =o(\xi_{p+1})$. If $\|\tilde \Delta_{p+1} \|_\cH = 0$, following the same proof as in Section \ref{sec_1cp}, we have $	\max_{k\in\A_{1n}}\frac{T_n(k)}{\sqrt{V_n(k)}} = o_p(n^{1/2-\delta_{p+1}})$. 
	
	Assume $\|\tilde \Delta_{p+1} \|_\cH \neq 0$, and $\li\cP_n,\tilde \Delta_{p+1}\ri_\cH=0$, following the same method as in part (1), (2) and (4) in Section \ref{sec_2cp_1} with $(1{-}b{-}r^\ast_{p+2})\Delta^\ast_{p+2}$ replaced by $\tilde \Delta_{p+1}$, we can show $\max_{k\in\A_{1n}}\frac{T_n(k)}{\sqrt{V_n(k)}} = o_p(n^{1/2-\underline \delta_{p+1}})$. If $\li\cP_n,\tilde \Delta_{p+1}\ri_\cH\neq 0$, similar to part (2) of Section \ref{sec_2cp_1}, we can write $\cP_n = \kappa_{p+1} f'_{p+1}{+}f^n_{p+1}$ for some $f'_{p+1},f^n_{p+1}\in \cH$ such that $\kappa_{p+1} \sim n^{-\zeta_{p+1}'}$, $\zeta_{p+1}'\in [0,1/2)$ and $\|f^n_{p+1}\|_\cH =o(\kappa_{p+1})$. Then we have $\li\cP_n,\tilde \Delta_{p+1}\ri_\cH \sim n^{-\delta^\ast_{p+1}-\zeta_{p+2}}$ for some $\zeta_{p+2}\in[\zeta_{p+1}',1/2)$ and $\delta^\ast_{p+1}\in[\delta_{p+1}',1/2)$. We consider the following different scenarios.		
	
	\noindent
	\textbf{(1) $\li\cP_n,\Delta^\ast_{p+1}\ri_\cH = 0$ and $\li\cP_n,\tilde \Delta_{p+1}\ri_\cH\neq0$.}		
	
	In this case, $\max_{k\in\A_{1n}}|T_n(k)| = O_p(n^{1/2-\delta_{p+1}}){+}O_p(n^{1/2-\delta_{p+1}'}){+} O_p(n^{1/2-\zeta_{p+1}'}){+} O_p(n^{1-\delta^\ast_{p+1}-\zeta_{p+2}})$. If $\max\{n^{1/2-\delta_{p+1}'},n^{1/2-\zeta_{p+1}'},n^{1-\delta^\ast_{p+1}-\zeta_{p+2}}\}> n^{1/2-\delta_{p+1}}$, $\min_{k\in\A_{1n}}R_{1n}(k)$ is of the same order as 
	$ \min_{k\in\A_{1n}} \frac{1}{(n{-}2nb)^{2}}\sum_{t=k^\ast_{p+1}+1}^{k^\ast_{p+2}}\Big\{R_{11}(k,t){+}R_{13}(k,t) \Big\}^2 = O_p^s(\max\{n^{1-2\delta_{p+1}'},n^{1-2\zeta_{p+1}'},n^{2-2\delta^\ast_{p+1}-2\zeta_{p+2}} \})$, so we have $\max_{k\in\A_{1n}}V_n(k)^{-1} = O_p(\min\{n^{-1+2\delta_{p+1}'},n^{-1+2\zeta_{p+1}'},n^{-2+2\delta^\ast_{p+1}+2\zeta_{p+2}} \})$, which implies $$\max_{k\in\A_{1n}}\frac{T_n(k)}{\sqrt{V_n(k)}} =  O_p(1).$$ If $\max\{n^{1/2-\delta_{p+1}'},n^{1/2-\zeta_{p+1}'},n^{1-\delta^\ast_{p+1}-\zeta_{p+2}}\}\leq n^{1/2-\delta_{p+1}}$, then $\min_{k\in\A_{1n}^1}R_{1n}(k) = O_p^s(n^{1/2-\delta_{p+1}})$ and $\min_{k\in\A_{1n}^2}L_n(k)  =O_p^s(n^{1-2\zeta_{p+1}'})$, where $\A_{1n}^1 = \A_{1n}\cap \{k\leq n\frac{b+r^\ast_{p+1}}{2}\}$ and $\A_{1n}^2 = \A_{1n}\cap \{k> n\frac{b+r^\ast_{p+1}}{2}\}$. Then we have $\max_{k\in\A_{1n}}V_n(k)^{-1} = o_p(1)$, which implies $\max_{k\in\A_{1n}}\frac{T_n(k)}{\sqrt{V_n(k)}} =  o_p(n^{1/2-\delta_{p+1}})$.	
	
	\noindent
	\textbf{(2) $\li\cP_n,\Delta^\ast_{p+1}\ri_\cH \neq 0$ and $\li\cP_n,\tilde \Delta_{p+1}\ri_\cH\neq0$.}		
	
	In this case, $\li\cP_n,\Delta^\ast_{p+1}\ri_\cH \sim n^{-\delta_{p+1}-\zeta_{p+1}}$ for some $\zeta_{p+1}\in[\zeta_{p+1}',1/2)$ and $\max_{k\in\A_{1n}}|T_n(k)| =  O_p(n^{1-\delta_{p+1}-\zeta_{p+1}}){+}O_p(n^{1/2-\delta_{p+1}'}){+} O_p(n^{1/2-\zeta_{p+1}'}){+} O_p(n^{1-\delta^\ast_{p+1}-\zeta_{p+2}})$. Following similar method as in part (4) of section \ref{sec_2cp_1} (replace $n^{1/2-\delta_{p+2}}$ with $\max\{n^{1/2-\delta_{p+1}'}, n^{1-\delta^\ast_{p+1}-\zeta_{p+2}}\}$), we can show $$\max_{k\in\A_{1n}}\frac{T_n(k)}{\sqrt{V_n(k)}} =  o_p(n^{1/2-\delta_{p+1}}).$$

	\subsubsection{No change point can be selected on $\A_{in}$ for $i=2,3,\dots,s$.}\label{sec_3cp_2}		
	
	Fix $i\in\{2,3,\dots,s\}$, denote $\tilde \Delta_{p+i-1} = (r^\ast_{p+1}{-}b)\Delta^\ast_{p+1}{+}\cdots{+}(r^\ast_{p+i-2}{-}b)\Delta^\ast_{p+i-2}$ (we set $\tilde \Delta_{p+i-1}=0$ if $i=2$) and $\tilde \Delta_{p+i} = (1{-}b{-}r^\ast_{p+i+1})\Delta^\ast_{p+i+1}{+}\cdots{+}(1{-}b{-}r^\ast_{p+s})\Delta^\ast_{p+s}$ (we set $\tilde \Delta_{p+i}=0$ if $i=s$) for $k\in\A_{in}$,
	\small\begin{align}
		T_n(k) = & {-}\frac{(n{-}nb{-}k)n}{(n-2nb)^{3/2}}\big\li \bar Y_1'{-}\bar Y_3'{-}b\sqrt{n}\cP_n,(r^\ast_{p+i-1}{-}b)\Delta^\ast_{p+i-1}{+}\tilde \Delta_{p+i-1} \big\ri_\cH\nonumber \\
		&{-}\frac{(k{-}nb)n}{(n-2nb)^{3/2}}\big\li \bar Y_1'{-}\bar Y_3'{-}b\sqrt{n}\cP_n, (1{-}b{-}r^\ast_{p+i})\Delta^\ast_{p+i}{+}\tilde \Delta_{p+i} \big\ri_\cH \nonumber \\
		& {+}\big\li \bar Y_1'{-}\bar Y_3'{-}b\sqrt{n}\cP_n, \frac{1}{\sqrt{n-2nb}}\big[S_{nb+1,k}^{Y'}-\frac{k{-}nb}{n{-}2nb}S_{nb+1,n-nb}^{Y'}\big]\big\ri_\cH. 
	\end{align}\normalsize
	Similar to Section \ref{sec_2cp_2}, define $R_{1n}(k) = \frac{1}{(n{-}2nb)^{2}}\sum_{t=k^\ast_{p+i}{+}1}^{\bar b_i}\Big\{S_{t,n-nb}{-}\frac{n{-}nb{-}t{+}1}{n{-}nb{-}k}S_{k+1,n-nb}\Big\}^2$ and $L_{1n}(k) = \frac{1}{(n{-}2nb)^{2}}\sum_{t=\underline b_i}^{k^\ast_{p+i-1}}\Big\{S_{nb+1,t}{-}\frac{t{-}nb}{k{-}nb}S_{nb+1,k}\Big\}^2$, where $\bar b_i=k^\ast_{p+i+1}$ if $i<s$ or $n-nb$ if $i=s$; $\underline b_i=k^\ast_{p+i-2}{+}1$ if $i>2$ or $nb{+}1$ if $i=2$. Then we have
	\small\begin{align}
		R_{1n}(k) = & \frac{1}{(n{-}2nb)^{2}}\sum_{t=k^\ast_{p+i}+1}^{\bar b_i}\Big\{\big\li \bar Y_1'{-}\bar Y_3'{-}b\sqrt{n}\cP_n,S^{Y'}_{t,n-nb} {-}\frac{n{-}nb{-}t{+}1}{n{-}nb{-}k}S^{Y'}_{k+1,n-nb}    \big\ri_\cH  \nonumber \\
		&\qquad\qquad{+} \frac{(n{-}nb{-}t{+}1)(k^\ast_{p+i}{-}k)}{n{-}nb{-}k}\big\li \bar Y'_1{-}\bar Y'_3{-}b\sqrt{n}\cP_n,  \Delta^\ast_{p+i}\big\ri_\cH{+}\frac{n(t{-}1{-}k)}{n{-}nb{-}k} \big\li\bar Y'_1{-}\bar Y'_3{-}b\sqrt{n}\cP_n,  \tilde \Delta_{p+i}  \big\ri_\cH    \Big\}^2 \nonumber \\
		=& \frac{1}{(n{-}2nb)^{2}}\sum_{t=k^\ast_{p+i}{+}1}^{\bar b_i}\Big\{R_{11}(k,t){+}R_{12}(k,t){+}R_{13}(k,t) \Big\}^2,
	\end{align}	
	\begin{align}
		L_{1n}(k) = & \frac{1}{(n{-}2nb)^{2}}\sum_{t=\underline b_i}^{k^\ast_{p+i-1}}\Big\{\big\li \bar Y_1'{-}\bar Y_3'{-}b\sqrt{n}\cP_n,S^{Y'}_{nb+1,t} {-}\frac{t{-}nb}{k{-}nb}S^{Y'}_{nb+1,k}    \big\ri_\cH  \nonumber \\
		&\qquad\qquad{-}\frac{(t{-}nb)(k{-}k^\ast_{p+i-1})}{k{-}nb}    \big\li \bar Y'_1{-}\bar Y'_3{-}b\sqrt{n}\cP_n, \Delta^\ast_{p+i-1} \big\ri_\cH   {-}\frac{n(k{-}t)}{k{-}nb}    \big\li \bar Y'_1{-}\bar Y'_3{-}b\sqrt{n}\cP_n, \tilde \Delta_{p+i-1} \big\ri_\cH  \Big\}^2 \nonumber \\
		=& \frac{1}{(n{-}2nb)^{2}}\sum_{t=\underline b_i}^{k^\ast_{p+i-1}}\Big\{L_{11}(k,t){+}L_{12}(k,t){+}L_{13}(k,t) \Big\}^2,
	\end{align}\normalsize	
	$R_n(k) \geq R_{1n}(k)$, $L_n(k) \geq L_{1n}(k)$ and $V_n(k) =R_n(k)+L_n(k)$. If $\tilde \Delta_{p+i} = \tilde \Delta_{p+i-1}=0$, then according to the proof in Section \ref{sec_2cp_2}, we can show $\max_{k\in\A_{in}}\frac{T_n(k)}{\sqrt{V_n(k)}} =  o_p(n^{1/2-\min\{\delta_{p+i-1},\delta_{p+i}\}})$.  We consider the following different scenarios.			
	
	\noindent
	\textbf{(1) $\tilde \Delta_{p+i} \neq \tilde \Delta_{p+i-1}=0$ or $\tilde \Delta_{p+i-1} \neq \tilde \Delta_{p+i}=0$.}			
	
	We only give the proof for	$\tilde \Delta_{p+i} \neq \tilde \Delta_{p+i-1}=0$ since the other case is similar. Assume $\tilde \Delta_{p+i} = \xi_{p+i} g'_{p+i}{+}g^n_{p+i}$ for some $g'_{p+i},g^n_{p+i}\in \cH$ such that $\xi_{p+i} \sim n^{-\delta_{p+i}'}$, $\delta_{p+i}'\in \{\delta_{p+i+1},\dots,\delta_{p+s}\}$ and $\|g^n_{p+i}\|_\cH =o(\xi_{p+i})$. 
	
	First, if $\cP_n=0$ and $\delta_{p+i}'<\min\{\delta_{p+i},\delta_{p+i-1}\}$, it is easy to see $\max_{k\in\A_{in}}\frac{T_n(k)}{\sqrt{V_n(k)}} =  O_p(1)$. If $\delta_{p+i}'\geq \min\{\delta_{p+i},\delta_{p+i-1}\}$, it suffice to show $\max_{k\in\A_{1n}}V_n(k)^{-1} = o_p(1)$, which can be done by noting that $\min_{k\in\A_{in}}\frac{1}{(n{-}2nb)^{2}}\sum_{t=\underline b_i}^{k^\ast_{p+i-1}}L_{12}(k,t)\stackrel{p}{\to}\infty$.
	
	Second, if $\cP_n\neq 0$, according to whether $\li\cP_n,\Delta\ri_\cH=0$ or not for $\Delta\in\{\Delta^\ast_{p+i-1},\Delta^\ast_{p+i},\tilde \Delta_{p+i}\}$, we can divide the proof into eight different scenarios and show $\max_{k\in\A_{in}}\frac{T_n(k)}{\sqrt{V_n(k)}} = o_p(n^{1/2-\underline {\delta}_{s} })$ in each scenario using similar methods as in Sections \ref{sec_1cp}, \ref{sec_2cp} and \ref{sec_3cp_1}. The details are omitted.
	
	\noindent
	\textbf{(2) $\tilde \Delta_{p+i} \neq 0$ and $ \tilde \Delta_{p+i-1}\neq 0$.}	
	
	Assume $\tilde \Delta_{p+i-1} = \xi_{p+i-1} g'_{p+i-1}{+}g^n_{p+i-1}$ for some $g'_{p+i-1},g^n_{p+i-1}\in \cH$ such that $\xi_{p+i-1} \sim n^{-\delta_{p+i-1}'}$, $\delta_{p+i-1}'\in \{\delta_{p+1},\dots,\delta_{p+i-2}\}$ and $\|g^n_{p+i-1}\|_\cH =o(\xi_{p+i-1})$. 	
	
	First, if $\cP_n=0$ and $\min\{\delta_{p+i-1}',\delta_{p+i}'\}<\min\{\delta_{p+i},\delta_{p+i-1}\}$, it is easy to see $\max_{k\in\A_{in}}\frac{T_n(k)}{\sqrt{V_n(k)}} =  O_p(1)$. If $\min\{\delta_{p+i-1}',\delta_{p+i}'\}\geq \min\{\delta_{p+i},\delta_{p+i-1}\}$, it suffice to show $\max_{k\in\A_{1n}}V_n(k)^{-1} = o_p(1)$. Assume $\delta_{p+i}\leq \delta_{p+i-1}$, then we have $\delta_{p+i}'\geq \delta_{p+i}$ and following similar method as in part (1) of Section \ref{sec_2cp_1} we can show $\max_{k\in\A_{1n}}V_n(k)^{-1} = o_p(1)$.
	
	If $\cP_n\neq 0$, the proof is divided into sixteen scenarios according to whether $\li\cP_n,\Delta\ri_\cH=0$ or not for $\Delta\in\{\Delta^\ast_{p+i-1},\Delta^\ast_{p+i},\tilde \Delta_{p+i-1},\tilde \Delta_{p+i}\}$. Since the method used in showing $\max_{k\in\A_{in}}\frac{T_n(k)}{\sqrt{V_n(k)}} = o_p(n^{1/2-\underline {\delta}_{s} })$ in each scenario is similar to those used in part (1) above, we omit the details. 
%\end{comment}
\end{document}